\def\textbf#1{{\bf #1}}
\def\be{\begin{equation}}
\def\ee{\end{equation}}
\def\ben{\begin{eqnarray}}
\def\een{\end{eqnarray}}
\def\eea{\end{array}}
\def\bea{\begin{array}}
\newcommand{\Tr}[0]{\mathrm{Tr}}
\newcommand{\ot}[0]{\otimes}
\newcommand{\bei}{\begin{itemize}}
\newcommand{\eei}{\end{itemize}}
\newcommand{\ket}[1]{|#1\rangle}
\newcommand{\bra}[1]{\langle#1|}
\newcommand{\proj}[1]{\ket{#1}\!\bra{#1}}
\newcommand{\nmss}{\negmedspace\negmedspace}
\newcommand{\nmsss}{\negmedspace\negmedspace\negmedspace}
\definecolor{myurlcolor}{rgb}{0,0,0.7}
\definecolor{myrefcolor}{rgb}{0.8,0,0}
\newtheorem{thm}{Theorem}
\theoremstyle{definition}
\begin{document}

\title{Nonlocality in many-body quantum systems detected with two-body correlators}

\author{J. Tura}
\affiliation{ICFO -- Institut de Ciencies Fotoniques, Mediterranean Technology Park,
08860 Castelldefels (Barcelona), Spain}
\author{R. Augusiak}
\affiliation{ICFO -- Institut de Ciencies Fotoniques, Mediterranean Technology Park,
08860 Castelldefels (Barcelona), Spain}
\author{A. B. Sainz}
\affiliation{ICFO -- Institut de Ciencies Fotoniques, Mediterranean Technology Park,
08860 Castelldefels (Barcelona), Spain}
\author{B. L\"ucke}
\affiliation{Institut f\"ur Quantenoptik, Leibniz Universit\"at Hannover, Welfengarten 1, D-30167 Hannover, Germany}
\author{C. Klempt}
\affiliation{Institut f\"ur Quantenoptik, Leibniz Universit\"at Hannover, Welfengarten 1, D-30167 Hannover, Germany}
\author{M. Lewenstein}
\affiliation{ICFO -- Institut de Ciencies Fotoniques, Mediterranean Technology Park,
08860 Castelldefels (Barcelona), Spain}
\affiliation{ICREA -- Instituci\'o Catalana de Recerca i Estudis Avan\c cats, Lluis Campanys 3, 08010 Barcelona, Spain}
\author{A. Ac\'in}
\affiliation{ICFO -- Institut de Ciencies Fotoniques, Mediterranean Technology Park,
08860 Castelldefels (Barcelona), Spain}
\affiliation{ICREA -- Instituci\'o Catalana de Recerca i Estudis Avan\c cats, Lluis Campanys 3, 08010 Barcelona, Spain}

\begin{abstract}
Contemporary  understanding of correlations in quantum many-body
systems and in  quantum phase transitions is based to a large extent
on the recent intensive  studies of entanglement in many-body
systems. In contrast, much less is known about the role of quantum
nonlocality in these systems, mostly because the available
multipartite Bell inequalities involve high-order correlations
among many particles, which are hard to access theoretically, and
even harder experimentally. Standard, "theorist- and
experimentalist-friendly"  many-body observables involve
correlations among only few (one, two, rarely three...)
particles. Typically, there is no multipartite Bell inequality
for this scenario based on such low-order correlations. Recently,
however, we have succeeded  in constructing multipartite Bell
inequalities that involve two- and  one-body correlations only,  and
showed how they revealed the nonlocality in many-body systems
relevant for nuclear and atomic physics [Science {\bf 344}, 1256
(2014)]. With the present contribution we continue our work on this problem.
On the one hand, we present a detailed derivation of the above Bell inequalities, pertaining
to permutation symmetry among the involved parties. On the other hand, we present a couple of
new results concerning such Bell inequalities. First, we characterize their tightness. We then
discuss maximal quantum violations of these inequalities in the general case, and their scaling
with the number of parties. Moreover, we provide new classes of two-body Bell inequalities which
reveal nonlocality of the Dicke states---ground states of physically relevant and experimentally realizable
Hamiltonians. Finally, we shortly discuss various scenarios for nonlocality detection
in mesoscopic systems of trapped ions or atoms, and by atoms  trapped in the vicinity of designed nanostructures.
\end{abstract}

\keywords{quantum nonlocality, Bell nonlocality, Bell inequalities, nonlocality in many body systems}
\pacs{}

\maketitle

\section{Introduction}

Nonlocality is a property of correlations that go beyond the
paradigm of local
realism~\cite{EPR,Bell64,Bellbook,NLreview,Bellissue}. According to
the celebrated theorem of J.~S.~Bell,  correlations between the results
of measurements of local observables on some quantum states might be
nonlocal, i.e. they do not admit a {\it local hidden variables}
(LHV) model \`a la Einstein-Podolski-Rosen (EPR)~\cite{EPR, Bohmbook}.
In such instances we talk about Bell nonlocality.
Operationally, such correlations
cannot be simulated by correlated classical random variables.
This can be detected by means
of the so-called Bell inequalities \cite{Bell64} - the celebrated example of such is
the famous Clauser-Horne-Shimony-Holt inequality (CHSH)~\cite{CHSH}.
Bell nonlocality is interesting for at least  three reasons:
\begin{itemize}

\item It is a resource for certain information--processing tasks, such as secure key
distribution~\cite{key1,key2,key3} or certified quantum randomness
generation~\cite{rand1,rand2,rand3}. Hence, revealing the
nonlocality of a given composite quantum state is one of the
central problems of quantum information theory,
of relevance for future quantum technologies.

\item It lies at the heart of fundamental aspects of quantum
physics~\cite{ZurekWheeler,Bellbook,Bohmbook}, e.g.,
playing a fundamental role in our understanding of
intrinsic quantum randomness, and sometimes even in philosophical contexts
such as the {\it free will problem} \cite{Heisenberg-free,FreeWillbook}.

\item It is extremely challenging mathematically: characterization
of non-locality is a complex and  difficult problem.
In fact it was proven that determining whether a given set of
correlations belongs or not to the set of local
EPR correlations, is NP-complete, whereas specifying all Bell
inequalities for growing number of involved parties, and/or
measured observables, and/or outcomes of measurements is NP-hard
(\cite{Babai,Avis}; see also \cite{NLreview,Bellissue} and
references therein).
\end{itemize}

Quantum states that violate a Bell inequality are
necessarily entangled, or in other words, they are not separable
and cannot be represented as mixtures of projections onto
product states \cite{Werner1989} (for a review on entanglement see
\cite{Horodeccy}); the opposite  is not necessarily true. Already in 1991 Gisin
proved \cite{Gisin1991} that any entangled pure state of
two parties violates a Bell inequality (in fact, they violate the CHSH inequality).
This result was extended to
pure entangled states of an arbitrary number of parties by Popescu and Rohrlich
\cite{PR}. However, the situation changes completely when moving to mixed states:
In his seminal paper from 1989 \cite{Werner1989} Werner
constructed examples of mixed bipartite states that admit a LHV
model for all local projective measurements, and nevertheless are
entangled. This result was then generalized by Barrett to
arbitrary measurements \cite{Barrett}. Very recently,
some of us have shown that entanglement and nonlocality are inequivalent
for any number of parties, i.e. that there exist genuinely
multipartite entangled states that do not display genuinely
multipartite nonlocality \cite{Demian2014}.
It is then clear that entanglement is a weaker property
of quantum states than nonlocality.

Over the last decade, entanglement has proven
to be very useful to characterize properties of
many-body systems and the nature of quantum phase transitions
(QPT) \cite{Sachdev}.  Let us illustrate this statement for
lattice spins models described by local Hamiltonians, that is,
those involving finite range interactions, or interactions
rapidly decaying with distance. Most of the results listed below
also hold for lattice Bose and Fermi models, and even for
quantum field theories. In particular, in the ground states (GS)
(low energy states) of such models the following properties are
true (for general reviews see Refs. \cite{Augusiak2010,Lewenstein2012}):
\begin{itemize}

\item If one calculates the reduced density matrix for two spins, this density matrix
typically exhibits  entanglement for short separations of the spins only.
This occurs even at criticality; still entanglement measures show signatures of QPTs
\cite{Amico,Osborne2};

\item If, instead of calculating a reduced density matrix, one tries to perform optimized measurements on
the rest of the system, one concentrates then entanglement in the
chosen two spins. One obtains in this way {\it localizable
entanglement} \cite{Frank1}; the corresponding entanglement length
diverges when the standard correlation length diverges, i.e., at
standard QPTs. Interestingly, there exist critical systems for
which the correlation length remains finite, while the entanglement
localization length goes to infinity \cite{Frank2};

\item For non-critical systems, GSs and low energy states exhibit the, so-called, area laws: the
von Neuman (or R\'enyi) entropy of the reduced density matrix of a
block of size $I$ scales as the size of the boundary of the block,
$\partial I$; at criticality a logarithmic divergence occurs
frequently \cite{VidalLatorre} (for a review see \cite{CardyJPA,EisertRMP}).
These results are very well established in 1D, while there are plenty of open questions in 2D
and higher dimensions;

\item GSs and low energy states away from criticality can be efficiently described by the, so-called, matrix product states, or more generally tensor network states (cf. \cite{Frank/Cirac2}); even at criticality MPS are efficient in 1D.

\item Topological order (at least for gapped systems in 1D and 2D) exhibits itself
frequently in the properties of the, so-called, {\it entanglement spectrum}, i.e.,
the spectrum of the logarithm of the reduced density matrix of a block $I$
\cite{Haldane}, and in 2D in the appearance of the, so-called,
{\it topological entropy}, i.e., a negative constant correction to
the area laws \cite{Preskill,Wen}.
\end{itemize}

A natural question thus arises: \textit{Does nonlocality also play an important role in characterizing
correlations in many-body systems?}
Apart from its fundamental interest, so far the role of
nonlocality in such systems has hardly been explored (see \textit{e.g.} \cite{Oliveira}). Entanglement
and nonlocality are known to be inequivalent quantum resources. In
principle, a generic many-body state, say a ground state of a
local Hamiltonian, is pure, entangled and, because all pure
entangled states violate a Bell inequality~\cite{PR}, it is also
nonlocal. However, this result cannot be easily verified in
experiments, because the known Bell inequalities (see, e.g.,
\cite{nier1,nier2,Sliwa,JD1,nier4}) usually involve products of
observables of all observers (also referred to as parties).
Unfortunately, measurements of such observables, although in
principle possible \cite{Greiner-single,Bloch-single}, are
technically extremely difficult; instead  one has typically "easy"
access to few-body correlations, say one- and  two-body, in
generic  many-body systems. Thus, the physically relevant question
concerning the nonlocality of many-body quantum states is whether
its detection is possible using only two-body correlations.

When tackling this problem one has to face the following technical
challenges. First of all, finding Bell inequalities that are valid
for an arbitrary number of parties is usually a difficult task, as
the complexity of characterizing the set of classical correlations
scales doubly exponentially with the number of parties. Then, it is natural to expect
that Bell inequalities based on only few-body correlations are, in general, weaker in revealing
nonlocality than those constructed from high-order correlations (that is, correlations among
many of the observers). Whereas Bell inequalities
involving correlations among all but one parties have been
constructed in \cite{BSV,Wiesniak,Lars}, thus proving that
all--partite correlations are not necessary to reveal nonlocality,
here we have to consider the more demanding question of whether
nonlocality detection is possible for systems of an arbitrary
number of parties from the minimal information achievable in a
Bell test, i.e., two--body correlations. We  answered  this
question positively in Ref. \cite{Science} by proposing classes of
Bell inequalities constructed from one- and two-body expectation
values, and, more importantly, providing examples of physically
relevant  many-body quantum states (i.e., GSs or low energy states
of physically relevant Hamiltonians) that violate these
inequalities. Note that finding and classifying such states is an
interesting task in itself, especially in view of the fact that
many genuinely entangled quantum many-body states have two-body
reduced density matrices (or, in other words, covariance matrices)
that are compatible with two-body reduced density matrices of some
separable state. This is the case of the so-called {\it graph
states}, as demonstrated in Ref. \cite{Gittsovich}; obviously one
cannot detect entanglement of such states with two-body
correlators, not even mentioning nonlocality.

It should also be stressed that an analogous question was already explored in the case of
entanglement, which next to nonlocality is a key resource quantum information theory.
Several entanglement criteria relying solely on two-body expectation values have been proposed
\cite{EntTwoBody}. In particular, in \cite{coll} the possibility of addressing two-body statistics
(although not individually) \textit{via} collective observables was exploited.

The main aims of this work are two-fold. First, we present a detailed derivation of
multipartite Bell inequalities involving only one- and two-body correlators, and pertaining to
permutation symmetry between the involved parties presented in \cite{Science}. To this end we
characterize the polytope of LHV correlations in such case. We then discuss
the maximal quantum violations of these inequalities in the general case, including its
scaling with the number of parties. We pay particular attention to
Bell inequalities violated by the family of Dicke states that are
the ground states of physically relevant and experimentally
realizable Hamiltonians. We discuss shortly various scenarios for
detection of non-locality in mesoscopic systems of trapped ions or
atoms, and by atoms trapped in the vicinity of designed
nanostructures. The second aim is to strongly develop and generalize the results of Ref. \cite{Science}.
First, we impose conditions under which our Bell inequalities are tight. Then, we provide a new class of
two-body Bell inequalities which reveal nonlocality in all the entangled Dicke states, not only those studied in
\cite{Science}. Finally, we construct and characterize an analytical class of pure states
that well approximate states maximally violating some of our Bell inequalities. With their aid
we can analytically compute the maximal violation of these Bell inequalities in the thermodynamic limit.

The manuscript is organized in sections, in which we define the problem
and present the results. All the technical proofs and
demonstrations, as well as a discussion of concrete experiments
aimed at generation of the Dicke states, are shifted to the three
Appendices. In Section II we describe the generic Bell experiment,
using the contemporary language of device-independent quantum
information theory in terms of conditional probabilities for $n$ parties
compatible with the no-signalling principle. The structure of the
polytope defined by the local correlations (local Bell
polytope) is discussed in Section \ref{sec:projecting}. Here we first describe the
general case, then how we can restrict the analysis to a smaller polytope using a
general symmetry group, and finally we focus on  the permutationally
invariant Bell polytope. In Section \ref{sec:classes} various classes of Bell
inequalities are derived and discussed. Section \ref{sec:quantum} is devoted to results concerning maximal quantum
violation of the derived inequalities. We present here also analytical expressions for a class of states that violate
the derived Bell inequalities, and such that the analytical
violation converges to the maximal numerical one as $n$ becomes large.
Various aspects of the robustness of these inequalities and their
violations are analyzed here, with an emphasis on the limit of a large number of parties. In
Section \ref{sec:Dicke} we focus on inequalities detecting nonlocality of the entangled Dicke states, and discuss also their parameter dependence. Section
\ref{sec:conclusions} contains a short conclusion, and a discussion of several
promising experimental systems in which the proposed Bell test
could be successfully implemented: mesoscopic
systems of ultracold trapped ions, neutral atoms in a single, or
double well trap, or atoms near  tapered optical fibers and
nanostructured optical crystals.

Appendix \ref{AppA} contains the proofs of all the theorems from
Sections \ref{sec:projecting}-\ref{sec:Dicke}, while Appendix \ref{AppB} presents some details
concerning the block-diagonalization of the Bell operator, necessary to
determine the maximal quantum violation. As mentioned above,
Appendix \ref{AppC} contains detailed calculations concerning the
robustness to experimental errors of the 2-body Bell inequalities.
These calculations are relevant for the potential implementation of the derived Bell inequalities,
for instance in the recent experiment of Ref. \cite{Klempt}.

\section{The Bell Experiment}
\label{sec:Bellexp}

Let us consider the standard multipartite Bell-type scenario in which $n$ spatially separated observers, denoted $A_0,\ldots,A_{n-1}$ share some $n$-partite resource. On their share of this resource every observer $A_i$ can perform one of $m$ measurements, each having $d$ outcomes. In what follows these measurements are denoted by $\mathcal{M}_{x_i}^{(i)}$ with $x_i\in\{0,\ldots,m-1\}$, while their outcomes by $a_i\in\{0,\ldots,d-1\}$.
This scenario is usually referred to as $(n,m,d)$.

The correlations between results obtained in the above experiment are described in terms of the conditional probabilities
\begin{equation}
 \label{eq:ProbabilitatCondicionada}
 P(a_0,\ldots,a_{n-1}|x_0,\ldots,x_{n-1}), \qquad 0 \leq a_i < d, \qquad 0 \leq x_i < m,
\end{equation}
that the $i$th party obtained $a_i$ upon performing the measurement $\mathcal{M}_{x_i}^{(i)}$.

It is convenient to collect all the probabilities in (\ref{eq:ProbabilitatCondicionada}) into a vector with $(md)^n$ components, each corresponding to a different combination of outcomes $a_0,\ldots,a_{n-1}$ and measurements choices $x_0,\ldots,x_{n-1}$. Let us call such vector $\vec{p}$.
Clearly, one must ask for probabilities to be non-negative, that is, $P(a_0,\ldots,a_{n-1}|x_0,\ldots,x_{n-1}) \geq 0$ must hold for all choices of $a_0,\ldots,a_{n-1}$ and $x_0,\ldots,x_{n-1}$, and normalized, meaning that
\begin{equation}
\sum_{a_0,\ldots,a_{n-1}}P(a_0,\ldots,a_{n-1}|x_0,\ldots,x_{n-1})=1
\end{equation}
is satisfied for any sequence $x_0,\ldots,x_{n-1}$. The latter condition implies that only $m^n(d^n-1)$ of the components of $\vec{p}$ are independent. Let us also notice that all the above constraints define a region in space, which is a polytope, denoted $\mathbbm{P}_{S}$,
to which any mathematically consistent $\vec{p}$ belongs. Recall that a polytope is a compact convex set with a finite number of extreme points.

Now, depending on the nature of the $n$-partite resource that the parties share, different types of correlations can be obtained. We are mainly interested in three scenarios: the case in which $\vec{p}$ admits a local hidden variable (LHV) model, the case in which $\vec{p}$ can be obtained through performing local measurements on quantum states (Q) and the case in which $\vec{p}$ fulfils the no-signalling principle (NS).

\textit{LHV correlations.} Those conditional probability distributions that $\vec{p}$ have a local hidden variable model have the following form:
\begin{equation}
 \label{eq:LHVmodel}
 P(a_0,\ldots,a_{n-1}|x_0,\ldots,x_{n-1})=\int_{\Lambda}p(\lambda)\prod_{i=0}^{n-1}P(a_i|x_i,\lambda)\mathrm{d}\lambda,
\end{equation}
where $\lambda\in \Lambda$ is some hidden variable, distributed according to a probability density $p(\lambda)$, and $\Lambda$ is the space of hidden variables. Operationally, the interpretation of (\ref{eq:LHVmodel}) is that the outcome $a_i$ of the $x_i$-th measurement is produced locally at site $i$ as a function of the input $x_i$ and the value of this hidden variable $\lambda$. The resulting  probability distribution is the one corresponding to our  understanding of the \textit{classical correlations between local independent measurements}: In order to simulate correlations $\vec{p}$ of this form, parties could gather beforehand and agree on which $\lambda$ to pick at each run of the experiment,  distributed according to $p(\lambda)$. This information, called also shared randomness is precisely the resource needed---given $\lambda$ and $x_i$, each party would just simulate $P(a_i|x_i,\lambda)$ on their own. The feasibility region for LHV model probability distributions is a polytope and its vertices are given by those $\vec{p}$ in which all parties follow a local deterministic strategy. We shall denote this polytope $\mathbbm{P}$.

The vertices of $\mathbbm{P}$ are straightforward to construct as these are those
$\vec{p}$'s for which all probabilities (\ref{eq:ProbabilitatCondicionada}) factorize:
\begin{equation}
 \label{eq:LHVVertex}
 P(a_0,\ldots,a_{n-1}|x_0,\ldots,x_{n-1})=\prod_{i=0}^{n-1}P(a_i|x_i),
\end{equation}
and each factor is deterministic, i.e., $P(a_i|x_i)=\delta(a_i=\alpha_i^{x_i})$, where $\delta$ is the Kronecker delta function and $0\leq \alpha_i^{x_i} < d$ is a different integer for every vertex. One immediately finds that in the $(n,m,d)$ scenario the number of vertices is $d^{mn}$.

\textit{Quantum correlations.} Imagine now that the resource the parties share is some $n$-partite quantum state
$\rho$ acting on a product Hilbert space $\mathcal{H}_n=(\mathbbm{C}^{D})^{\ot n}$ of local dimension $D$.
In such case, each local measurement $\mathcal{M}_{x_i}^{(i)}$ can be thought of as a collection of positive semi-definite operators $\Pi_{a_i}^{(x_i)}$ corresponding to outcomes $a_i=0,\ldots,d-1$ and obeying
\begin{equation}
\sum_{a_i=0}^{d-1}\Pi_{a_i}^{(x_i)}=\mathbbm{1}_D
\end{equation}
for every $x_i$ and $i$. Then, the probabilities (\ref{eq:ProbabilitatCondicionada}) can be represented \textit{via} Born's rule:
\begin{equation}
 \label{eq:BornsRule}
 P(a_0,\ldots,a_{n-1}|x_0,\ldots,x_{n-1})=\Tr\left(\rho \otimes_{i=0}^{n-1}\Pi_{a_i}^{(x_i)}\right).
\end{equation}
The feasibility region for probability distributions of the form (\ref{eq:BornsRule}) is a convex set, denoted $\mathbf{Q}$. Unlike the local set, $\mathbf{Q}$ is not a polytope and hence its boundary is more difficult to characterize. Still, it can be well approximated \cite{NPA}.

\textit{Non-signalling correlations.} For completeness of this work it is worth recalling the notion of non-signalling correlations. These are those $\vec{p}$'s that fulfil the no-signalling principle which says that information cannot be transmitted instantaneously. This means that in the above Bell scenario the choice of measurement made by one party cannot influence the statistics seen by the remaining parties. In terms of the conditional probabilities (\ref{eq:ProbabilitatCondicionada}) this is equivalent to a set of linear constraints of the form
\begin{equation}
 \label{eq:NSPrinciple}
 \sum_{a_i} P(a_0,\ldots,a_i,\ldots a_{n-1}|x_0,\ldots,x_i,\ldots,x_{n-1})=\sum_{a_i} P(a_0,\ldots,a_i,\ldots a_{n-1}|x_0,\ldots,x_i',\ldots,x_{n-1}).
\end{equation}
for all $x_i\neq x_i'$ and $a_0,\ldots,a_{i-1},a_{i+1},\ldots, a_{d-1}$ and $x_0,\ldots,x_{i-1},x_{i+1},\ldots,x_{d-1}$ and all $i$.
%
%
The feasibility region corresponding to distributions $\vec{p}$ satisfying the no-signalling principle, which we denote $\mathbbm{P}_{NS}$, is again a polytope, as it is an intersection of the polytope $\mathbbm{P}_S$ with the linear subspace defined by (\ref{eq:NSPrinciple}). However, unlike in the case of the local polytope $\mathbbm{P}$, the vertices of $\mathbbm{P}_{NS}$ are in general difficult to characterize, and except for the simplest scenarios
$(2,2,d)$ \cite{Barrett2005} and $(2,m,2)$ \cite{MasanesJones} and they are unknown (see also Ref. \cite{Tobias}).

When the no-signalling constraints and the normalization conditions are taken into consideration, the number of independent variables of any $\vec{p} \in \mathbbm{P}_{NS}$ is reduced to $[m(d-1)+1]^n-1$ which is the dimension of the polytope $\mathbbm{P}_{NS}$ and so is the dimension of the remaining two sets $\mathbbm{P}$ and $\mathbf{Q}$. Importantly, these three sets are different. In fact, one has the following chain of inclusions:
\begin{equation}
 \label{eq:inclusions}
 \mathbbm{P} \subsetneq \mathbf{Q} \subsetneq \mathbbm{P}_{NS} \subsetneq \mathbbm{P}_S.
\end{equation}
First, $\mathbbm{P}\subset\mathbf{Q}$ because all correlations belonging to $\mathbbm{P}$ can be realized with
separable states. Yet, due to the celebrated Bell's theorem $\mathbbm{P}\neq \mathbf{Q}$ \cite{Bell64,Bellissue}. Second, $\mathbf{Q}\subset\mathbbm{P}_{NS}$ because all correlations given by Eq. (\ref{eq:BornsRule})
trivially satisfy the no-signalling principle. On the other hand, there exist correlations obeying
the no-signalling principle which are not quantum, with the most paradigmatic example being the so-called
Popescu-Rohrlich box \cite{PR}. Hence, $\mathbf{Q}\neq \mathbbm{P}_{NS}$. Finally, it is easy to verify that
not all elements of $\mathbbm{P}_S$ satisfy the no-singalling principle, and therefore $\mathbbm{P}_{NS}\subsetneq\mathbbm{P}_S$.

Due to the structure of the local polytope $\mathbbm{P}$, the natural way of
checking whether $\vec{p}\in \mathbf{Q}\setminus\mathbbm{P}$ is to use Bell inequalities.
These are linear inequalities formulated in terms of the probabilities resulting
from the local measurements performed by the observers, that is,
\begin{equation}
I:=\sum_{\boldsymbol{a},\boldsymbol{x}}T_{\boldsymbol{a},\boldsymbol{x}}P(\boldsymbol{a}|\boldsymbol{x})\geq -\beta_C
\end{equation}
where $T_{\boldsymbol{a},\boldsymbol{x}}\in {\mathbbm{R}}$ are some coefficients and $\beta_C=-\min_{\vec{p}\in\mathbbm{P}}I$ is the so-called classical bound of the Bell inequality. Violation of such inequalities signals nonlocality.

All polytopes admit a dual description: They can be completely characterized either by listing all their vertices (\textit{i.e.}, a point $\vec{p}$ belongs to a polytope $\mathbbm{P}_S$ if and only if it is a convex combination of its vertices) or as the intersection of a number of half-spaces. Such intersection can be taken to be minimal and in this case every of these half-spaces intersected with $\mathbbm{P}_S$ defines a facet of $\mathbbm{P}_S$. Nevertheless, going from one description to the other is not a simple task in general. The best known algorithm \cite{Chazelle} has complexity $O(v^{\lfloor D/2 \rfloor})$, where $v$ is the number of vertices (facets) of the polytope and $D$ is the dimension of the space in which the polytope lives. In the case of $\mathbbm{P}$, its vertices, although there are many, are straightforward to find, but not its facets, which give rise to the so-called \textit{tight Bell inequalities}. Only few scenarios have been completely characterized, none of them with more than $3$ parties \cite{Sliwa,CollinsGisin}. In the case of $\mathbbm{P}_{NS}$ the situation is the opposite: its facets are easy to construct, since they are the $(md)^n$ positivity constraints on every probability (\ref{eq:ProbabilitatCondicionada}), but its vertices, known as extreme non-signalling boxes, are hard to find. In this case only few scenarios have been solved \cite{Barrett,MasanesJones,Tobias}.

For further benefits let us eventually mention that in the case $d=2$, which is the case we mostly consider in this work, it is convenient
to re-label outcomes from $0,1$ to $\pm 1$ (\textit{i.e.,} $a \mapsto (-1)^a$), and re-express all probabilities (\ref{eq:ProbabilitatCondicionada}) in terms of expectation values
\begin{equation}
 \label{eq:probs2expectations}
 \langle {\cal M}_{j_1}^{(i_1)}\cdots {\cal M}_{j_k}^{(i_k)}\rangle = \sum_{a_{i_1}, \ldots, a_{i_k}} (-1)^{\sum_{l=1}^{k}a_{i_l}}P(a_{i_1}\ldots a_{i_k}|x_{i_1}\ldots x_{i_k}),
\end{equation}
for $0\leq i_1 < \ldots < i_k < n$, $j_l\in \{0,1\}$ and $1 \leq k \leq n$. Below, the number of parties
appearing in such a correlators will also be referred to as order of the correlator; for instance,
correlators $ \langle {\cal M}_{j_1}^{(i_1)}{\cal M}_{j_2}^{(i_2)}\rangle$ are of order two.
As a direct consequence of (\ref{eq:LHVVertex}), the vertices of $\mathbbm{P}$ are in this case given by
\begin{equation}
\label{eq:correlationVertices}
\langle {\cal M}_{j_1}^{(i_1)}\cdots {\cal M}_{j_k}^{(i_k)}\rangle=\langle {\cal M}_{j_1}^{(i_1)}\rangle \cdots \langle{\cal M}_{j_k}^{(i_k)}\rangle,
\end{equation}
where each local expectation value $\langle \mathcal{M}_{x_i}^{(i)}\rangle$ equals either $-1$ or $1$.

\section{Projecting the local polytope}
\label{sec:projecting}
In the present work we are interested in the study of non-locality in many-body quantum systems consisting of
a large number of parties $n$. Finding all the facets of the corresponding local polytope $\mathbbm{P}$ (that give rise to the optimal Bell inequalities) would completely characterize it and would be the first step to answer the above question. Unfortunately, even in the simplest case of $m=d=2$ but for large $n$, $\mathbbm{P}$ is an object of tremendous complexity, and attempting to characterize it following this approach becomes an intractable task. Since both its dimension and its number of vertices are exponential in $n$, we aim at projecting $\mathbbm{P}$ onto a simpler object, which has less vertices and which is embedded in a lower-dimensional space.

This simplification does not come for free and the obtained Bell inequalities would, in general, be weaker. If one applies a projection $\pi$ to $\mathbbm{P}$, then $\pi(\mathbbm{P})$ can only give sufficient conditions for nonlocality \cite{JD1}. If some correlations $\vec{p}$ are nonlocal, \textit{i.e.}, $\vec{p}\notin \mathbbm{P}$, it may happen that $\pi(\vec{p}) \notin \pi(\mathbbm{P})$ or $\pi(\vec{p}) \in \pi(\mathbbm{P})$, depending on the projection $\pi$. However, if $\pi(\vec{p}) \notin \pi(\mathbbm{P})$, then $\vec{p} \notin \mathbbm{P}$. Hence, violating a Bell inequality corresponding to $\pi(\mathbbm{P})$ certifies nonlocality, while proving that correlations that lie inside $\pi(\mathbbm{P})$ is inconclusive. In this section we then discuss how to choose $\pi$ in such a way that, on one hand, the Bell inequalities constraining $\pi(\mathbbm{P})$ are powerful enough to reveal nonlocality for arbitrary $n$, and, on the other hand, they are accessible with current experimental technology. More precisely, we want to find Bell inequalities containing only one- and two-body correlators. To further simplify the problem we also demand that additionally they obey certain symmetries. From a more fundamental point of view, we also want to answer the question of what is the minimal amount of information that it is required in order to detect non-locality.

We shall simplify $\mathbbm{P}$ in two steps: one is to reduce the order of the correlators (\ref{eq:ProbabilitatCondicionada}), and the other is the application of a symmetry group $G$ such that the obtained Bell inequalities remain invariant under the permutation by any element of $G$. By reducing the amount of correlators in $(\ref{eq:ProbabilitatCondicionada})$ to those that involve at most $K$ parties, the dimension of $\mathbbm{P}$ is reduced from $(m(d-1)+1)^n-1$ to
\begin{equation}
 \label{eq:numberofKbodycorrelators}
 \sum_{k=1}^K{n \choose k}m^k(d-1)^k,
\end{equation}
and the number of vertices is kept the same. Each term in the above sum corresponds to correlators of exactly $k$ parties -- there are ${n \choose k}$ ways to choose them. When the $k$ parties are chosen, each of the members can choose one of $m$ measurements, with $d-1$ independent probabilities (the $d$-th one is determined from the fact that the sum of probabilities of all outcomes is equal to 1); this amounts to the factor $m^k(d-1)^k$.    We denote by $\mathbbm{P}_K$ the polytope obtained by \textit{not including} the correlators of order higher than $K$.

On the other hand, one can look for Bell inequalities that are invariant under the action of some symmetry group $G$ which permutes the parties. The reason to consider this approach that imposes extra constraints is twofold: firstly, even if $\mathbbm{P}_K$ is simpler than the whole polytope, its number of vertices is still exponential and its dimension goes like $O(n^K)$, which can still grow fast with $n$. Adding the permutation symmetry produces a simpler polytope. Secondly, as we will see below, imposing permutation symmetry allows for deriving Bell inequalities that can be measured with  techniques readily accessible in experiments.

\subsection{The symmetrized polytope}
Let us consider a group $G$, which is a subgroup of the group of permutations of $n$ elements, denoted ${\mathfrak S}_n$. Let us denote by ${\cal P}_K$ the set of all $K$-body conditional probabilities:
\begin{equation}
 \label{eq:setofKbodycorrelators}
 {\cal P}_K=\bigcup_{k=1}^K\left\{P_{i_1\ldots i_k}(a_{i_1}\ldots a_{i_k}|x_{i_1}\ldots x_{i_k}), \qquad 0 \leq i_1<\ldots < i_k< n, \quad 0 \leq a_j < d-1, \quad 0 \leq x_j < m \right\}.
\end{equation}
$G$ acts on ${\cal P}_K$ by permuting the indices of the parties $i_1\ldots i_k$. Formally, we have defined an action $g$ by:
\begin{equation}
 \label{eq:accio}
 \begin{array}{ccclcl}
  g:& G &\times& {\cal P}_K &\longrightarrow &{\cal P}_K\\
  &(\sigma&,& P_{i_1\ldots i_k}(a_{i_1}\ldots a_{i_k}|x_{i_1}\ldots x_{i_k}))&\mapsto& P_{\sigma(i_1)\ldots \sigma(i_k)}(a_{i_1}\ldots a_{i_k}|x_{i_1}\ldots x_{i_k}).
 \end{array}
\end{equation}
Every action of a group on a set induces a partition of the set into orbits. Given a probability $P \in {\cal P}_K$, we denote its orbit by $[P]=\{g(\sigma, P),\ \sigma \in G\}$ and the set of orbits by ${\cal P}_K/G$. Note that a  given probability $P$ can only belong to one orbit. Thus, we can express ${\cal P}_K$ as the disjoint union of all its orbits.
\begin{equation}
 \label{eq:partition}
 {\cal P}_K=\bigsqcup_{[P]\in {\cal P}_K/G}[P].
\end{equation}
The sum of all elements in $[P]$ is, by construction, invariant under the action of any element $\sigma \in G$. Let us then define \textit{$G$-invariant probabilities} as sums of probabilities that are invariant under the action of $G$, that is,
\begin{equation}
 \label{eq:G-invariant-correlator}
 S_{[P]}\propto \sum_{P\in [P]}P
\end{equation}
for all $[P]\in {\cal P}_K/G$ (there are clearly as many $G$-invariant probabilities as elements in ${\cal P}_K/G$).
Accordingly, we define the symmetric polytope of $K$-body probabilities, denoted ${\mathbbm P}_K^G$, as the image of ${\mathbbm P}$ under the projection (\ref{eq:G-invariant-correlator}). Its dimension is the number of $G$-invariant probabilities obtained from ${\cal P}_K/G$. Note that in (\ref{eq:G-invariant-correlator}) we have defined the $G$-invariant probabilities up to a proportionality constant.
This, however, does not change the shape of the polytope (\textit{i.e.} its number of vertices and/or facets and their arrangement).
Note that ${\mathbbm P}_K^G$ is indeed a polytope, because Eq. (\ref{eq:G-invariant-correlator}) defines a linear projection. So, ${\mathbbm P}_K^G$ can be completely characterized by listing its vertices. However, we are interested in finding the vertices of ${\mathbbm P}_K^G$ directly, without first constructing the vertices of ${\mathbbm P}_K$ (of which there is an exponential amount) and then projecting each of them by means of (\ref{eq:G-invariant-correlator}).

As the projection (\ref{eq:G-invariant-correlator}) from ${\mathbbm P}_K$ to ${\mathbbm P}_K^G$ is a linear map,  every vertex of ${\mathbbm P}_K^G$ has to be the image of a vertex of ${\mathbbm P}_K$ under this projection. The converse is not true in general, because a vertex of ${\mathbbm P}_K$ can be mapped by (\ref{eq:G-invariant-correlator}) to the interior of ${\mathbbm P}_K^G$. Since every set of deterministic local strategies that give the same value in each (\ref{eq:G-invariant-correlator}) is a candidate to a vertex, counting the number of them will give us an useful upper bound to the number of vertices in ${\mathbbm P}_K^G$. To this end, we will make use of P\'olya's enumeration theorem \cite{Polya}.

Every deterministic local strategy assigns to each party a list of the predetermined outcomes for all its measurements. So, it can be thought of as a function $f:X\longrightarrow Y$, where $X$ is the set $\{0,1,\ldots,n-1\}$, which indexes the parties and $Y$ is the set of $m$-tuples $(y_0,\ldots,y_{m-1})$ with $0\leq y_i < d$, in which $y_i$ is the outcome assigned to the $i$-th measurement. The set of all $f$'s is denoted $Y^X$. Permutation of the parties defines, in a similar fashion as in (\ref{eq:partition}), a partition $Y^X/G$ of the set of local deterministic strategies: two deterministic local strategies, $f_1$ and $f_2$, belong to the same class in $Y^X/G$ if and only if there exists a permutation $\sigma \in G$, such that when applied to the parties in $f_1$, we obtain $f_2$. P\'olya's enumeration theorem states that the number of such classes is given by
\begin{equation}
\label{eq:Polyatheorem}
|Y^X/G|=\frac{1}{|G|}\sum_{\sigma \in G}|Y|^{c(\sigma)},
\end{equation}
where $c(\sigma)$ counts the number of disjoint cycles of the permutation $\sigma$. Because every permutation decomposes into a product of disjoint cycles uniquely (up to permuting the cycles), $c(\sigma)$ is well defined.

Both the dimension and the number of vertices depend on the group $G$  considered,  and there is clearly a trade-off between the number of elements in $G$, the number of vertices, and the dimension of ${\mathbbm P}_K^G$: the bigger the symmetry group $G$, the smaller the dimension of $\mathbbm{P}_K^G$ and the smaller the number of vertices.\\

\noindent \textbf{Example} Let us illustrate this result with an example, in which $G$ is the group generated by the shift to the right: $0\mapsto 1 \mapsto \ldots \mapsto n-1 \mapsto 0$. In this case, one obtains Bell inequalities which are translationally invariant \cite{TIpaper}. One can show that
\begin{equation}
\label{eq:exempletransinv}
|Y^X/G| = \frac{1}{n}\sum_{\nu | n} \varphi(\nu) d^{mn/\nu},
\end{equation}
where the sum runs over all divisors $\nu$ of $n$, denoted $\nu | n$, and $\varphi(\nu)$ is Euler's totient function, which counts how many integers $l$ with $1 \leq l \leq \nu$ are coprime with $\nu$, (\textit{i.e.}, their greatest common divisor is $\gcd(\nu,l)=1$). Interestingly, when considering ${\mathbbm P}_2^G$ in some scenarios, the bound (\ref{eq:exempletransinv}) is tight [\textit{e.g.}, $(3,2,2)$ and $(5,2,2)$], whereas in some others it is not [\textit{e.g.} $(4,2,2)$].

\subsection{The permutationally invariant Bell polytope}
For the scope of this article, however, we shall restrict ourselves to the scenario where the obtained Bell
inequalities are invariant under any permutation of the parties; \textit{i.e.}, the case where the symmetry group $G$ is ${\mathfrak S}_n$. Let us first upper bound the number of vertices of ${\mathbbm P}_K^{{\mathfrak S}_n}$ by explicitly computing (\ref{eq:Polyatheorem}). In this case, recall that the number of permutations of $n$ elements with $k$ disjoint cycles is given by the unsigned Stirling number of the first kind $s(n,k)$, where $s(n,k)$ is defined through the following formula
%
%
\cite{CombinatoricsHandbook}:
\begin{equation}
 \label{eq:Stirling1stkind}
 \sum_{k=1}^ns(n,k)x^k=x(x+1)\cdots(x+n-1)=\frac{(x+n-1)!}{(x-1)!}.
\end{equation}
where $x$ is a natural number. It then follows that
\begin{equation}
 \label{eq:exemple_Sym}
 |Y^X/{{\mathfrak S}_n}|=\frac{1}{n!}\sum_{\sigma \in {\mathfrak S}_n}(d^m)^{c(\sigma)}=\frac{1}{n!}\sum_{k=1}^ns(n,k)(d^m)^{k}=\frac{1}{n!}\frac{(n+d^m-1)!}{(d^m-1)!}={n+d^m-1\choose d^m-1},
\end{equation}
where the second equality is a consequence of the definition of $|s(n,k)|$, and the third equality stems from
Eq. (\ref{eq:Stirling1stkind}). As we shall see in section \ref{sec:vertices}, the bound (\ref{eq:exemple_Sym}) is not tight in general and it can be refined.

There is a simpler argument showing that the number of vertices of $\mathbbm{P}^{{\mathfrak S}_n}_K$ is upper bounded by (\ref{eq:exemple_Sym}), which in addition allows us to parametrize the vertices of it with $d^m-1$ integer parameters. The key idea is to observe that a local deterministic strategy $f:X \longrightarrow Y$ is just a coloring of a hypergraph whose nodes are the elements of $X$ and with the colors being the elements of $Y$.
This graph has multiple hyperedges, each one corresponding to a $k$-body correlator [\textit{cf.} Fig. \ref{fig:graph}]. If its nodes are permuted according to an element of $G$, then all the correlators (\ref{eq:G-invariant-correlator}) will have the same values. Hence, they will correspond to the same vertex of $\mathbbm{P}^{G}_K$.

\begin{center}
\begin{figure}[h!]
\includegraphics[scale=1]{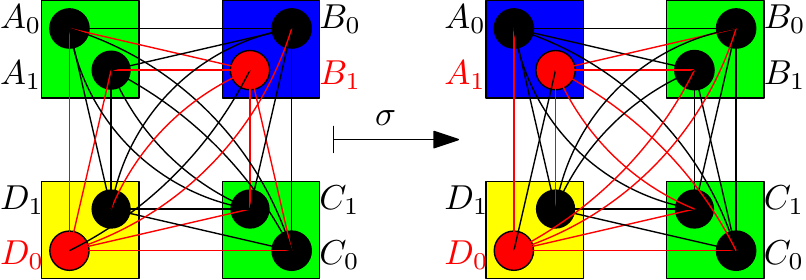}
\caption{(color online) \textit{The action of a permutation $\sigma = (A\mapsto C \mapsto B \mapsto A)(D\mapsto D)$ on a deterministic local strategy}. We pick as an example the $(4,2,2)$ scenario and consider the local deterministic strategy $f$ given by $f(A)=(+,+)$, $f(B)=(+,-)$, $f(C)=(+,+)$ and $f(D)=(-,+)$.  Each square represents a party and each circle corresponds to a measurement. We take the convention that if the circle is filled in black, its predetermined outcome is $+1$, whereas if it is filled in red, then its predetermined outcome is $-1$. All parties having the same deterministic local strategy have squares with the same color. Each line corresponds to a two-body correlator; if it is a black line, the results are correlated and if it is a red line, they are anti-correlated. Although the value of the single two-body correlators can be changed by $\sigma$ (\textit{e.g.}, $A_0B_1$), observe that the symmetrized correlators do not change their respective value. Both before and after the action
of $\sigma$, one has $S_{[A_0]}=A_0+B_0+C_0+D_0 = 2$, $S_{[A_1]}=2$, $S_{[A_0B_0]}=0$, $S_{[A_0B_1]}=4$, $S_{[A_1B_1]}=0$. In terms of $\mathbbm{P}$, the two graphs above correspond to two different vertices, but they are projected to the same point in $\mathbbm{P}^{\mathfrak{S}_n}_2$.}
\label{fig:graph}
\end{figure}
\end{center}

However, in the case that $G={\mathfrak S}_n$, the $G$-invariant probabilities (\ref{eq:G-invariant-correlator}) remain the same for any permutation of the parties. Thus, their value depends only on the amount of parties assigned to each color or, equivalently, the amount of parties with the same deterministic strategy. There will be as many equivalent strategies as partitions of $n$ in $d^m$ (some possibly empty) groups. Combinatorically, a simple way to count them is to imagine a set of $n$ elements in a line, then extend it to a set of $n+d^m-1$ elements and pick $d^m-1$ of them that act as a separator. This will produce a partition of it into $d^m$ groups, some of which may be empty. The number of ways to perform this choice is precisely (\ref{eq:exemple_Sym}).

From now on, we focus on the case when all the observables are dichotomic (\textit{i.e.}, $d=2$) and we switch for convenience to the expectation-value framework (\ref{eq:probs2expectations}), instead of working with probabilities (\ref{eq:ProbabilitatCondicionada}). Recall that for $d=2$ these two frameworks are equivalent. Let us then calculate the dimension of ${\mathbbm P}^{{\mathfrak S}_n}_K$, which is the number of elements in $\mathbbm{P}_K/{\mathfrak S}_n$. Equation (\ref{eq:G-invariant-correlator}) directly shows that $S_{[{\cal M}]}$, where ${\cal M}$ is a $k$-body expectation value, depends only on the choice of $k$ measurements out of the $m$, which are available. For instance, $S_{[A_2B_2C_0]} = S_{[B_2D_0E_2]}$. So, one can take as a canonical representative the lexicographically lowest element, which in this case would be $S_{[A_0B_2C_2]}$, and simply denote the corresponding correlator as ${\cal S}_{022}$. It is then clear that the amount of elements in $\mathbbm{P}_K/{\mathfrak S}_n$ is equal to the number of ordered sequences $0 \leq x_1 \leq x_2 \leq \cdots \leq x_k< m$ for $k=1,\ldots,K$, that is,
\begin{equation}
 \label{eq:dimension_PKSym}
|\mathbbm{P}_K/{\mathfrak S}_n|=\sum_{k=1}^{K}{k+m-1 \choose m-1}= {m+K \choose K}-1.
\end{equation}

In order to have the simplest symmetric polytope, we will further assume that $K=2$,
in which case the dimension of $\mathbbm{P}_2^{{\mathfrak S}_n}$ is just $5$.
Let us then define the one- and two-body ${\mathfrak S}_n$-invariant correlators (called simply \textit{permutationally invariant correlators}) we shall be working with:
\begin{equation}
\label{eq:def1bodysym}
{\cal S}_k := \sum_{i=0}^{n-1}\langle{\cal M}_k^{(i)}\rangle, \qquad 0\leq k \leq 1,
\end{equation}
\begin{equation}
\label{eq:def2bodysym}
{\cal S}_{kl}:=\sum_{\substack{i,j=0\\i\neq j}}^{n-1}\langle{\cal M}_k^{(i)}{\cal M}_{l}^{(j)}\rangle, \qquad 0 \leq k \leq l \leq 1,
\end{equation}
and observe that they are indeed proportional to their counterparts defined in (\ref{eq:G-invariant-correlator}).

Let us now calculate the value of (\ref{eq:def1bodysym}) and (\ref{eq:def2bodysym}) at every local deterministic strategy. In the case of single-body mean values, one can see from Fig. \ref{fig:graph} that their value corresponds to the black circles minus the red circles, whereas in the case of the two-body correlators one would take the corresponding black edges minus the red ones. It is then natural to work with the following variables:
\begin{equation}
\label{eq:defabcd}
a_f:=|f^{-1}(+,+)|, \quad b_f:=|f^{-1}(+,-)|, \quad c_f:=|f^{-1}(-,+)|, \quad d_f:=|f^{-1}(-,-)|,
\end{equation}
where $f^{-1}$ is the preimage of $f$. Notice that $a_f$ counts how many of the parties have predetermined outcomes ${\cal M}_0 = +, {\cal M}_1 = +$ for the deterministic local strategy $f$, etc. In the example of Fig. \ref{fig:graph}, one has $a_f=2$, $b_f=1$, $c_f=1$ and $d_f=0$. Observe that, for all $f$, one always has that $a_f+b_f+c_f+d_f=n$ because there are no more possibilities. Interestingly, this is all the information needed to compute the value of (\ref{eq:def1bodysym}) and (\ref{eq:def2bodysym}) at any deterministic local strategy.
Indeed, one directly sees that, for a given $f$, ${\cal S}_0 = a_f+b_f-c_f-d_f$ and ${\cal S}_1 = a_f-b_f+c_f-d_f$.
Moreover, it follows from Eq. (\ref{eq:correlationVertices}) that the value of the two-body permutationally invariant correlators (\ref{eq:def2bodysym}) for any $f$ is
\begin{equation}
\label{eq:factorization}
{\cal S}_{kl}={\cal S}_k {\cal S}_l - \sum_{i=0}^{n-1}\langle{\cal M}_k^{(i)}{\cal M}_l^{(i)}\rangle,
\end{equation}
where for $k=l$ the subtracted sum amounts to $n$, while for $k\neq l$ it is given by
\begin{equation}
\label{eq:defZ}
{\cal Z}:=\sum_{i=0}^{n-1}\langle{\cal M}_0^{(i)}{\cal M}_1^{(i)}\rangle=a_f-b_f-c_f+d_f.
\end{equation}
Finally, we have that ${\cal S}_{00}=({\cal S}_0)^2-n$, ${\cal S}_{01}={\cal S}_0{\cal S}_1-{\cal Z}$ and ${\cal S}_{11}=({\cal S}_1)^2-n$.

\subsection{Permutationally invariant Bell inequalities with two-body correlators}
The general form of Bell inequalities constraining the polytope $\mathbbm{P}_2$ in the scenario $(n,2,2)$
is
\begin{equation}
 \label{eq:2bodygeneral}
 I_2:=\sum_{i=0}^{n-1}\left(\alpha_i \langle {\cal M}_0^{(i)}\rangle+\beta_i \langle {\cal M}_1^{(i)}\rangle\right) + \sum_{0\leq i < j < n} \gamma_{ij} \langle {\cal M}_0^{(i)}{\cal M}_0^{(j)}\rangle + \sum_{0\leq i \neq j < n} \delta_{ij} \langle {\cal M}_0^{(i)}{\cal M}_1^{(j)}\rangle+\sum_{0\leq i < j < n} \varepsilon_{ij} \langle {\cal M}_1^{(i)}{\cal M}_1^{(j)}\rangle\geq-\beta_C,
\end{equation}
for some $\alpha_i, \beta_i, \gamma_{ij}, \delta_{ij}, \varepsilon_{ij} \in \mathbbm{R}$ and $\beta_c \in \mathbbm{R}$, which is the so-called classical bound defined as $\beta_C=-\min_{\vec{p}\in \mathbbm{P}_2}I_2$. There are $2n^2$ degrees of freedom, which can be read from expression (\ref{eq:numberofKbodycorrelators}) for $m=K=d=2$.

Then, the general form of Bell inequalities bounding the permutationally invariant polytope  ${\mathbbm{P}_2^{{\mathfrak S}_n}}$ simplifies to
\begin{equation}
 \label{eq:2bodySymgeneral}
 \beta_c + \alpha {\cal S}_0 + \beta {\cal S}_1 + \frac{\gamma}{2} {\cal S}_{00} + \delta {\cal S}_{01} + \frac{\varepsilon}{2} {\cal S}_{11} \geq 0,
\end{equation}
where $\alpha, \beta, \gamma, \delta, \varepsilon \in \mathbbm{R}$ and $\beta_c \in {\mathbbm R}$ is the corresponding classical bound.
In other words, by symmetrizing over ${\mathfrak S}_n$, we have found a way to obtain Bell inequalities of the form (\ref{eq:2bodygeneral}) with all the coefficients being equal: $\alpha_i = \alpha$, $\beta_i=\beta$, $\gamma_{ij}=\gamma$, $\delta_{ij}=\delta$, and $\varepsilon_{ij}=\varepsilon$.

\subsection{Characterizing the vertices of the symmetric polytope}
\label{sec:vertices}
The symmetrized $2$-body polytope $\mathbbm{P}_2^{{\mathfrak S}_n}$ in the $(n,2,2)$ scenario is embedded in a $5$-dimensional space with coordinates ${\cal S}_0, {\cal S}_1$, ${\cal S}_{00}$, ${\cal S}_{01}$ and ${\cal S}_{11}$. However, if a given set of correlations corresponds to a local deterministic strategy $f$, only $4$ parameters are needed in order to completely describe it: either $(a, b, c, d)$ or $(n, {\cal S}_1, {\cal S}_0, {\cal Z})$. To simplify the  notation, we have omitted the subindex $f$. Interestingly, these two sets of variables are related via the following (unnormalized) orthogonal transformation
\begin{equation}
 \label{eq:Hadamard}
 \left(
 \begin{array}{c}
  n\\{\cal S}_1\\{\cal S}_0\\{\cal Z}
 \end{array}
\right)=
\left(
\begin{array}{rrrr}
  1&1&1&1\\
  1&-1&1&-1\\
  1&1&-1&-1\\
  1&-1&-1&1
 \end{array}
\right)
\left(
 \begin{array}{c}
  a\\b\\c\\d
 \end{array}
\right) = 2H^{\otimes 2}\left(
 \begin{array}{c}
  a\\b\\c\\d
 \end{array}
\right).
\end{equation}
where $H$ stands for the $2\times 2$ Hadamard matrix
\begin{equation}\label{Hadamard}
H:=\frac{1}{\sqrt{2}}
\left(
\begin{array}{cc}
1 & 1\nonumber\\
1 & -1
\end{array}
\right).
\end{equation}
This relation can be generalized to the $(n,m,d)$ scenario with $K$-body correlators, both in the framework of expectation values (\ref{eq:probs2expectations}) and in  the framework of probabilities (\ref{eq:ProbabilitatCondicionada}), but this will be studied elsewhere.

In order to characterize all the vertices of $\mathbbm{P}_2^{{\mathfrak S}_n}$, let us introduce the following set of all the four-tuples in the set of non-negative integers that sum up exactly to $n$:
\begin{equation}
 \label{eq:defthetahedron}
 \mathbbm{T}_n:=\{(a,b,c,d) \in \mathbbm{Z}^4: \ a,b,c,d \geq 0,\ a+b+c+d=n\}.
\end{equation}
Geometrically, $\mathbbm{T}_n$ is the set of points of a simplex with integer coordinates. In the $(n,2,2)$ case, it can be viewed as the set of points of a tetrahedron with integer coordinates. Then, the polytope $\mathbbm{P}_2^{{\mathfrak S}_n}$ can be defined as the convex hull of $\varphi({\mathbbm T}_n)$,
where $\varphi$ is the following map:
\begin{equation}
 \label{eq:parametrization}
 \begin{array}{cccc}
  \varphi:&{\mathbbm T}_n&\longrightarrow&{\mathbbm P}_2^{{\mathfrak S}_n}\\
  &(a,b,c,d)&\mapsto&({\cal S}_0, {\cal S}_1, {\cal S}_{00},{\cal S}_{01}, {\cal S}_{11})
 \end{array}.
\end{equation}
Importantly, one can discard most of the points in $\mathbbm{T}_n$, as they get mapped onto the interior of ${\mathbbm P}_2^{{\mathfrak S}_n}$. In fact, as we prove in the following theorem, the vertices of ${\mathbbm P}_2^{{\mathfrak S}_n}$ are completely characterized by those four-tuples that belong to the \textit{boundary} of $\mathbbm{T}_n$.
Let us denote by $\mathrm{Ext}(\mathbbm{P}_2^{{\mathfrak S}_n})$ the set of vertices of $\mathbbm{P}_2^{{\mathfrak S}_n}$ and let us define
\begin{equation}
\partial \mathbbm{T}_n:=\{(a,b,c,d) \in \mathbbm{Z}^4: \ a,b,c,d \geq 0,\ a+b+c+d=n,\ abcd=0\},
\end{equation}
which is the set of points of $\mathbbm{T}_n$ with at least one coordinate equal to $0$.

\begin{thm}
 \label{thm:vertexs}
 For all $p=(a,b,c,d)\in {\mathbbm T}_n$, the following equivalence holds:
 \begin{equation}
  p \in \partial {\mathbbm{T}_n} \iff \varphi(p) \in \mathrm{Ext}(\mathbbm{P}_2^{{\mathfrak S}_n}).
 \end{equation}
\end{thm}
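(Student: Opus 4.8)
The plan is to work throughout in the coordinates $(\mathcal{S}_0,\mathcal{S}_1,\mathcal{Z})$ instead of $(a,b,c,d)$. By the invertible relation (\ref{eq:Hadamard}), with $n$ fixed the tuple $(a,b,c,d)$ is in bijection with $(\mathcal{S}_0,\mathcal{S}_1,\mathcal{Z})$; in particular $\varphi$ is injective on $\mathbbm{T}_n$, and using $\mathcal{S}_{00}=\mathcal{S}_0^2-n$, $\mathcal{S}_{11}=\mathcal{S}_1^2-n$, $\mathcal{S}_{01}=\mathcal{S}_0\mathcal{S}_1-\mathcal{Z}$, every image point reads $\varphi(p)=(\mathcal{S}_0,\mathcal{S}_1,\mathcal{S}_0^2-n,\mathcal{S}_0\mathcal{S}_1-\mathcal{Z},\mathcal{S}_1^2-n)$. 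Since $\mathbbm{P}_2^{{\mathfrak S}_n}$ is the convex hull of the finite set $\varphi(\mathbbm{T}_n)$, the point $\varphi(p)$ is a vertex precisely when it is \emph{not} a nontrivial convex combination of other points $\varphi(q_i)$, $q_i\in\mathbbm{T}_n$. The decisive structural remark is that the fibre of $\varphi$ over a fixed value of $(\mathcal{S}_0,\mathcal{S}_1)$ is mapped to a \emph{line segment}: only the coordinate $\mathcal{S}_{01}=\mathcal{S}_0\mathcal{S}_1-\mathcal{Z}$ depends on $\mathcal{Z}$, and it does so affinely.

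The core of the argument is a lemma stating that any representation $\varphi(p)=\sum_i\lambda_i\varphi(q_i)$ (with $\lambda_i>0$, $\sum_i\lambda_i=1$) can only use points $q_i$ lying in the same fibre as $p$, i.e. with the same $(\mathcal{S}_0,\mathcal{S}_1)$. I will prove this by exploiting the strict convexity of $t\mapsto t^2$ hidden in the coordinates $\mathcal{S}_{00},\mathcal{S}_{11}$. Writing $u_i:=\mathcal{S}_0(q_i)-\mathcal{S}_0(p)$, matching the $\mathcal{S}_0$- and $\mathcal{S}_{00}$-coordinates forces $\sum_i\lambda_i u_i=0$ and $\sum_i\lambda_i u_i^2=0$; since $\lambda_i>0$ this gives $u_i=0$ for all $i$, and the same argument applied to the pair $\mathcal{S}_1,\mathcal{S}_{11}$ kills the $\mathcal{S}_1$-differences. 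Thus every active $q_i$ shares the value of $(\mathcal{S}_0,\mathcal{S}_1)$ of $p$, and matching the $\mathcal{S}_{01}$-coordinate then forces $\sum_i\lambda_i\mathcal{Z}(q_i)=\mathcal{Z}(p)$. This is the step I expect to be the crux: it is what tames the genuinely quadratic embedding and reduces extremality to a one-dimensional, affine question along each fibre.

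It then remains to locate, inside a fibre, where $p$ sits. Holding $(\mathcal{S}_0,\mathcal{S}_1)$ fixed amounts to fixing $a-d=(\mathcal{S}_0+\mathcal{S}_1)/2$ and $b-c=(\mathcal{S}_0-\mathcal{S}_1)/2$, and since $\mathcal{Z}=(a+d)-(b+c)=n-2(b+c)$, the value $\mathcal{Z}$ is maximal exactly when $b+c=|b-c|$, i.e. $\min(b,c)=0$, and minimal exactly when $a+d=|a-d|$, i.e. $\min(a,d)=0$. Hence the two endpoints $\mathcal{Z}_{\min},\mathcal{Z}_{\max}$ of the fibre, both attained at lattice points of $\mathbbm{T}_n$, are characterised by $ad=0$ and $bc=0$ respectively, so that $\mathcal{Z}(p)\in\{\mathcal{Z}_{\min},\mathcal{Z}_{\max}\}\iff abcd=0\iff p\in\partial\mathbbm{T}_n$, whereas $\mathcal{Z}_{\min}<\mathcal{Z}(p)<\mathcal{Z}_{\max}\iff p\notin\partial\mathbbm{T}_n$.

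Both implications then follow. If $p\notin\partial\mathbbm{T}_n$, then $\varphi(p)$ lies in the relative interior of its fibre segment, hence $\varphi(p)=\lambda\,\varphi(q_{\min})+(1-\lambda)\,\varphi(q_{\max})$ for the two fibre endpoints $q_{\min},q_{\max}$ and some $\lambda\in(0,1)$, so $\varphi(p)\notin\mathrm{Ext}(\mathbbm{P}_2^{{\mathfrak S}_n})$. Conversely, if $p\in\partial\mathbbm{T}_n$, then $\varphi(p)$ is an endpoint of its segment; by the lemma any representation $\varphi(p)=\sum_i\lambda_i\varphi(q_i)$ with $q_i\neq p$ uses only fibre points, whose $\mathcal{Z}$-values lie strictly on one side of $\mathcal{Z}(p)$, contradicting $\sum_i\lambda_i\mathcal{Z}(q_i)=\mathcal{Z}(p)$ with all $\lambda_i>0$. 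Hence $\varphi(p)\in\mathrm{Ext}(\mathbbm{P}_2^{{\mathfrak S}_n})$, closing the equivalence. I note in passing that this fibrewise route bypasses the need to exhibit explicit supporting Bell functionals; such functionals could instead be built from the combinations $(\mathcal{S}_0-s)^2=\mathcal{S}_{00}+n-2s\,\mathcal{S}_0+s^2$, which are linear in the $\varphi$-coordinates, but the convex-combination argument is shorter and treats faces, edges, and corners of $\mathbbm{T}_n$ uniformly.
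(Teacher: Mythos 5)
Your proposal is correct and follows essentially the same route as the paper's proof: both arguments reduce extremality to the fibres of fixed $(\mathcal{S}_0,\mathcal{S}_1)$ (on which $\varphi$ is affine in $\mathcal{Z}$, traversed by the direction $v=(1,-1,-1,1)$ with endpoints at $\min(b,c)=0$ and $\min(a,d)=0$), and both use the quadratic coordinates $\mathcal{S}_{00},\mathcal{S}_{11}$ to confine any convex decomposition of $\varphi(p)$ to a single fibre. Your zero-variance identity $\sum_i\lambda_i u_i=0$, $\sum_i\lambda_i u_i^2=0\Rightarrow u_i=0$ is a cleaner execution of the paper's quadratic-discriminant computation, and your endpoint one-sidedness argument in $\mathcal{Z}$ substitutes for the paper's componentwise reconstruction $p=\sum_i\lambda_i p_i$ followed by the vanishing-coordinate step, but these are local streamlinings of the same proof rather than a different approach.
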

A proof of this can be found in \cite{Science}. For completeness, we have included it also in Appendix \ref{AppA}. Theorem \ref{thm:vertexs} enables us to construct $\mathbbm{P}_2^{{\mathfrak S}_n}$ by taking the convex hull of $\varphi(\partial {\mathbbm T}_n)$ instead of the convex hull of $\varphi({\mathbbm T}_n)$.
Hence, the total number of vertices is
\begin{equation}
\label{eq:actualnumberofvertices}
 |\mathrm{Ext}(\mathbbm{P}_2^{{\mathfrak S}_n})|=|\partial \mathbbm{T}_n| = |{\mathbbm T}_n| - |{\mathbbm T}_n \setminus \partial {\mathbbm T}_n| = |{\mathbbm T}_n| - |{\mathbbm T}_{n-4} | = {n+3 \choose 3} - {n-1 \choose 3}=2(n^2+1).
\end{equation}
Hence, we have improved the bound from $O(n^3)$ in (\ref{eq:exemple_Sym}) to $O(n^2)$ in (\ref{eq:actualnumberofvertices}). The bound (\ref{eq:actualnumberofvertices}) is now exact due to Theorem \ref{thm:vertexs} and the fact that $\varphi$ is invertible, so any vertex of $\mathbbm{P}_2^{{\mathfrak S}_n}$ can be generated from a unique tuple $(a,b,c,d) \in \partial {\mathbbm T_n}$.
We may interpret this  in terms of the graph introduced in Figure \ref{fig:graph}; it means that a coloring in which all the different colors appear cannot correspond to a vertex of $\mathbbm{P}_2^{{\mathfrak S}_n}$.
Thus, the coordinates of all the vertices of ${\mathbbm{P}_2^{{\mathfrak S}_n}}$ are of the form
\begin{eqnarray}
 \label{eq:coordinatesS0}
 {\cal S}_0 = a+b-c-d\\
 \label{eq:coordinatesS1}
 {\cal S}_1 = a-b+c-d\\
 \label{eq:coordinatesS00}
 {\cal S}_{00}=({\cal S}_0)^2-n\\
 \label{eq:coordinatesS01}
 {\cal S}_{01}={\cal S}_0 {\cal S}_1 - {\cal Z}\\
 \label{eq:coordinatesS11}
 {\cal S}_{11}=({\cal S}_1)^2-n,
\end{eqnarray}
for $(a,b,c,d)\in \partial{\mathbb T}_n$.

\section{Classes of Bell Inequalities}
\label{sec:classes}
Here we show in detail how the Bell inequalities presented in \cite{Science} were found. Later we will also
characterize its tightness.

In order to search for particular classes of Bell inequalities, we will use the parametrization we have derived in (\ref{eq:coordinatesS0}-{\ref{eq:coordinatesS11}}). It follows from Theorem \ref{thm:vertexs} that ${\mathbbm{P}_2^{{\mathfrak S}_n}}=\mathrm{conv}(\varphi(\partial {\mathbbm T}_n))$, where $\mathrm{conv}$ stands for convex hull. However, if one relaxes the condition on the numbers $(a,b,c,d)$ from being integer to being real, one obtains a new convex object $-$ not a polytope, but easier to characterize. Let us consider the following sets:
\begin{eqnarray}
 \label{eq:continuousT}
 \mathbf{T}_n=\{(a,b,c,d)\in {\mathbbm R}^4:\ a,b,c,d \geq 0,\ a+b+c+d=n\}
 \end{eqnarray}
 and
 \begin{eqnarray}
 \label{eq:continuousPartialT}
 \partial\mathbf{T}_n=\{(a,b,c,d)\in {\mathbbm R}^4:\ a,b,c,d \geq 0,\ a+b+c+d=n,\ abcd=0\}.
\end{eqnarray}
We shall define ${\mathbf{P}_2^{{\mathfrak S}_n}}=\mathrm{conv}(\varphi({\mathbf{T}_n}))$. Since the proof of Theorem \ref{thm:vertexs} also applies to the continuous case, $\mathrm{conv}(\varphi({\mathbf{T}_n}))=\mathrm{conv}(\varphi(\partial{\mathbf{T}_n}))$.  Because $\partial {\mathbbm T}_n \subset \partial {\mathbf T}_n$ (or ${\mathbbm T}_n \subset {\mathbf T}_n$) one has then ${\mathbbm{P}_2^{{\mathfrak S}_n}}\subseteq {\mathbf{P}_2^{{\mathfrak S}_n}}$. Consequently, a Bell inequality valid for ${\mathbf{P}_2^{{\mathfrak S}_n}}$ is also a Bell inequality valid for ${\mathbbm{P}_2^{{\mathfrak S}_n}}$. We expect that finding Bell inequalities of the set ${\mathbf{P}_2^{{\mathfrak S}_n}}$ is an easier task than finding the ones corresponding to ${\mathbbm{P}_2^{{\mathfrak S}_n}}$, at the price of having inequalities not as optimal as the ones of $\partial {\mathbbm T}_n \subset \partial {\mathbf T}_n$.

In passing let us notice that the sets $\varphi(\partial {\mathbf T}_n)$ and $\varphi({\mathbf T}_n)$ are semialgebraic, since they are defined through polynomial equalities and inequalities. The characterization of convex hulls of semialgebraic sets is a well studied subject \cite{Parrilo}. Although the exact characterization is an NP-hard problem \cite{Parrilo}, there exist efficient approximations with semi-definite programming techniques in terms of the so-called theta bodies. These techniques can be applied to any $(n,m,d)$ scenario with $K$-body correlators when the symmetry group $G$ is ${\mathfrak S}_n$ and they will be studied elsewhere.
These approximations are in the spirit of the Navascues-Pironio-Ac\'in (NPA) hierarchy \cite{NPA}, which bounds the set of quantum correlations ${\mathbf{Q}}$ appearing in (\ref{eq:inclusions}) from above by providing a hierarchy of sets ${\mathbf Q}\subseteq \ldots \subseteq {\mathbf{Q}_2}\subseteq {\mathbf {Q}_{1+AB}}\subseteq {\mathbf Q}_1$. Each level of the hierarchy better approximates ${\mathbf Q}$ and ${\mathbf Q}_k$ asymptotically converges to $\mathbf{Q}$ as $k\rightarrow \infty$. When the NPA hierarchy certifies that a set of correlations is outside ${\mathbf Q}_k$ for some $k$, then these correlations cannot be realized with quantum resources. In our case, a characterization of ${\mathbf{P}_2^{{\mathfrak S}_n}}$ through the so-called theta bodies produces also a hierarchy of sets ${\mathbf{P}_2^{{\mathfrak S}_n}}\subseteq \ldots \subseteq \Theta_2 \subseteq \Theta_1$ that in this case certifies that a set of correlations which is outside of $\Theta_k$ for some $k$ cannot be simulated through shared randomness and it must be necessarily nonlocal.

The approach that we shall follow here in order to derive a class of Bell inequalities is based on the following: Eqs. (\ref{eq:coordinatesS0}-{\ref{eq:coordinatesS11}}) are polynomials of at most degree $2$ in the variables $a,b,c$ and $d$. They define a $3$-dimensional manifold (because of the constraint $a+b+c+d=n$) in a $5$-dimensional space. By finding a tangent hyperplane to it, i.e., a plane of the following form:
\begin{equation}
 \label{eq:tangent}
 \alpha {\cal S}_0 + \beta {\cal S}_1 + \frac{\gamma}{2}{\cal S}_{00} + \delta{\cal S}_{01}+ \frac{\varepsilon}{2}{\cal S}_{11}  + \beta_c = 0,
\end{equation}
and then proving that the LHS of (\ref{eq:tangent}) is positive on ${\mathbf T}_n$, we shall obtain the coefficients $\alpha, \beta, \gamma, \delta, \varepsilon$ and the classical bound $\beta_c$ that define a valid Bell inequality.
Theorem \ref{thm:vertexs} also hints us to look for facets of ${\mathbf T}_n$, that is, points $(a,b,c,d)$ with coordinates satisfying $abcd=0$. We aim at obtaining a Bell inequality which is tangent on vertices of $\mathbf{P}_2^{\mathfrak{S}_n}$ represented by four-tuples belonging to a single facet of ${\mathbf T}_n$,  for example the one defined by $a=0$. Let us, however, remark that not all Bell inequalities need to come from the same facet, as $a=0$ implies $abcd=0$ but the converse does not hold, so the class of inequalities we shall derive will display this feature -- this is discussed in detail in the next sections.

To be more rigorous, we consider the following Lagrangian function
\begin{equation}
 \label{eq:lagrangian}
 {\cal L}=\alpha {\cal S}_0 + \beta {\cal S}_1 + \frac{\gamma}{2}{\cal S}_{00} + \delta{\cal S}_{01}+ \frac{\varepsilon}{2}{\cal S}_{11} + \lambda(a+b+c+d-n) + \mu a,
\end{equation}
and we wish to find its minimum $-\beta_c = \min_{a,b,c,d, \lambda, \mu \in {\mathbbm R}}{\cal L}$.
The necessary condition for an extremal value of ${\cal L}$ to exist in the point
$\boldsymbol{x}^*=(a^*,b^*,c^*,d^*,\lambda^*,\mu^*)$ is that the following
system of differential equations
\begin{equation}\label{OrchestradeVic}
\frac{\partial \mathcal{L}}{\partial x}=0\qquad (x=a,b,c,d,\lambda,\mu)
%
\end{equation}
is satisfied in $\boldsymbol{x}^*$. Since all the permutationally invariant mean values appearing in Eq. (\ref{eq:lagrangian}) are at most quadratic in the variables $a,b,c,d$, one finds that (\ref{OrchestradeVic}) can be rewritten as a simple system of linear equations
\begin{equation}
 \label{eq:conditionforextremum}
 \left(
 \begin{array}{cccccc}
  \xi_+&\zeta&-\zeta&-\xi_+&1&1\\
  \zeta&\xi_-&-\xi_-&-\zeta&1&0\\
  -\zeta&-\xi_-&\xi_-&\zeta&1&0\\
  -\xi_+&-\zeta&\zeta&\xi_+&1&0\\
  1&1&1&1&0&0\\
  1&0&0&0&0&0
 \end{array}
\right)
\left(
\begin{array}{c}
 a^*\\b^*\\c^*\\d^*\\ \lambda^*\\ \mu^*
\end{array}
\right) =
\left(
\begin{array}{c}
 -\alpha - \beta + \xi_+/2\\
 -\alpha + \beta + \xi_-/2\\
 \ \ \ \ \! \alpha - \beta + \xi_-/2\\
 \ \ \ \ \! \alpha + \beta + \xi_+/2\\
 0\\0
\end{array}
\right),
\end{equation}
where $\xi_{\pm}:=\gamma \pm 2 \delta + \varepsilon$ and $\zeta := \gamma - \varepsilon$.

For a generic choice of $\xi_{\pm}$ and ${\zeta}$, the matrix from (\ref{eq:conditionforextremum}) is non-singular, leading to a unique solution $(a^*, b^*, c^*, d^*)$, i.e., a Bell inequality which is tangent to $\mathbf{P}_2^{\mathfrak{S}_n}$ only at a single point. This is what one would expect from a generic choice of the coefficients $\alpha, \beta, \gamma, \delta$ and $\varepsilon$. And, although such solutions would lead us to proper Bell inequalities, we would like to find Bell inequalities that are tangent to $\mathbf{P}_2^{\mathfrak{S}_n}$ at more points, and therefore stronger in detecting nonlocality. For this purpose, we can impose some constraints on the
coefficients $\gamma$, $\delta$ and $\varepsilon$ appearing in Eq. (\ref{eq:conditionforextremum}). Precisely, the number of solutions of Eq. (\ref{eq:conditionforextremum}) is given by Rouch\'e-Frobenius theorem, which says that a linear system of equations $A\vec{b}=\vec{c}$ is compatible if, and only if, the rank of $A$ is the same as the rank of the extended matrix $A|{\vec c}$; the solution is unique if, and only if, $\det A \neq 0$. As we do not want a unique solution, the first condition is that $\det A=0$, which when applied to Eq. (\ref{eq:conditionforextremum}) is
\begin{equation}
 \label{eq:condition1}
 \delta^2-\gamma \varepsilon = 0.
\end{equation}
With this condition the rank of $A$ is not greater than $5$, and it is exactly $5$ if we further impose $\gamma\delta\varepsilon \neq 0$, which we can assume to be always fulfilled. Otherwise, Eq. (\ref{eq:condition1}) would just produce trivial Bell inequalities.
In order to guarantee a solution of (\ref{eq:conditionforextremum}), we have to ensure that also $\mathrm{rank}(A|{\vec c})=5$. This is done by imposing that the $5\times 5$ minors of $A|\vec{c}$ vanish, which is equivalent to the following condition
\begin{equation}
 \label{eq:condition2}
 \delta(\beta+\delta)=\varepsilon(\alpha + \delta).
\end{equation}
In this case, the classical bound reads
\begin{equation}
 \label{eq:classicalbound}
 \beta_c = -{\cal L}(a^*,b^*,c^*,d^*,\lambda^*,\mu^*)=\frac{(\beta+ \delta)^2 + n(\delta - \varepsilon)^2}{2\varepsilon}.
\end{equation}
If the minimum is global, we have obtained a Bell inequality which is tangent to ${\mathbf P}_2^{{\mathfrak S}_n}$ and valid for ${\mathbbm P}_2^{{\mathfrak S}_n}$, although in general not tangent to the latter. By repeating the same argument, but over the facets $b=0$, $c=0$ and $d=0$ we would get similar expressions as in (\ref{eq:condition2}) and (\ref{eq:classicalbound}), with swaps of parameters and changes of signs. In the following theorem \cite{Science}, we generalize all of them and we do a small correction to (\ref{eq:classicalbound}) ensuring that the Bell inequality is tangent to ${\mathbbm P}_2^{{\mathfrak S}_n}$, and not to ${\mathbf P}_2^{{\mathfrak S}_n}$. We also prove that the minimum is global, i.e., the Bell inequality is valid.

\begin{thm}
 \label{thm:class}
 For any $\sigma \in \{-1,1\}$ and $x,y, \mu \in \mathbbm{N}$ such that $\mu$ has opposite parity to $x$ ($y$) if $n$ is even (odd), define the parameters of (\ref{eq:2bodySymgeneral}) as
 \begin{equation}
  \label{eq:parametersclass}
  \alpha_{\pm}=x[\sigma \mu \pm (x+y)],\quad \beta = \mu y,\quad \gamma = x^2,\quad \delta = \sigma x y,\quad \varepsilon = y^2.
 \end{equation}
Then the classical bound of the resulting Bell inequality, for which it is tangent to ${\mathbbm P}_2^{{\mathfrak S}_n}$, is
\begin{equation}
 \label{eq:classicalboundclass}
 \beta_c=\frac{1}{2}[n(x+y)^2+(\sigma \mu \pm x)^2-1].
\end{equation}
\end{thm}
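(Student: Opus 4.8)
The plan is to regard the left-hand side of (\ref{eq:2bodySymgeneral}) as a function $I$ of a tuple $(a,b,c,d)$ by substituting the vertex coordinates (\ref{eq:coordinatesS0}--\ref{eq:coordinatesS11}). Since every Bell functional is affine on $\mathbbm{P}_2^{\mathfrak{S}_n}$, both validity ($I\ge 0$) and the value of the classical bound are governed entirely by the vertices, i.e. the integer tuples in $\partial\mathbbm{T}_n$; and because the continuous analogue of Theorem~\ref{thm:vertexs} gives $\mathrm{conv}(\varphi(\mathbf{T}_n))=\mathrm{conv}(\varphi(\partial\mathbf{T}_n))$, it suffices to study $I$ on the boundary $\partial\mathbf{T}_n$ of the solid simplex. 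The first step is to insert the parametrization (\ref{eq:parametersclass}). Because $\gamma=x^2$, $\delta=\sigma x y$, $\varepsilon=y^2$ satisfy $\delta^2-\gamma\varepsilon=0$, i.e. (\ref{eq:condition1}), the quadratic part collapses to a perfect square, $\tfrac{\gamma}{2}\mathcal{S}_0^2+\delta\mathcal{S}_0\mathcal{S}_1+\tfrac{\varepsilon}{2}\mathcal{S}_1^2=\tfrac12(x\mathcal{S}_0+\sigma y\mathcal{S}_1)^2$. Writing $\mathcal{S}_0,\mathcal{S}_1,\mathcal{Z}$ in terms of $(a,b,c,d)$ through the Hadamard relation (\ref{eq:Hadamard}) then turns $I$ into a convex quadratic in $(a,b,c,d)$ whose Hessian has rank one.

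Next I would identify the facet on which the inequality becomes tangent. A direct check shows that the parametrization satisfies the analogue of the compatibility condition (\ref{eq:condition2}) on one of the four facets $a=0,\dots,d=0$, the sign $\sigma$ and the branch $\pm$ of $\alpha_\pm$ selecting which facet plays the role of $a=0$ in the derivation of (\ref{eq:condition1})--(\ref{eq:classicalbound}). On that facet the two conditions make the stationarity system (\ref{eq:conditionforextremum}) rank-deficient, so the minimum of the convex quadratic is attained along a whole segment rather than at an isolated point; this is exactly what produces tangency at several vertices. The analogue of (\ref{eq:classicalbound}) then evaluates to the continuous classical bound $\tfrac12[(\sigma\mu\pm x)^2+n(x+y)^2]$, i.e. (\ref{eq:classicalboundclass}) without the $-1$. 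Here one must also verify that the associated minimizer $(a^*,b^*,c^*,d^*)$ has non-negative entries, so that it is a genuine constrained minimizer of $I$ on the facet and not merely a solution of the unconstrained Lagrange problem (\ref{eq:lagrangian}).

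The essential point---and the main obstacle---is the passage from the continuous bound to a bound tangent to the integer polytope $\mathbbm{P}_2^{\mathfrak{S}_n}$. Completing the square on the tangent facet, and using (\ref{eq:condition1})--(\ref{eq:condition2}) to kill the linear term in the flat direction, I would write $I=\tfrac12\,\ell^2+\mathrm{const}$, where $\ell$ is an affine form that is \emph{integer-valued} on the lattice $\partial\mathbbm{T}_n$ and whose continuous minimum is $0$. The continuous minimizer $(a^*,\dots)$ is in general non-integer, hence not a vertex. The parity hypothesis on $\mu$ (opposite parity to $x$ if $n$ is even, to $y$ if $n$ is odd) is precisely the arithmetic condition ensuring that the continuous optimum of $\ell$ is offset from its nearest achievable integer value by exactly half a step, so that the minimum of $\ell^2$ over genuine vertices exceeds its continuous minimum by exactly the amount that turns $(\sigma\mu\pm x)^2$ into $(\sigma\mu\pm x)^2-1$ in (\ref{eq:classicalboundclass}). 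I would make this explicit by exhibiting an integer tuple in $\partial\mathbbm{T}_n$ that achieves this minimum, which simultaneously proves tangency to $\mathbbm{P}_2^{\mathfrak{S}_n}$.

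Finally, to establish that the inequality is valid (not merely tangent on one facet) I would argue that the face minimum just computed is the global minimum of $I$ over the whole simplex. This is immediate from the completed-square form, which is manifestly non-negative on all of $\mathbf{T}_n$ and not only on the selected facet; equivalently one checks that the value of $I$ on the three remaining facets is no smaller. Combining this with the integrality analysis gives $I\ge 0$ at every vertex of $\mathbbm{P}_2^{\mathfrak{S}_n}$ with equality attained, which is the assertion of the theorem. I expect the substitution and square-completion to be routine; the delicate part is the parity/integrality bookkeeping of the third paragraph, where the exact $-1$ correction and the guarantee that the minimizing vertex is integer must be pinned down.
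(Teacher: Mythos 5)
Your proposal is correct and follows essentially the same route as the paper's proof in Appendix~\ref{AppA}: the degeneracy $\delta^2=\gamma\varepsilon$ collapses the quadratic part into the perfect square $\tfrac12\left(x\mathcal{S}_0+\sigma y\mathcal{S}_1+\sigma\mu\pm x\right)^2$, the remaining piece is $4xy\,r$ with $r\in\{a,b,c,d\}$ selected by the signs $(\sigma,\pm)$, and the parity hypothesis forces the affine form to be an odd integer at every vertex, which is exactly the source of the $-1$ in (\ref{eq:classicalboundclass}). One caveat on wording: with the $-1$ included, $I$ is \emph{not} non-negative on the continuous simplex $\mathbf{T}_n$ (its continuous minimum is $-1/2$), so your closing appeal to manifest non-negativity should instead invoke the fact that the parity of the affine form depends only on that of $n$, whence the bound $\ell^2\geq 1$ holds at \emph{all} integer vertices and not merely on the selected facet---which is precisely how the paper argues, via the emptiness of the set $\tilde{I}_0$.
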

Theorem \ref{thm:class} shows that the few-parameter class of Bell inequalities defined by Eqs. (\ref{eq:parametersclass}, \ref{eq:classicalboundclass}), which can be explicitly written as
\begin{equation}\label{FuetdeVic}
I_{x,y,\mu,\sigma}^{\pm}:=x[\sigma\mu\pm(x+y)]\mathcal{S}_0+\mu y\mathcal{S}_1+\frac{x^2}{2}\mathcal{S}_{00}+
\sigma xy\mathcal{S}_{01}+\frac{y^2}{2}\mathcal{S}_{11}\geq -\frac{1}{2}[n(x+y)^2+(\sigma\mu\pm x)-1],
\end{equation}
is tangent to ${\mathbbm P}_2^{{\mathfrak S}_n}$ in at least one vertex, but it does not guarantee that it is tight (that is, defines a facet of ${\mathbbm P}_2^{{\mathfrak S}_n}$) and as such optimal. Notice that in this case
proving that a Bell inequality defines a facet of ${\mathbbm P}_2^{{\mathfrak S}_n}$ requires showing that
the set of vertices of ${\mathbbm P}_2^{{\mathfrak S}_n}$ saturating it spans a four-dimensional affine space.
In the next theorem we are going to characterize the vertices of ${\mathbbm P}_2^{{\mathfrak S}_n}$ which saturate the Bell inequality (\ref{FuetdeVic}). To this end, it will be useful to rename the faces of the tetrahedron ${\mathbbm T}_n$ according to the choice of $\sigma$ and $\pm$ in order to obtain a more general result:
\begin{equation}
\begin{array}{c|cccc}
 \sigma&+&+&-&-\\
 \pm&+&-&+&-\\
 \hline
 r&b&c&a&d\\
 s&a&d&b&c\\
 t&d&a&c&b\\
 u&c&b&d&a
\end{array}.
  \label{eq:renaming}
\end{equation}
Now, Eq. (\ref{eq:Hadamard}) is generalized to
 \begin{equation}
  \left(
  \begin{array}{c}
   n\\
   {\cal S}_1\\
   {\cal S}_0\\
   {\cal Z}
  \end{array}
  \right)=2\left(
  \begin{array}{cccc}
   1&&&\\
   &\mp \sigma&&\\
   &&\pm1&\\
   &&&-\sigma
  \end{array}
\right)\cdot H^{\otimes 2}\cdot\left(
\begin{array}{c}
r\\s\\t\\u
\end{array}
\right)=
\left(
  \begin{array}{cccc}
   1&1&1&1\\
   \mp \sigma&\pm \sigma&\mp \sigma&\pm \sigma\\
   \pm1&\pm1&\mp1&\mp1\\
   -\sigma&\sigma&\sigma&-\sigma
  \end{array}
\right)\left(
\begin{array}{c}
r\\s\\t\\u
\end{array}
\right),
\label{eq:Hadamard2}
 \end{equation}
 where $\sigma$ and $\pm$ are independent signs and $H$ is the $2\times 2$ Hadamard matrix defined in Eq. (\ref{Hadamard}). Having this, we can now state the following theorem.
 %
%
\begin{thm}
\label{thm:numberofvertices}
 Let us assume that $x$ and $y$ in the Bell inequality (\ref{FuetdeVic}) are coprimes and let us define the following quantities:
 \begin{equation}
 \begin{array}{cll}
  K_r(\tau)&:=&[\pm n(y-x) + \sigma \mu \pm x + \tau]/2,\\
  t_0(\tau)&:=&\pm y^{-1}K_r(\tau) \mod x,\\
  u_0(\tau)&:=&[n x \pm K_r(\tau)-(x+y)t_0(\tau)]/x,\\
  s_0(\tau)&:=&[\mp K_r(\tau) + y t_0(\tau)]/x,
 \end{array}
 \end{equation}
where $\tau \in \{-1,1\}$.
If a vertex of ${\mathbbm P}_2^{{\mathfrak S}_n}$ saturates (\ref{FuetdeVic}), then it is of the form
$\varphi([r,s,t,u])$, where
 \begin{equation}
  [r,s,t,u]=[0,s_0(\tau), t_0(\tau), u_0(\tau)]+ k[0,x,y,-(x+y)], \qquad k\in \mathbbm{Z}.
  \label{eq:reparametrization}
 \end{equation}
 Furthermore, the number of vertices of  ${\mathbbm P}_2^{{\mathfrak S}_n}$ saturating (\ref{FuetdeVic}) is given by
 \begin{equation}
  N_S:=\sum_{\tau =\pm 1}\max\left\{0,\left\lfloor\frac{u_0(\tau)}{x+y}\right\rfloor-\max\left\{0,\left\lceil\frac{-s_0(\tau)}{y}\right\rceil\right\}+1\right\}. 
  \label{eq:numberoftightvertices}
 \end{equation}
\end{thm}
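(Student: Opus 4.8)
The plan is to fix the two signs $\sigma$ and $\pm$ once and for all and pass to the renamed variables $(r,s,t,u)$ of (\ref{eq:renaming}), so that in each of the four cases the relevant facet of $\mathbbm{T}_n$ becomes $r=0$. Using the generalized Hadamard relation (\ref{eq:Hadamard2}) I would express $\mathcal{S}_0,\mathcal{S}_1,\mathcal{Z}$ and $n$ as linear forms in $(r,s,t,u)$ and substitute them, together with $\gamma=x^2$, $\delta=\sigma xy$, $\varepsilon=y^2$ from (\ref{eq:parametersclass}), into the left-hand side of (\ref{FuetdeVic}). Theorem \ref{thm:class} already guarantees $I_{x,y,\mu,\sigma}^{\pm}+\beta_c\ge0$ on $\mathbf{T}_n$; the first step is to refine this to show that the minimum is attained only when $r=0$, so that every saturating vertex lies on that facet. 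This is the reason the base point in (\ref{eq:reparametrization}) has vanishing first coordinate.

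The heart of the argument is the reduction of the quadratic saturation equation to a linear one. On the facet $r=0$, with $s+t+u=n$ fixed, the condition $\delta^2=\gamma\varepsilon$ of (\ref{eq:condition1}) collapses the quadratic part of $I_{x,y,\mu,\sigma}^{\pm}$ to a perfect square in a single linear form. I would check that this form is $ys-xt$, which is exactly the combination left invariant by the lattice direction $[0,x,y,-(x+y)]$, and then complete the square to write $I_{x,y,\mu,\sigma}^{\pm}+\beta_c$, restricted to $r=0$, as a positive constant times $(ys-xt-v)^2$ minus a constant, where $v$ is the real minimizer fixed by the Lagrange data of Theorem \ref{thm:class}. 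This reduction is the step I expect to be the main obstacle, since it requires carrying the substitution of (\ref{eq:Hadamard2}) through faithfully and recognizing the precise perfect-square-minus-constant form; everything downstream is essentially bookkeeping.

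Next I would bring in the parity hypothesis on $\mu$. Because $\mathcal{S}_0$ and $\mathcal{S}_1$ share the parity of $n$ (both equal $n$ minus an even number, by (\ref{eq:coordinatesS0}) and (\ref{eq:coordinatesS1})), the integer-valued form $ys-xt$ cannot equal the continuous minimizer $v$. The stated parity condition---$\mu$ opposite to $x$ for $n$ even, to $y$ for $n$ odd---is precisely what forces $v$ to be a half-integer, so that $\min_{ys-xt\in\mathbbm{Z}}(ys-xt-v)^2=\tfrac14$ is attained on the two level sets $ys-xt=v+\tfrac{\tau}{2}$, $\tau\in\{-1,1\}$. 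This simultaneously accounts for the $-1$ correction in the classical bound (\ref{eq:classicalboundclass}) and for the two families indexed by $\tau$; matching $v+\tfrac{\tau}{2}$ against the substituted data yields the quantity $K_r(\tau)$. Solving the Diophantine equation $ys-xt=\mathrm{const}$ with $\gcd(x,y)=1$ through the modular inverse $y^{-1}\bmod x$ then produces the particular solution $t_0(\tau),s_0(\tau)$, while $u_0(\tau)$ follows from $s+t+u=n$; the homogeneous solutions are the integer multiples of $[0,x,y,-(x+y)]$, establishing (\ref{eq:reparametrization}).

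Finally, to obtain the count (\ref{eq:numberoftightvertices}) I would impose membership in $\partial\mathbbm{T}_n$, i.e. $s,t,u\ge0$, on the parametrized family. Since $t_0(\tau)=\pm y^{-1}K_r(\tau)\bmod x$ lies in $[0,x)$, the constraint $t=t_0+kx\ge0$ forces $k\ge0$, while $s=s_0+ky\ge0$ gives $k\ge\lceil -s_0/y\rceil$ and $u=u_0-k(x+y)\ge0$ gives $k\le\lfloor u_0/(x+y)\rfloor$; hence the admissible $k$ run from $\max\{0,\lceil -s_0/y\rceil\}$ to $\lfloor u_0/(x+y)\rfloor$, whose cardinality is the bracketed expression in (\ref{eq:numberoftightvertices}). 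Summing the two contributions $\tau=\pm1$ gives $N_S$. I would close by noting that coprimality of $x$ and $y$ is used twice: to make $y^{-1}\bmod x$ well defined and to guarantee that the lattice of homogeneous solutions is generated by the single vector $[0,x,y,-(x+y)]$.
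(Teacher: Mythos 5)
Your plan is, step for step, the paper's own proof: saturation of (\ref{FuetdeVic}) is analyzed through the perfect-square decomposition obtained in the proof of Theorem \ref{thm:class}, so that $I=0$ forces $r=0$ together with the linear form being an odd integer of square one (your ``half-integer minimizer $v$'' packaging of the parity hypothesis is equivalent to the paper's argument, and it does account both for the $-1$ in (\ref{eq:classicalboundclass}) and for the two families $\tau=\pm1$); the resulting linear Diophantine equation is then solved via the modular inverse $y^{-1}\bmod x$, and the count (\ref{eq:numberoftightvertices}) comes from imposing $s,t,u\geq 0$ on the one-parameter family, exactly as in your last paragraph.

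There is, however, one concrete error in the middle of your argument. Carrying the substitution (\ref{eq:Hadamard2}) through on the facet $r=0$ (using $u=n-s-t$) gives $x\mathcal{S}_0+\sigma y\mathcal{S}_1+\sigma\mu\pm x=\pm2(xs-yt)\pm n(y-x)+\sigma\mu\pm x$, so the linear form collapsing the quadratic part is $xs-yt$, \emph{not} $ys-xt$. Accordingly, the homogeneous solutions of $K_r(\tau)\pm(xs-yt)=0$ satisfy $x\,\Delta s=y\,\Delta t$, which with $\gcd(x,y)=1$ forces $\Delta s=ym$, $\Delta t=xm$: the lattice direction is $[0,y,x,-(x+y)]$, not $[0,x,y,-(x+y)]$. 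Your identification of $ys-xt$ as ``the combination left invariant by $[0,x,y,-(x+y)]$'' reverse-engineers a misprint in the theorem statement rather than the algebra: with the form $ys-xt$, reduction modulo $x$ would determine $s_0$ (not $t_0$) via $y^{-1}\bmod x$, contradicting the stated definition $t_0(\tau)=\pm y^{-1}K_r(\tau)\bmod x$ and the divisors appearing in (\ref{eq:numberoftightvertices}). Note that your own counting paragraph silently switches to the correct version ($t=t_0+kx$ with $t_0\in[0,x)$ forcing $k\geq0$, and $s=s_0+ky$ giving $k\geq\lceil-s_0(\tau)/y\rceil$), which is precisely the paper's derivation; so the middle of your proposal is inconsistent with its end and must be fixed by replacing $ys-xt$ with $xs-yt$ and transposing the direction vector. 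Incidentally, the printed equation (\ref{eq:reparametrization}) in the paper carries the same transposition $[0,x,y,-(x+y)]$, while the paper's proof and the formula for $N_S$ are consistent only with $[0,y,x,-(x+y)]$.
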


Clearly, a necessary condition for the inequality (\ref{FuetdeVic}) to be tight is that $N_S \geq 5$ as $5$ is the minimal number of points to generate a $4$-dimensional affine subspace. We have checked numerically that this condition is also sufficient for all the instances of $n$ for which we could compute all facets of ${\mathbbm P}_2^{{\mathfrak S}_n}$, which is $n \leq 33$. To prove sufficiency, let us make the following observations: The proof of Theorem \ref{thm:numberofvertices} indicates that the vertices saturating (\ref{FuetdeVic}) form a \textit{zig-zag} pattern that alternates between $\tau = 1$ and $\tau = -1$. For a given $\tau$, more than three different points are always linearly dependent. Hence, one must pick three points for $\tau$ and two for $-\tau$. Short algebra shows then that the matrix formed by the coordinates of such points (in particular, if chosen with consecutive $k$) has maximal rank.

\section{Quantum violation}
\label{sec:quantum}

In this section we analyze the quantum violation of permutationally invariant 2-body Bell inequalities for the $(n,2,2)$ scenario; \textit{i.e.}, Bell inequalities of the form (\ref{eq:2bodySymgeneral}). If the left-hand side of (\ref{eq:2bodySymgeneral}) is smaller than $0$ for some $\vec{p}$, then the inequality is violated and the nonlocality of this $\vec{p}$ is detected. To show that there is a quantum violation of such an inequality, one must find a quantum state (taken from a Hilbert space of some dimension $D$) and POVM elements such that the probabilities obtained through (\ref{eq:BornsRule}) violate (\ref{eq:2bodySymgeneral}).

In order to find out what is the best performance of a Bell inequality, we are interested in finding the states and measurements giving the maximal quantum violation of such a Bell inequality, however, the dimension $D$ of the corresponding state depends in general on the scenario under consideration. Luckily, in the $(n,2,2)$ scenario, $D=2$ is enough, because in order to find the maximal quantum violation of a Bell inequality with two dichotomic measurements per party, it suffices to consider qubits and traceless real observables \cite{Masanes, TonerVerstraete}.

Let us define by $\vec{\sigma}=[\sigma_{x},\sigma_{y}, \sigma_{z}]$ the usual vector of the Pauli matrices and let us denote by $\vec{\sigma}^{(i)}$ the same vector acting on the $i$th site of an $n$-partite system. Any traceless qubit observable with eigenvalues $\pm 1$ can be written as $\hat{\mathbf{n}}\cdot\vec{\sigma}$
with $\hat{\mathbf{n}}\in\mathbbm{R}^3$ such that $|\hat{\mathbf{n}}|=1$. Then, any real traceless single-qubit observable ${\cal M}^{(i)}_{x_i}$ with eigenvalues $\pm 1$ can be parametrized as
\begin{equation}
 \label{eq:tracelessobs}
 {\cal M}^{(i)}_{x_i}= \cos (\theta_{x_i}^{(i)}) \sigma_z^{(i)} + \sin (\theta_{x_i}^{(i)}) \sigma_x^{(i)},
\end{equation}
where $\theta_{x_i}^{(i)} \in [0,2\pi)$. To simplify the notation, we shall denote $\theta_{x_i}^{(i)}$ as $\varphi_i$ when $x_i=0$ or as $\theta_i$ when $x_i=1$.

The so-called Bell operator is the quantum observable corresponding to the left hand side of (\ref{eq:2bodySymgeneral}), where the observables (\ref{eq:tracelessobs}) are used in the definitions of (\ref{eq:def1bodysym}, \ref{eq:def2bodysym}). We shall denote it ${\cal B}(\{\varphi_i,\theta_i\})$. The maximal quantum violation is then given by the minimal eigenvalue value of ${\cal B}(\{\varphi_i,\theta_i\})$, when varying the angles $\varphi_i$ and  $\theta_i$, and the corresponding eigenvector is the quantum state maximally violating the corresponding Bell inequality (\ref{eq:2bodySymgeneral}).

If a Bell inequality displays some symmetry with respect to an exchange of parties (i.e., it consists only of $G$-invariant correlators as defined in (\ref{eq:G-invariant-correlator}) for some group $G$), the pure state $\ket{\psi}$ for which the maximal quantum violation is achieved does not need to display the same symmetry (see  \cite{TIpaper}), nor do the observables be the same at each site (\textit{i.e.}, ${\cal M}_{x_i}^{(i)}$ be independent of $i$). However, one can always consider a mixed state $\rho$ and observables (at the expense of increasing the local dimension $D$) such that $\rho$ is invariant with respect to any permutation in $G$ and the extended observables are the same at each site. However, in the case we are considering, $G={\mathfrak S}_{n}$, numerical results for a small number of parties suggest that the maximal quantum violation is already achieved with a permutationally invariant pure qubit state and the same pair of measurements at each site. This is very convenient for our analysis because in the latter case the Bell operator ${\cal B}(\varphi, \theta)$ depends on only two angles $\varphi=\varphi_i$ and $\theta=\theta_i$, and it is permutationally invariant and therefore can be block-diagonalized.
Finally, as the Bell inequality (\ref{eq:2bodySymgeneral}) contains only $2$-body correlators, we are able to prove that in a certain basis, ${\cal B}(\varphi, \theta)$ is pentadiagonal. All this implies that its minimal eigenvalue over $\varphi,\theta$ can be found numerically for a large number of parties in an efficient manner due to sparsity of the matrix and the fact that we have reduced the optimization to a single variable problem.

\subsection{Block-diagonalization of the permutationally invariant Bell opeator}
When the same pair of measurements is taken at each site, the Bell operator is invariant under any permutation of the parties. It is a well-known result that under these circumstances it can be block diagonalized. The reason for that lies in the result known as Schur-Weyl duality \cite{ChristandlThesis}. Consider the permutation operator $V_{\sigma}$ acting on an $n$-qubit Hilbert space:
\begin{equation}
V_\sigma : \ket{i_0\cdots i_{n-1}} \mapsto \ket{i_{\sigma^{-1}(0)}\cdots i_{\sigma^{-1}(n-1)}},
\end{equation}
where $\sigma \in {\mathfrak S}_n$. $V_{\sigma}$ is a unitary representation of $\sigma\in{\mathfrak S}_n$ acting on the $n$ qubit Hilbert space. Consider also the tensor product representation $W^U=U^{\otimes n}$ of the special unitary group. The Schur-Weyl duality states that the Hilbert space splits into blocks, on which these two representations commute (in representation theory terms, the action of these two groups is dual). This block splitting is the following:
\begin{equation}
 \label{eq:Schur-Weyl}
 \left({\mathbbm C}^2\right)^{\otimes n} \cong \bigoplus_{J=J_0}^{n/2}{\cal H}_J\otimes {\cal K}_J,
\end{equation}
where $2 J_0\equiv n \ \mathrm{mod}\ 2$, the Hilbert spaces ${\cal H}_J$ are of dimension $2J+1$ and the dimension of ${\cal K}_J$ is $1$ if $J=n/2$ and
\begin{equation}
 \label{eq:multiplicity}
 \dim{\cal K}_J={n\choose n/2-J}-{n\choose n/2-J-1}
\end{equation}
otherwise.

A permutationally invariant state $\rho$ is such that for all $\sigma \in {\mathfrak S}_n$, $\rho = V_{\sigma} \rho V_{\sigma}^\dagger$. In the basis given by (\ref{eq:Schur-Weyl}), $\rho$ has the following form:
\begin{equation}
 \label{eq:blocksstate}
 \rho=\bigoplus_{J=J_0}^{n/2} \frac{p_J}{\dim_{{\cal K}_J}} \rho_J \otimes {\mathbbm 1}_{\dim_{{\cal K}_J}},
\end{equation}
where $p_J$ is a probability distribution and $\rho_J$ are density matrices.
Hence, $\rho$, which is a $2^n\times 2^n$ density matrix, is completely characterized by the $\left\lfloor n/2\right\rfloor$ blocks $\rho_J$, the biggest one having size $(n+1)\times (n+1)$, corresponding to the symmetric subspace
$(\mathbbm{C}^2)^{\ot n}$.

The blocks $\rho_J$ can be obtained by projecting the Bell operator onto the following basis \cite{MoroderPI}:
\begin{equation}
\{\ket{D^{k}_{2J}}\otimes\ket{\psi^-}^{\otimes m/2}\}_{k=0}^{2J},
\label{eq:basisblocks}
\end{equation}
where $m=(n-2J)$ is an even number, $\ket{D^k_n}$ are the so-called symmetric Dicke states (of $n$ qubits and $k$ excitations) which are symmetric superpositions of $k$ qubits in the state $\ket 1$ and $n-k$ qubits in the state $\ket 0$:
\begin{equation}
\label{eq:defDicke}
\ket{D^k_n}:={n\choose k}^{-1/2}\sum_{\sigma\in\mathfrak{S}_n}V_{\sigma}\left(\ket{0}^{\otimes n-k}\ket{1}^{\otimes k}\right),
\end{equation}
and $\ket{\psi^-}$ is the singlet state given by
\begin{equation}
\label{eq:defSinglet}
\ket{\psi^-}:=\frac{\ket{01}-\ket{10}}{\sqrt{2}}.
\end{equation}

In the following theorem, we give the analytic form of each block, when the same set of real traceless observables is taken at each site:
\begin{equation}
{\cal M}_0^{(i)}=\cos \varphi \sigma_z^{(i)}+ \sin \varphi \sigma_x^{(i)}, \qquad {\cal M}_1^{(i)}=\cos \theta \sigma_z^{(i)}+ \sin \theta \sigma_x^{(i)}.
\label{eq:same_measurements}
\end{equation}

\begin{thm}
\label{thm:Blocks}
The $J$-th Block ${\cal B}_J(\varphi, \theta)$ of the Bell operator
corresponding to a $2$-body symmetric Bell inequality (\ref{eq:2bodySymgeneral}), with measurements given by (\ref{eq:same_measurements}), has elements $({\cal B}_J(\varphi, \theta))^k_l$ ($k$-th row, $l$-th column)
given by
 \begin{equation}
\label{eq:penta-block}
  ({\cal B}_J(\varphi, \theta))^k_l = d_k \delta_{k,l} + u_k \delta_{k,l-1}+u_l
\delta_{k-1,l} + v_k \delta_{k,l-2} + v_l\delta_{k-2,l},
 \end{equation}
where $0\leq k,l,\leq 2J$, $d_k$ ($0\leq k \leq 2J$) are the diagonal elements, $u_k$ ($0\leq k \leq 2J-1$) correspond to the elements of the upper (lower) diagonal
and $v_k$ ($0 \leq k \leq 2J-2$) stand for the elements of the second upper
(lower) diagonal and $\delta_{k,l}$ is the Kronecker delta function
($\delta_{k,l}=1 \iff k=l$, otherwise $\delta_{k,l}=0$).
The coefficients $d_k, u_k, v_k$ are explicitly given by
\begin{eqnarray}
 \label{dk}d_k &:=& \beta_c + (2J-2k)A +  [(2J-2k)^2-n]B/2 + [2k(2J-k)-m]C/2,\\
 u_k &:=& [A' + (2J-1-2k)D]\sqrt{(2J-k)(k+1)},\\
 v_k&:=&C\sqrt{(2J-k)(2J-k-1)(k+1)(k+2)}/2,
\end{eqnarray}
where the parameters $A,A',B,C,D$ depend only on the Bell Inequality coefficients and the
measurements' angles:
\begin{eqnarray}
 A&:=&\alpha \cos \varphi + \beta \cos \theta,\\
 A'&:=&\alpha \sin \varphi + \beta \sin \theta,\\
 B&:=&\gamma \cos^2 \varphi + 2\delta \cos \varphi \cos \theta + \varepsilon\cos^2\theta,\\
\label{eq:def-C}
 C&:=&\gamma \sin^2 \varphi + 2\delta \sin \varphi \sin \theta + \varepsilon\sin^2\theta,\\
 D&:=&\gamma \cos \varphi \sin \varphi + \delta \cos \varphi \sin \theta + \delta
\cos \theta \sin \varphi + \varepsilon \cos \theta \sin \theta.
\end{eqnarray}
\end{thm}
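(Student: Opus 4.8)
The plan is to promote the symmetric correlators $\mathcal{S}_0,\mathcal{S}_1,\mathcal{S}_{00},\mathcal{S}_{01},\mathcal{S}_{11}$ in (\ref{eq:2bodySymgeneral}) to operators, rewrite the resulting Bell operator $\mathcal{B}$ entirely in terms of the collective spin operators $J_z=\tfrac12\sum_i\sigma_z^{(i)}$ and $J_x=\tfrac12\sum_i\sigma_x^{(i)}$, invoke the Schur--Weyl decomposition (\ref{eq:Schur-Weyl}) to restrict $\mathcal{B}$ to a single spin-$J$ irrep, and then read off the three diagonals by elementary angular-momentum algebra in the Dicke basis.

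First I would rewrite the building blocks. Writing $\mathcal{M}_k=\sum_i\mathcal{M}_k^{(i)}$, the single-site identity $(\mathcal{M}_k^{(i)})^2=\mathbbm{1}$ (valid since $\sigma_z,\sigma_x$ anticommute and $\cos^2+\sin^2=1$) gives $\mathcal{S}_{kk}=\mathcal{M}_k^2-n\,\mathbbm{1}$, while the single-site anticommutator $\{\mathcal{M}_0^{(i)},\mathcal{M}_1^{(i)}\}=2\cos(\varphi-\theta)\mathbbm{1}$ gives, after separating the $i=j$ and $i\neq j$ terms, $\mathcal{S}_{01}=\tfrac12\{\mathcal{M}_0,\mathcal{M}_1\}-n\cos(\varphi-\theta)\mathbbm{1}$. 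Substituting $\mathcal{M}_0=2(\cos\varphi\,J_z+\sin\varphi\,J_x)$ and $\mathcal{M}_1=2(\cos\theta\,J_z+\sin\theta\,J_x)$ and expanding, the linear part of $\mathcal{B}$ collapses to $2A\,J_z+2A'\,J_x$ and the quadratic part to $2B\,J_z^2+2C\,J_x^2+2D\,\{J_z,J_x\}$, with $A,A',B,C,D$ exactly the combinations defined in the statement; the remaining $c$-number is $\beta_c\,\mathbbm{1}-\tfrac{n}{2}(B+C)\,\mathbbm{1}$, where I would use $B+C=\gamma+\varepsilon+2\delta\cos(\varphi-\theta)$.

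Next I would block-diagonalize. As $\mathcal{B}$ is a polynomial in the permutation-invariant operators $J_z,J_x$, it commutes with every $V_\sigma$ and respects (\ref{eq:Schur-Weyl}), acting as $\mathcal{B}_J\otimes\mathbbm{1}_{\mathcal{K}_J}$. On the basis (\ref{eq:basisblocks}) each singlet factor is annihilated by its two-qubit collective spin, $(\sigma_\alpha^{(2l)}+\sigma_\alpha^{(2l+1)})\ket{\psi^-}=0$, so $J_z$ and $J_x$ reduce to the spin-$J$ angular-momentum operators acting only on the Dicke part, with $\ket{D^k_{2J}}$ identified with the $J_z$-eigenstate $\ket{J,J-k}$. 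This is the reduction of \cite{MoroderPI}, and it makes $\mathcal{B}_J$ a genuine $(2J+1)\times(2J+1)$ matrix.

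Finally, setting $M=J-k$ and using $J_x=(J_++J_-)/2$ together with $\langle J,M-1|J_-|J,M\rangle=\sqrt{(2J-k)(k+1)}$, I would collect terms by how many units they shift $k$: the diagonals of $J_z$, $J_z^2$ and $J_x^2$ give $d_k$; the one-step elements of $J_x$ (contributing $A'$) and of $\{J_z,J_x\}$ (contributing the factor $2M-1=2J-1-2k$ times $D$) assemble into $u_k$; and the two-step element of $J_x^2$ alone yields $v_k$. The one delicate point, which I expect to be the main obstacle, is the constant part of $d_k$: the diagonal of $2C\,J_x^2$ equals $C\bigl(J(J+1)-M^2\bigr)$, whose $k$-independent piece $CJ$ must be combined with the identity correction $-\tfrac{n}{2}(B+C)$; using $n=2J+m$ this recombination is precisely what turns the naive $-nC/2$ into the $-mC/2$ appearing in the $C$-term of $d_k$, while the $B$-term retains $-n$. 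Once this bookkeeping is carried out carefully, grouping the contributions reproduces $d_k$, $u_k$, $v_k$ and the coefficients $A,A',B,C,D$ exactly as stated.
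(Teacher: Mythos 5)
Your proposal is correct, and it reaches (\ref{eq:penta-block}) by a genuinely different route than the paper. The paper's own proof is a two-line reduction: it writes $\mathcal{B}=AS_z+A'S_x+BS_{zz}+CS_{xx}+DS_{xz}+\beta_c\mathbbm{1}$ in terms of the collective one- and two-body sums and then invokes Theorem \ref{thm:bigcalculation} of Appendix \ref{AppB}, where the matrix elements of each $S_k$ and $S_{kl}$ in the basis (\ref{eq:basisblocks}) are computed from scratch in the computational basis, via binary words, Hamming weights, and the explicit change-of-basis matrix $p$, with the singlet bookkeeping carried by the $\Delta(\mathbf{i}')$ factors. You instead absorb the $i=j$ terms through $(\mathcal{M}_k^{(i)})^2=\mathbbm{1}$ and $\{\mathcal{M}_0^{(i)},\mathcal{M}_1^{(i)}\}=2\cos(\varphi-\theta)\mathbbm{1}$, rewrite $\mathcal{B}=\beta_c\mathbbm{1}-\tfrac{n}{2}(B+C)\mathbbm{1}+2AJ_z+2A'J_x+2BJ_z^2+2CJ_x^2+2D\{J_z,J_x\}$, and use the annihilation of each singlet by its pairwise collective spin so that on the $J$-th block $J_z,J_x$ act as the spin-$J$ generators with $\ket{D^k_{2J}}$ identified with $\ket{J,J-k}$ (legitimate because that span is invariant under $J_z$ and $J_\pm$, so restrictions of products are products of restrictions); the three diagonals then follow from textbook ladder-operator formulas. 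I checked the arithmetic: $B+C=\gamma+\varepsilon+2\delta\cos(\varphi-\theta)$ is exactly the c-number your resummation produces; the diagonal of $2CJ_x^2$ is $C[J(J+1)-(J-k)^2]=C[k(2J-k)+J]$ and $CJ-\tfrac{n}{2}C=-\tfrac{m}{2}C$ by $n=2J+m$, which yields the $[2k(2J-k)-m]C/2$ term of $d_k$ while the $B$-term keeps $-nB/2$ — precisely the delicate point you flagged; and $\langle J,J-k-1|\{J_z,J_x\}|J,J-k\rangle=\tfrac{2J-1-2k}{2}\sqrt{(2J-k)(k+1)}$ together with $\tfrac{1}{4}\langle J_-^2\rangle$ reproduces $u_k$ and $v_k$ exactly. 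As for what each approach buys: yours is shorter and conceptually transparent — the pentadiagonal structure, and its generalization to $(2K+1)$-diagonal blocks for $K$-body inequalities, is immediate from $\mathcal{B}$ being a degree-$2$ polynomial in $J_x,J_z$ — whereas the paper's heavier computation is more general, since Theorem \ref{thm:bigcalculation} also delivers the blocks of $S_y$, $S_{xy}$, $S_{yz}$, $S_{yy}$ (needed when one drops the restriction to real observables, as in the discussion of optimal Dicke-state measurements in Section \ref{sec:Dicke}) and verifies the irrep reduction explicitly rather than citing it.
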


Observe that (\ref{eq:penta-block}) ensures that every block of the Bell operator is pentadiagonal, and this comes from the fact that we are considering a symmetric Bell inequality with \textit{at most} $2$-body correlators. From the proof of Theorem \ref{thm:bigcalculation} in Appendix B, it is clear that if the Bell inequality involved $3$-body correlators, it would be heptadiagonal; if it involved $4$-body correlators, enneadiagonal, and so on.
This argument also shows that nonlocality of some of the highly entangled multipartite states cannot be revealed through Bell inequalities involving only few-body correlations: if we consider the GHZ state $(1/\sqrt{2})(\ket{0}^{\ot n}+\ket{1}^{\ot n})$ which is supported on the last block of (\ref{eq:blocksstate}) corresponding to $J=n/2$, the density matrix of the state has only four terms, two of them lie in the diagonal and the other two are the coherences corresponding to $\ket{D^0_n}\!\bra{D^n_n}$, $\ket{D^n_n}\!\bra{D^0_n}$, which can only be reached through a full-body correlator. Hence, for a symmetric $2$-input $2$-output inequality, which is not full-body, the GHZ state is indistinguishable from the classical mixture $(1/2)(\ket{0}\!\bra{0}^{\otimes n}+\ket{1}\!\bra{1}^{\otimes n})$. However, there are other states, which show nonlocality in a robust way, some of them even experimentally realizable, as we shall discuss it in further sections.

The form (\ref{eq:penta-block}) also has important numerical implications, as the Bell operator can be easily stored in a sparse matrix. This allowed us to show nonlocality of genuinely multipartite entangled states of more than $10^4$ qubits \cite{Science}. Recall that, in order to find the largest quantum violation with these settings (\textit{i.e.}, quantum state and measurements) of a particular Bell inequality, we have to vary the angles $\varphi, \theta$ in (\ref{eq:penta-block}) and look for the smallest eigenvalue of the $\lfloor n/2\rfloor$ possible blocks. Actually, as it is proven in the following fact, the spectrum of the Bell operator depends only on the difference between $\varphi$ and $\theta$.
\begin{thm}\label{thm:unitaryrotation}
Let $\mathcal{B}(\varphi,\theta)$ be the permutationally invariant Bell operator corresponding to the measurements given in Eq. (\ref{eq:same_measurements}). Then, for any $c\in\mathbbm{R}$, $\mathcal{B}(\varphi+c,\theta+c)=[U(c)]^{\ot n}\mathcal{B}(\varphi,\theta)[U(c)^{\dagger}]^{\ot n}$ with $U(c)$ being a unitary operation given by
 \begin{equation}
 \label{eq:unitary}
  U(c):=\left(
  \begin{array}{rr}
  \cos(c/2)&-\sin(c/2)\\
  \sin(c/2)& \cos(c/2)
  \end{array}
  \right).
 \end{equation}
\end{thm}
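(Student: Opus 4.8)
The plan is to reduce the claimed operator identity to a purely single-site statement: that conjugation by $U(c)$ sends each measurement operator at angle $\varphi$ (resp.\ $\theta$) to the same operator at angle $\varphi+c$ (resp.\ $\theta+c$). Once this is established, the full $n$-party identity follows because the Bell operator $\mathcal{B}(\varphi,\theta)$ associated with (\ref{eq:2bodySymgeneral}) is, by (\ref{eq:def1bodysym}) and (\ref{eq:def2bodysym}), a linear combination---with coefficients $\beta_c,\alpha,\beta,\gamma,\delta,\varepsilon$ that do \emph{not} depend on the angles---of the identity and of one- and two-body terms $\mathcal{M}_k^{(i)}$ and $\mathcal{M}_k^{(i)}\mathcal{M}_l^{(j)}$ with $i\neq j$, each supported on at most two distinct sites.

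First I would compute the single-site action. Writing $U(c)=\cos(c/2)\mathbbm{1}-\ii\sin(c/2)\sigma_y=\exp(-\ii(c/2)\sigma_y)$, one recognizes $U(c)$ as the rotation by angle $c$ about the $y$-axis of the Bloch sphere, so that
\begin{equation}
U(c)\sigma_z U(c)^\dagger=\cos c\,\sigma_z+\sin c\,\sigma_x, \qquad U(c)\sigma_x U(c)^\dagger=\cos c\,\sigma_x-\sin c\,\sigma_z.
\end{equation}
Applying this to $\mathcal{M}_0^{(i)}=\cos\varphi\,\sigma_z^{(i)}+\sin\varphi\,\sigma_x^{(i)}$ and collecting the $\sigma_z$ and $\sigma_x$ coefficients via the angle-addition identities $\cos\varphi\cos c-\sin\varphi\sin c=\cos(\varphi+c)$ and $\cos\varphi\sin c+\sin\varphi\cos c=\sin(\varphi+c)$, I obtain $U(c)\mathcal{M}_0^{(i)}(\varphi)U(c)^\dagger=\mathcal{M}_0^{(i)}(\varphi+c)$, and identically $U(c)\mathcal{M}_1^{(i)}(\theta)U(c)^\dagger=\mathcal{M}_1^{(i)}(\theta+c)$.

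Finally I would lift this to the $n$-party operator. Since $[U(c)]^{\ot n}$ factorizes as a product over sites, conjugation of a single-body term $\mathcal{M}_k^{(i)}$ only touches the $i$-th tensor factor and yields $\mathcal{M}_k^{(i)}$ with its angle shifted by $c$; for a two-body term $\mathcal{M}_k^{(i)}\mathcal{M}_l^{(j)}$ with $i\neq j$ the two factors are transformed independently, giving the product of the two shifted operators, while the identity term is left invariant. Consequently each $\mathcal{S}_k$ and $\mathcal{S}_{kl}$ is mapped to its counterpart with all angles shifted by $c$, and since the prefactors are unchanged, $[U(c)]^{\ot n}\mathcal{B}(\varphi,\theta)[U(c)^\dagger]^{\ot n}=\mathcal{B}(\varphi+c,\theta+c)$, as claimed. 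The only point requiring care is the two-body term: one must use $i\neq j$ to guarantee that the two single-site rotations act on disjoint tensor factors and hence transform independently through the product unitary; the rest is bookkeeping of the signs in the two rotation formulas. As an immediate consequence, since unitary conjugation preserves the spectrum, choosing $c=-\varphi$ shows that the spectrum of $\mathcal{B}(\varphi,\theta)$ coincides with that of $\mathcal{B}(0,\theta-\varphi)$ and thus depends only on the difference $\theta-\varphi$.
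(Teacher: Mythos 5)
Your proof is correct and follows essentially the same route as the paper's: establish the single-site identity $U(c)\mathcal{M}_k(\cdot)U(c)^{\dagger}=\mathcal{M}_k(\cdot+c)$ (which the paper asserts as a ``direct check'' and you verify explicitly via the identification $U(c)=e^{-\ii(c/2)\sigma_y}$ and the angle-addition formulas), then lift it to the $n$-party Bell operator through the site-wise action of $[U(c)]^{\ot n}$ on the one- and two-body terms. The spectral corollary about dependence only on $\theta-\varphi$ is likewise exactly the paper's concluding remark.
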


Theorem \ref{thm:unitaryrotation} implies that the spectrum of $\mathcal{B}(\varphi,\theta)$ depends only on $\varphi-\theta$. This fact will later be used to simplify our analysis of the maximal quantum violation of the corresponding Bell inequalities. Another simplification comes from (\ref{eq:penta-block}) and the form of the constant $C$ given in Eq. (\ref{eq:def-C}). Concretely, there is a convenient choice of $\varphi$ and $\theta$ for which $C=0$, which makes the Bell operator $\mathcal{B}(\varphi,\theta)$ tridiagonal. Such form is given in the following corollary:
%

\begin{thm}
\label{thm:corollary}
Let $\kappa = \theta - \varphi$. Then the Bell operator ${\cal B}(\varphi,
\theta)$ is tridiagonal when $$\theta_{\pm} = \arctan\left(\frac{\gamma \sin
\kappa}{\gamma \cos \kappa +  \delta \pm \sqrt{\delta^2-\gamma
\varepsilon}}\right).$$
\end{thm}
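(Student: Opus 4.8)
The plan is to read tridiagonality straight off the block structure of Theorem \ref{thm:Blocks} and then reduce the whole statement to solving a single scalar equation. Recall that each block ${\cal B}_J(\varphi,\theta)$ is pentadiagonal, the only entries beyond the tridiagonal band being the second off-diagonals $v_k=C\sqrt{(2J-k)(2J-k-1)(k+1)(k+2)}/2$, where $C=\gamma\sin^2\varphi+2\delta\sin\varphi\sin\theta+\varepsilon\sin^2\theta$ is the constant defined in (\ref{eq:def-C}). The combinatorial prefactor multiplying $C$ is strictly positive for every $0\leq k\leq 2J-2$, so all the $v_k$ vanish simultaneously, and in every block $J$, if and only if $C=0$. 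Hence the task collapses to characterizing the angles for which $C=0$: at those angles only the $d_k$ and $u_k$ survive and ${\cal B}(\varphi,\theta)$ is tridiagonal.

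Next I would solve $C=0$. Viewing it as a quadratic equation in $\sin\varphi$ with $\sin\theta$ treated as a parameter, $\gamma\sin^2\varphi+2\delta\sin\theta\,\sin\varphi+\varepsilon\sin^2\theta=0$, the quadratic formula yields $\sin\varphi=\frac{-\delta\pm\sqrt{\delta^2-\gamma\varepsilon}}{\gamma}\,\sin\theta$. This is where the discriminant $\delta^2-\gamma\varepsilon$ enters; note that for the particular class of Theorem \ref{thm:class} one has $\delta^2-\gamma\varepsilon=0$, so $C$ is a perfect square and the two roots collapse into a single one.

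Finally I would trade the two angles for their difference. By Theorem \ref{thm:unitaryrotation} the spectrum of ${\cal B}(\varphi,\theta)$ depends only on $\kappa=\theta-\varphi$, so among all pairs sharing a given $\kappa$ I am free to pick the representative realizing $C=0$; this is legitimate precisely because adding a common constant to $\varphi$ and $\theta$ is a global unitary rotation leaving the spectrum invariant. Setting $\varphi=\theta-\kappa$ and expanding $\sin\varphi=\sin\theta\cos\kappa-\cos\theta\sin\kappa$, I substitute into $\sin\varphi=\frac{-\delta\pm\sqrt{\delta^2-\gamma\varepsilon}}{\gamma}\sin\theta$ and collect terms to get $\sin\theta\big(\cos\kappa-\tfrac{-\delta\pm\sqrt{\delta^2-\gamma\varepsilon}}{\gamma}\big)=\cos\theta\sin\kappa$, that is $\tan\theta=\frac{\gamma\sin\kappa}{\gamma\cos\kappa+\delta\mp\sqrt{\delta^2-\gamma\varepsilon}}$, which is exactly the claimed $\theta_{\pm}=\arctan(\cdots)$ after relabelling the two signs.

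The computation is short, so there is no serious obstacle; the only points needing care are (i) confirming that the prefactor of $C$ in $v_k$ never vanishes over $0\leq k\leq 2J-2$, so that $C=0$ is genuinely equivalent to tridiagonality of \emph{every} block rather than of a single one, and (ii) checking that the ``quadratic in $\sin\varphi$'' form and the final rationalized $\arctan$ form describe the same pair of solutions --- most transparently by verifying that the two roots of the quadratic in $\tan\theta$ produced by the substitution share the sum and product of the pair $\tan\theta_{\pm}$ in the statement. A closing remark would observe that $\theta_{\pm}$ is real only when $\delta^2\geq\gamma\varepsilon$, a condition automatically met by the inequalities of interest.
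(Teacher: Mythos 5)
Your proposal is correct and follows essentially the same route as the paper's proof: reduce tridiagonality to $C=0$ via the pentadiagonal block structure of Theorem \ref{thm:Blocks}, solve the quadratic in $\sin\varphi$ to obtain $\sin\varphi=\frac{-\delta\pm\sqrt{\delta^2-\gamma\varepsilon}}{\gamma}\sin\theta$, then substitute $\varphi=\theta-\kappa$ and solve for $\tan\theta$, with the same relabelling of the $\pm$ signs. The extra care you flag (strict positivity of the combinatorial prefactor in $v_k$, explicit appeal to Theorem \ref{thm:unitaryrotation}) is sound and merely makes explicit what the paper leaves implicit; the only point the paper adds that you omit is the degenerate case $\kappa=m\pi$ (equivalently $\sin\theta=0$, forcing $\mathcal{M}_0=\pm\mathcal{M}_1$), which must be excluded before dividing by $\sin\theta$.
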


\subsection{Analytical class of states}

Here we present an analytical class of states for which the Bell inequalities from the family  (\ref{FuetdeVic}) are violated. We also show that this analytical class of states converges to the
 maximal quantum violation when $n$ becomes large, providing strong numerical evidence for that. Such a state is a Gaussian superposition of Dicke states with variance growing as $O(\sqrt{n})$ and centered at the minimum of the function $d_k$ defined in Eq. (\ref{dk}). We have found that for this class the maximal quantum violation is achieved with a state from the symmetric block ($J=n/2$), although other blocks also lead to quantum violation. In other scenarios, such as the ($n,3,2$) scenario, we have also found some Bell inequalities maximally violated with states supported on the lowest $J$ blocks.

\begin{thm}
\label{thm:qv}

 Let $\ket{\psi_n}=\sum_{k=0}^n \psi_k^{(n)} \ket{D^k_n}$ be a symmetric $n$-qubit
state with the following coefficients in the Dicke basis:
$$\psi_k^{(n)}:=\frac{e^{-(k-\mu)^2/4\sigma}}{\sqrt[4]{2\pi \sigma}},$$ where
$\mu :=n/2+A/(2B-C)$ and $e^{-1}/2\pi\ll \sigma \ll n$.
Moreover, let $\theta(\varphi)$ be the solution of Eq. (\ref{thm:corollary}) guaranteeing that $C=0$.
Then,
%
%
 $$\bra{\psi_n}{\cal B}_{n/2}(\varphi,
\theta(\varphi))\ket{\psi_n}=\left[\frac{\beta_c}{n}-\frac{B}{2} +
e^{-1/8\sigma}\left(A'-\frac{AD}{B}\right)\right] n + \left[2B \sigma
-\frac{A^2}{2B} +e^{-1/8\sigma}\left(A'-\frac{AD}{B}\right)\right] + o(\sigma
n^{-1}).$$

\end{thm}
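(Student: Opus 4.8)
The plan is to exploit the tridiagonal structure that $C=0$ forces on the symmetric block and then to read the two resulting sums off as Gaussian moments. First I would invoke Theorem \ref{thm:corollary}: choosing $\theta=\theta(\varphi)$ so that $C=0$ annihilates every $v_k$ in Theorem \ref{thm:Blocks}, whence $\mathcal{B}_{n/2}(\varphi,\theta)$ is real symmetric tridiagonal in the Dicke basis. Writing $\psi_k:=\psi_k^{(n)}$, the target then splits into a diagonal and a single off-diagonal sum,
\[ \bra{\psi_n}\mathcal{B}_{n/2}\ket{\psi_n}=\sum_{k=0}^{n}\psi_k^2\,d_k+2\sum_{k=0}^{n-1}\psi_k\psi_{k+1}\,u_k, \]
with $d_k,u_k$ taken from Theorem \ref{thm:Blocks} at $J=n/2$, $m=0$, $C=0$.

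Next I would set up the Gaussian approximation. The weights $\psi_k^2=(2\pi\sigma)^{-1/2}e^{-(k-\mu)^2/2\sigma}$ form a discrete Gaussian of mean $\mu$ and variance $\sigma$; by Poisson summation its normalization and low moments agree with the continuous Gaussian up to corrections of size $e^{-2\pi^2\sigma}$, which is exactly what the hypothesis $e^{-1}/2\pi\ll\sigma$ controls, while $\sigma\ll n$ keeps the bulk away from the truncation at $k=0,n$ so that boundary effects are exponentially negligible. Completing the square in the product gives $\psi_k\psi_{k+1}=e^{-1/8\sigma}(2\pi\sigma)^{-1/2}e^{-(k+1/2-\mu)^2/2\sigma}$, a shifted Gaussian of the same variance; this identity is the source of every $e^{-1/8\sigma}$ factor appearing in the statement.

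For the diagonal sum I would substitute $k=n/2+r$ into $d_k$ (with $m=0$, $C=0$); since $\mu=n/2+A/(2B)$ is precisely the center of $d_k$, the term linear in $r$ is absorbed by the offset of the Gaussian, leaving $d_k=\beta_c-nB/2+2Br^2-A^2/(2B)$ after completing the square. Using $\langle r\rangle=A/(2B)$ and $\langle (r-A/(2B))^2\rangle=\sigma$ then yields $\beta_c-nB/2+2B\sigma-A^2/(2B)$, which reproduces all the terms of the statement not carrying an $e^{-1/8\sigma}$. For the off-diagonal sum I would expand $u_k=[A'+(n-1-2k)D]\sqrt{(n-k)(k+1)}$ about $k=n/2$, so the bracket becomes $A'-D-2rD$ and $\sqrt{(n-k)(k+1)}=\tfrac n2+\tfrac12-\tfrac{r(r+1)}n+\ldots$. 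Averaging against the shifted Gaussian, whose mean is $\mu-1/2$ so that $\langle r\rangle=A/(2B)-1/2$, collapses the $D$-dependent part of the bracket to $A'-AD/B$, giving the leading contribution $2e^{-1/8\sigma}\cdot\tfrac{n+1}{2}(A'-AD/B)=(n+1)e^{-1/8\sigma}(A'-AD/B)$.

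Adding the two sums and regrouping into an $n$-term, a constant term, and a remainder gives the claimed expression. I expect the main obstacle to be the bookkeeping in the off-diagonal step: one must carry the $\sqrt{(n-k)(k+1)}$ expansion to the order that produces both the $O(n)$ and the $O(1)$ pieces, verify the exact cancellation of the $D$-linear contributions, and bound the residual higher moments together with the Poisson–Euler–Maclaurin discretization error and the boundary truncation by the stated remainder. It is precisely here that the two-sided window $e^{-1}/2\pi\ll\sigma\ll n$ is needed, the lower bound taming the discreteness corrections and the upper bound the tail and curvature corrections.
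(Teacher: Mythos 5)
Your proposal is correct and follows essentially the same route as the paper's proof: choosing $\theta(\varphi)$ so that $C=0$ makes the block tridiagonal, splitting $\bra{\psi_n}\mathcal{B}_{n/2}\ket{\psi_n}$ into the diagonal sum $\sum_k\psi_k^2 d_k$ and the off-diagonal sum $2\sum_k\psi_k\psi_{k+1}u_k$, extracting the $e^{-1/8\sigma}$ factor from the product of adjacent Gaussian coefficients, evaluating against Gaussian moments, and Taylor-expanding $\sqrt{(n-k)(k+1)}$ to obtain the cancellation $A'-2D\langle k'\rangle\to A'-AD/B$. Your centered substitution $k=n/2+r$ with completing the square, and the Poisson-summation justification of the sum-to-integral step under $e^{-1}/2\pi\ll\sigma\ll n$, are only cosmetic refinements of the paper's direct Hermite-moment computation.
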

Note that the choice of $\mu$ is motivated by the following fact that $d_k$ is a
quadratic function in $k$ with minimum at $k=\mu$. Hence, if ${\cal B}_{n/2}(\varphi, \theta)$ were diagonal, $d_{[\mu]}$, where $[\cdot]$ is the rounding function, would be its minimal eigenvalue. However, since this value is always positive for the family of inequalities described in Theorem \ref{thm:class}, we must make the off-diagonal elements play some role in order to achieve quantum violation.

\ \\
\noindent{\bf Example.}
Let us illustrate Theorem \ref{thm:qv} with an example. Consider the Bell inequality presented in Ref. \cite{Science}, which is a particular case of the class (\ref{FuetdeVic}) with the choice of parameters $x=y=1$, $\sigma = \pm = -1$ and $\mu = 0$, which reads
\begin{equation}
-2{\cal S}_0 + \frac{1}{2}{\cal S}_{00} - {\cal S}_{01} + \frac{1}{2}{\cal
S}_{11} + 2n\geq 0.
 \label{eq:exampleineq}
\end{equation}
We have that $C=(\sin \varphi - \sin \theta)^2.$
Hence, the solution of $C=0$ which makes ${\cal M}_0 \neq \pm {\cal M}_1$ is to
take $\varphi = \pi - \theta.$ Then,
the values of $A,A',B,C,D$ for $\alpha = -2, \beta = 0, \gamma = 1, \delta = -1,
\varepsilon = 1$ and $\varphi = \pi - \theta$ are

\begin{equation}
\begin{array}{c}
\begin{array}{ccc}
A=2\cos \theta, & A'=-2\sin \theta, & B=4\cos^2 \theta,
\end{array}\\
\begin{array}{cc}
C=0&D=0.
\end{array}
\end{array}
\nonumber
\end{equation}
Therefore, the form of ${\cal B}_{n/2}(\pi - \theta, \theta)$
is then given by
\begin{eqnarray}
d_k&=& 2n(1+\cos \theta + (n-1)\cos^2\theta)-[4\cos\theta (1+2n\cos\theta)]k + 8k^2 \cos^2\theta ,\nonumber\\
u_k&=& -2\sin \theta \sqrt{(n-k)(k+1)},\nonumber\\
v_k&=&0.\nonumber
\end{eqnarray}
Hence, the class of states proposed has parameters $$\mu_n =
\frac{n}{2}+\frac{1}{4\cos \theta},$$
and $\sigma_n \in \Theta(\sqrt{n})$ such that it fulfils the requirements for the validity of the
approximations of Theorem \ref{thm:qv}.
The asymptotic expectation value is then given by Theorem \ref{thm:qv}:
\begin{eqnarray}
\bra{\varphi_n}{\cal B}_{n/2}(\pi - \theta, \theta)\ket{\varphi_n}
\simeq \left(\frac{\beta_c}{n}-\frac{B}{2} + e^{-1/8\sigma}A'\right) n +
\left(2B \sigma -\frac{A^2}{2B} +e^{-1/8\sigma}A'\right) + o(\sigma n^{-1}).
\nonumber
\end{eqnarray}
To find the maximal value of the above expression let us approximate its leading term by
$$\frac{\beta_c}{n} - \frac{B}{2} + e^{-1/8\sigma}A' \simeq
\frac{\beta_c}{n}-\frac{B}{2}+A' = 2(1-\cos^2 \theta - \sin \theta),$$
and notice that the latter attains its maximum for $\theta^* \in \{\pi/6, 5\pi/6\}$.

As a result, the quantum violation of inequality (\ref{eq:exampleineq}) by our analytical class of states
$\ket{\psi_n}$ divided by its classical bound reads (let us take $\sigma = \sqrt n$ for simplicity)
\begin{eqnarray}
\frac{1}{\beta_c}\bra{\varphi_n}{\cal B}_{n/2}(\pi/6, 5\pi/6)\ket{\varphi_n}
\simeq -\frac{1}{4} + 3n^{-1/2} -\frac{3}{4}n^{-1} + o(n^{-3/2}). \nonumber
\end{eqnarray}
Hence, the quantum violation exists for any $n$ and it tends to $-1/4$ when
compared to the classical bound as $n\to \infty$.

\subsection{Accuracy of analytical results}
Let us now compare the analytical results provided above to the maximal
quantum violation of the Bell inequality (\ref{eq:exampleineq}) obtained numerically.
For this purpose, let us denote as $\ket{\psi_n^{\min}}$ the state maximally violating (\ref{eq:exampleineq}). As we have discussed, it is a superposition of Dicke states, and thus can be written as
\begin{equation}
 \label{eq:gaussiansuperposnumeric}
 \ket{\psi_n^{\min}}=\sum_{k=0}^nc_k^{(n)}\ket{D_n^k},
\end{equation}
where $c_k^{(n)}$ are real coefficients, because the Bell operator
is taken to be a real symmetric matrix. Figure \ref{fig:osc} shows
that $c_k^{(n)}$ display an interesting regularity when plotted
versus $k$; this corresponds to the case ${\cal B}_{n/2}(0,\theta
- \varphi)$, where $\theta- \varphi$ has been optimized
numerically. However, if $\varphi$ is such that $C=0$, then the
form of the $c_k^{(n)}$, now denoted $\alpha_k^{(n)}$, is
displayed in Figure \ref{fig:gauss} and compared with the form
given in Theorem \ref{thm:qv}, where we can see a good agreement.
The relative difference between the numerical and the analytical
state giving maximal quantum violation is compared in Figure
\ref{fig:inset}. Observe that Theorem \ref{thm:unitaryrotation}
ensures that both $c_k^{(n)}$ and $\alpha_k^{(n)}$ lead to the
same quantum violation; one just has to rotate the measurements
accordingly.

\begin{figure}[]
\centering\includegraphics[width=0.4\columnwidth]{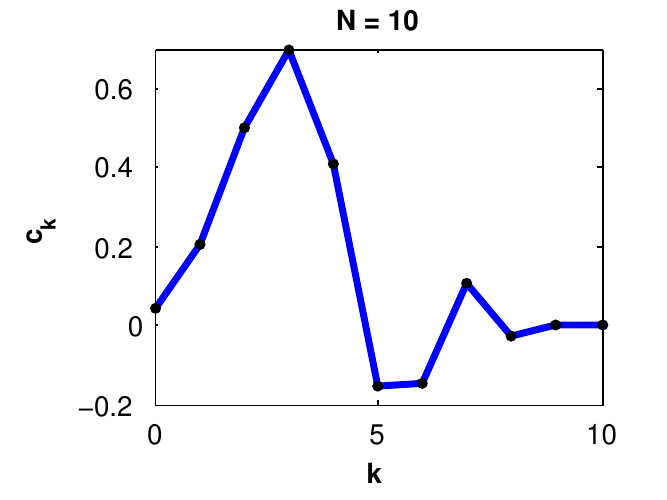}
\includegraphics[width=0.4\columnwidth]{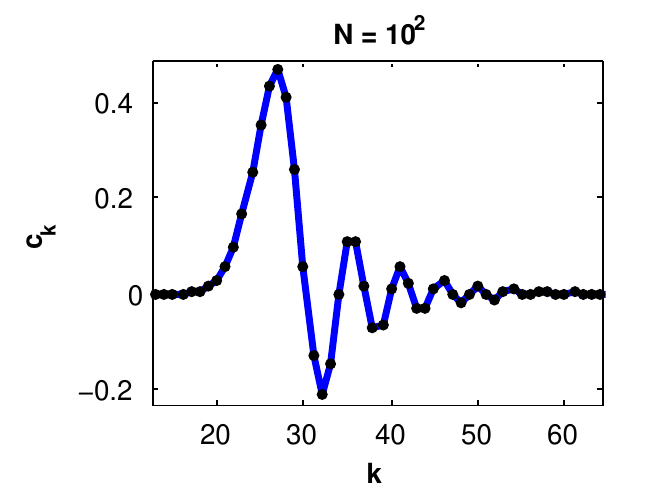}\\
\includegraphics[width=0.4\columnwidth]{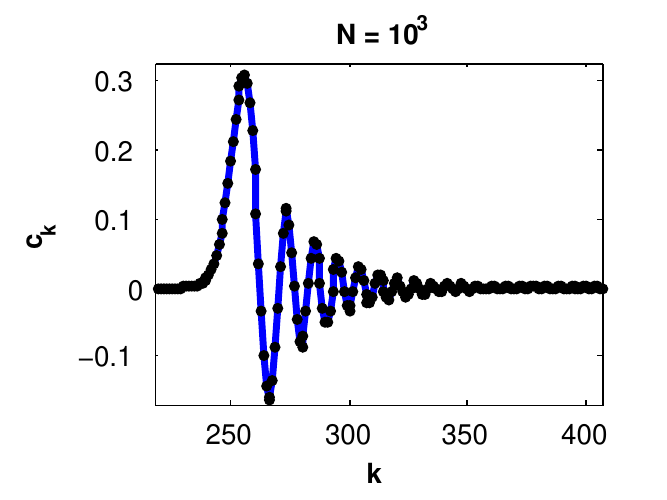}
\includegraphics[width=0.4\columnwidth]{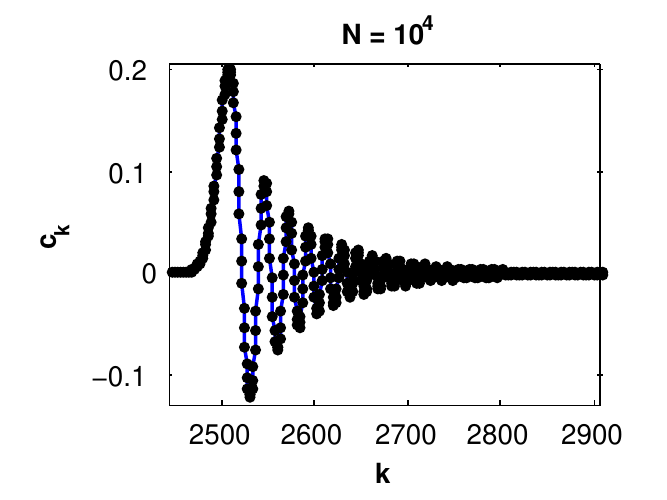}
\caption{(Color online) Distributions of $c_k^{(n)}$ versus $k$ for different
values of $n=10,10^2,10^3,10^4$. $c_k^{(n)}$ are the coefficients of the eigenstate corresponding to the minimal eigenvalue of ${\cal B}_{n/2}(0,\theta-\varphi)$, for the optimal $\theta- \varphi$. The black dots correspond to the values of
$c_k^{(n)}$ and the blue line connecting them is added to better show
how they behave with $k$.}\label{fig:osc}
\end{figure}

\begin{figure}[!h]
\centering
\includegraphics[width=0.4\columnwidth]{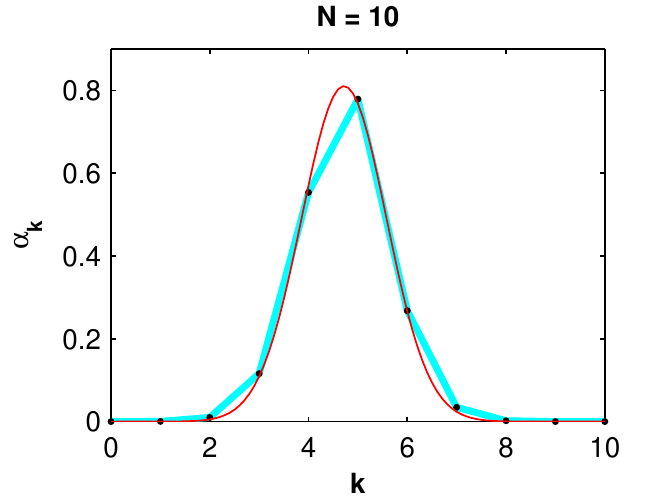}
\includegraphics[width=0.4\columnwidth]{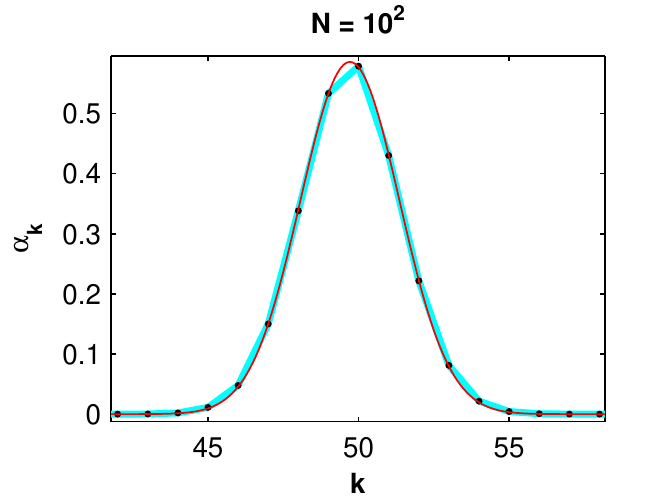}\\
\includegraphics[width=0.4\columnwidth]{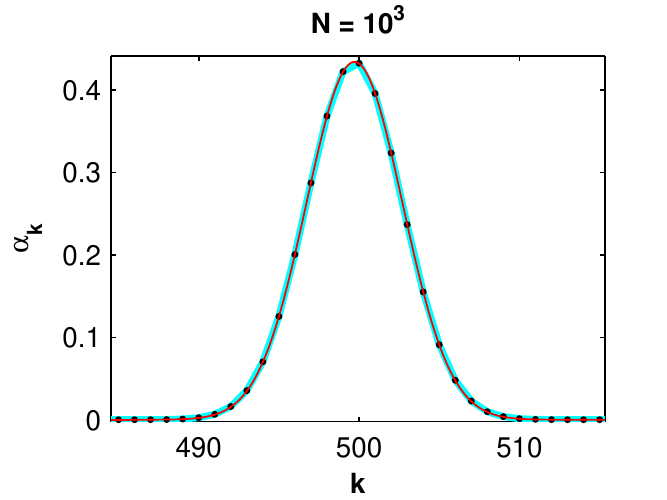}
\includegraphics[width=0.4\columnwidth]{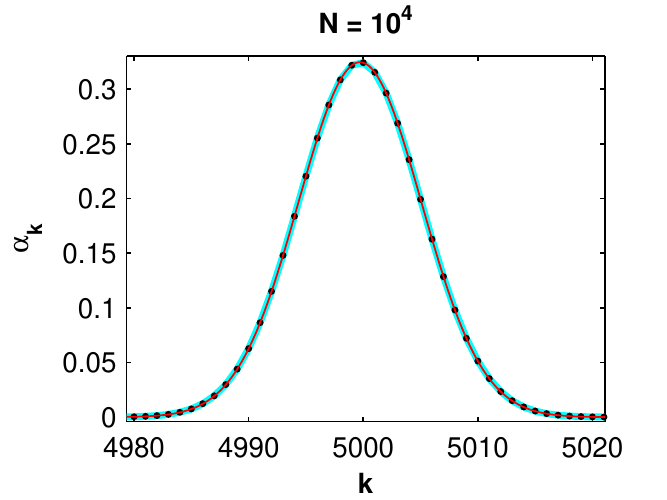}
\caption{(Color online) The black points represent values of $\alpha_k^{(n)}$
for $n=10,10^2,10^3,10^4$. The cyan line is added
to mark better their behavior with $k$. The red lines present square roots of
the Gaussian distributions of Theorem \ref{thm:qv} with
means $\mu_n$ and variances $\sigma_n$ chosen so that
the distributions fit the points. Their explicit values are
$\mu_n=n/2+(1/4\cos\theta_n)$, where $\theta_{10}=2.6672$,
$\theta_{10^2}=2.6334$, $\theta_{10^3}=2.6231$,
$\theta_{10^4}=2.6180$, and $\sigma_{10}=0.4049$, $\sigma_{10^2}=1.3935$, $\sigma_{10^3}=4.5109$, and
$\sigma_{10^4}=14.379$, respectively.
Noticeably, already for $n=10$ the red line matches the points very
well.}\label{fig:gauss}
\end{figure}

\begin{figure}[]
\centering
\includegraphics[width=1\columnwidth]{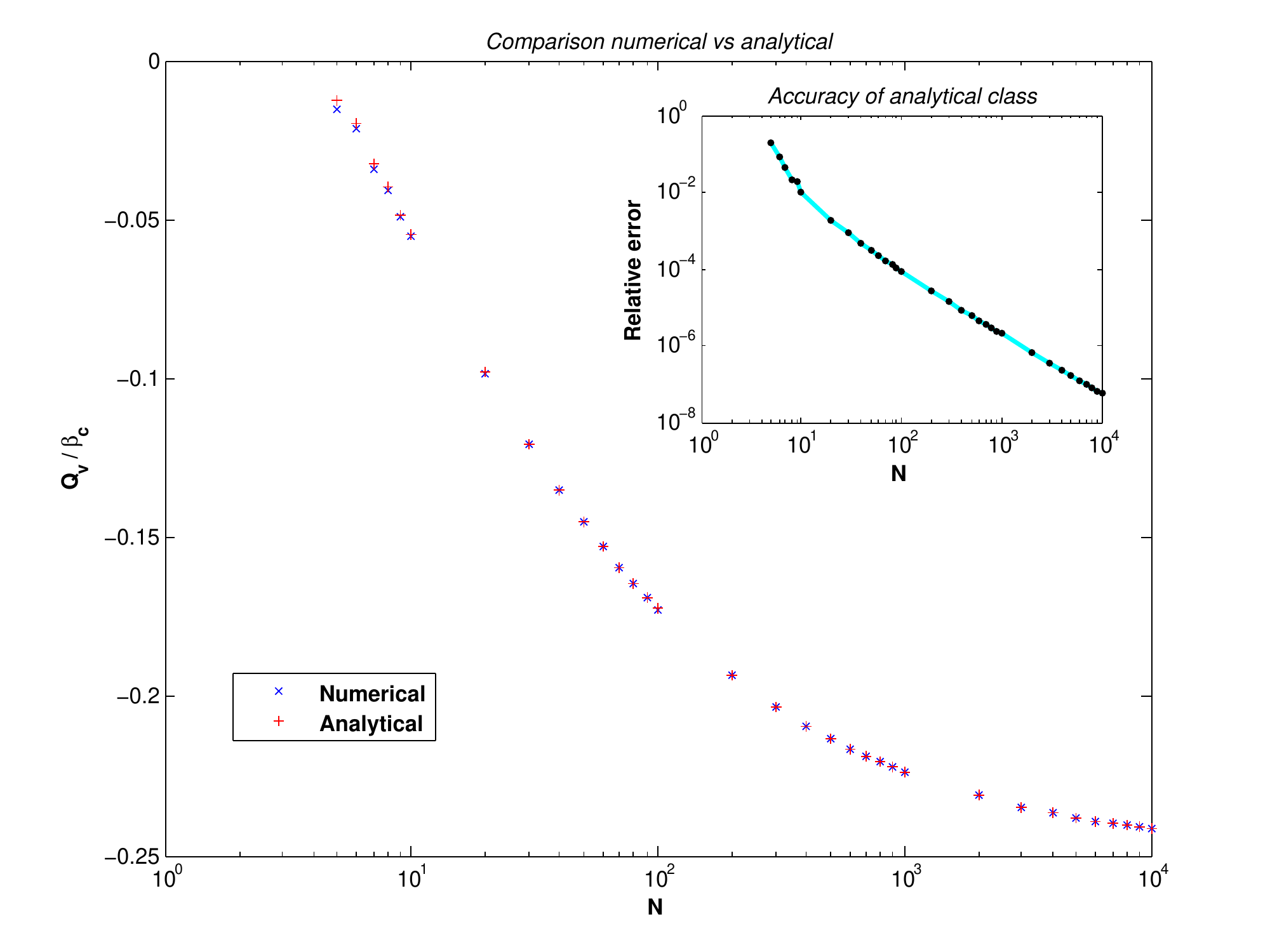}
\caption{(Color online) Maximal quantum violations of inequality (\ref{eq:exampleineq}) and obtained from diagonalization of the symmetric block ${\cal B}_{n/2}$ (blue crosses)
are compared to violations by the states defined in Theorem \ref{thm:qv} (red crosses). All means
and variances in the Gaussian distributions were chosen so that
$\ket{\psi_n^{\min}}$ maximally violate (\ref{eq:exampleineq}).
The inset contains the relative difference between numerical and analytical violations.
The points in the inset are the actual values of $n$ for which we performed the calculations and the cyan line is added just to help visualize the tendency. Clearly, already for $n=10$, the violations are almost the same. }\label{fig:inset}
\end{figure}

\subsection{The $2$-body reduced state}
Using the fact that, when we trace out a number of subsystems of a quantum state supported on the symmetric space, the reduced density matrix of the state is the same regardless of which systems are traced out, we are able to give an equivalent way of computing the maximal quantum violation of a permutationally invariant Bell inequality (\ref{eq:2bodySymgeneral}) provided all parties perform the same pair of measurements.
Indeed, if we denote by $\rho_2$ the two-body reduced state of $\ket{\psi}$, where $\ket{\psi}$ maximally violates inequality (\ref{eq:2bodySymgeneral}) with measurements ${\cal M}_0$ and ${\cal M}_1$, then we only need to multiply the expectation values $\langle{\cal M}_k\otimes {\cal M}_l \rangle_{\rho_2}$ or $\langle{\cal M}_k\otimes {\mathbbm 1}_2 \rangle_{\rho_2}$ by the times they appear in ${\cal S}_{kl}$ or in ${\cal S}_k$. By doing so, the quantum expectation value of (\ref{eq:2bodySymgeneral}) simplifies to
\begin{equation}
\langle\psi|\mathcal{B}|\psi\rangle=
\beta_c + n\left(\alpha \langle{\cal M}_0\otimes {\mathbbm 1}_2\rangle_{\rho_2}+\beta\langle{\cal M}_1\otimes {\mathbbm 1}_2\rangle_{\rho_2}\right)+n(n-1)\left(\frac{\gamma}{2}\langle{\cal M}_0\otimes {\cal M}_0 \rangle_{\rho_2}+\delta\langle{\cal M}_0\otimes {\cal M}_1 \rangle_{\rho_2}+\frac{\varepsilon}{2}\langle{\cal M}_1\otimes {\cal M}_1 \rangle_{\rho_2}\right)
 \label{eq:quantumexpectationfromreducedstate}
\end{equation}
And if the value of (\ref{eq:quantumexpectationfromreducedstate}) is negative, nonlocality is revealed.

Let us first show how to compute this reduced state:
\begin{thm}
Let $\rho$ be the density matrix of an $n$-qubit state supported on the symmetric space. Let $\rho_{\cal S}$ be the representation of $\rho$ in the symmetric space. Let us denote by $\mathrm{Tr}_{n-d}(\rho)$ the density matrix of $\rho$ after tracing out $n-d$ subsystems. This reduced density matrix is given by
\begin{equation}
\left(\mathrm{Tr}_{n-d}(\rho)\right)^{\mathbf i'}_{\mathbf j'}= \sum_{k=0}^{n-d}\frac{{n-d \choose k}(\rho_{\cal S})^{k+|\mathbf{i}'|}_{k+|\mathbf{j}'|}}{\sqrt{{n \choose k + |\mathbf{i}'|}{n \choose k+ |\mathbf{j}'|}}},
\label{eq:reduced-state}
\end{equation}
where $0 \leq {\mathbf{i'}}, {\mathbf{j'}}<2^d$, $\mathbf{i}'=i_0\ldots i_{d-1}$ and $\mathbf{j}'=j_0\ldots j_{d-1}$ are represented in binary and $|\mathbf{i}'|, |\mathbf{j}'|$ is their number of ones in this binary representation. Columns are indexed by $\mathbf{i'}$ and rows are indexed by $\mathbf{j'}$.
\label{thm:partialtrace}
\end{thm}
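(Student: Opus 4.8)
The plan is to reduce the statement to a direct computation in the Dicke basis, using nothing beyond the branching rule for Dicke states across a bipartition and the orthonormality of Dicke states. First I would invoke the fact recalled just before the theorem: since $\rho$ is supported on the symmetric subspace it is permutationally invariant, so $\Tr_{n-d}(\rho)$ is independent of which $n-d$ subsystems are discarded. I may therefore choose to trace out the \emph{last} $n-d$ qubits. Writing $\rho$ in the Dicke basis as $\rho=\sum_{k,l=0}^{n}(\rho_{\cal S})^k_l\,\ket{D^k_n}\!\bra{D^l_n}$, linearity reduces everything to evaluating $\Tr_{n-d}(\ket{D^k_n}\!\bra{D^l_n})$ and then taking computational-basis matrix elements.

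The key ingredient is the splitting (branching) identity
\[
\ket{D^k_n}=\binom{n}{k}^{-1/2}\sum_{p}\binom{d}{k-p}^{1/2}\binom{n-d}{p}^{1/2}\,\ket{D^{k-p}_d}\ot\ket{D^{p}_{n-d}},
\]
where $p$ runs over the admissible values $\max(0,k-d)\le p\le\min(k,n-d)$. I would prove it by a one-line counting argument: $\ket{D^k_n}$ is the normalized uniform superposition of the $\binom{n}{k}$ weight-$k$ strings; grouping these strings according to the number $p$ of excitations sitting in the last $n-d$ qubits, the $\binom{d}{k-p}\binom{n-d}{p}$ strings in each group assemble into $\ket{D^{k-p}_d}\ot\ket{D^{p}_{n-d}}$ with exactly the stated square-root multiplicities. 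Everything downstream rests on this identity.

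Substituting the splitting rule into $\ket{D^k_n}\!\bra{D^l_n}$ and tracing out the last block, the orthonormality $\Tr(\ket{D^{p}_{n-d}}\!\bra{D^{p'}_{n-d}})=\delta_{p,p'}$ forces the two traced-out excitation numbers to coincide, collapsing the double sum to a single one and yielding
\[
\Tr_{n-d}\!\left(\ket{D^k_n}\!\bra{D^l_n}\right)=\frac{1}{\sqrt{\binom{n}{k}\binom{n}{l}}}\sum_{p}\binom{n-d}{p}\sqrt{\binom{d}{k-p}\binom{d}{l-p}}\;\ket{D^{k-p}_d}\!\bra{D^{l-p}_d}.
\]
Finally I would take the matrix element $\bra{\mathbf{i}'}\Tr_{n-d}(\rho)\ket{\mathbf{j}'}$. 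Using $\braket{\mathbf{i}'}{D^{j}_d}=\binom{d}{j}^{-1/2}$ when $|\mathbf{i}'|=j$ and $0$ otherwise, the factors $\sqrt{\binom{d}{k-p}\binom{d}{l-p}}$ cancel exactly against the normalizations of the Dicke bra and ket, while the two Kronecker constraints fix $k=|\mathbf{i}'|+p$ and $l=|\mathbf{j}'|+p$. Summing the surviving terms over $p$ and relabelling $p\to k$ reproduces (\ref{eq:reduced-state}).

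I expect the only real obstacle to be bookkeeping rather than conceptual: getting the three normalizations in the branching rule correct, checking that the $\binom{d}{\cdot}$ weights cancel cleanly upon projecting onto $\ket{\mathbf{i}'}$ and $\ket{\mathbf{j}'}$, and keeping the summation ranges consistent so that the range $0\le k\le n-d$ appearing in the statement matches the physical constraints $0\le p\le n-d$ and $0\le k-p,\,l-p\le d$ together with the Hamming weights $|\mathbf{i}'|,|\mathbf{j}'|\le d$. No deeper idea beyond the splitting rule and the orthonormality of Dicke states is needed.
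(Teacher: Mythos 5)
Your proof is correct, but it takes a genuinely different route from the paper's. The paper never leaves the computational basis: it expands $\rho_{\cal S}$ to the full $2^n$-dimensional space through the change-of-basis matrix $p$ introduced in the proof of Theorem \ref{thm:bigcalculation} (specialized to $m=0$), writes the partial trace as a sum over the discarded indices $\overline{\mathbf{i}'}$, and then performs pure Kronecker-delta bookkeeping, the combinatorial heart being the count $\sum_{\overline{\mathbf{i}'}}\delta(k-|\mathbf{i}'|-|\overline{\mathbf{i}'}|)=\binom{n-d}{k-|\mathbf{i}'|}$ together with an indicator function controlling the admissible range of $k$. You instead isolate the same combinatorics into a standalone lemma, the bipartite branching rule $\ket{D^k_n}=\binom{n}{k}^{-1/2}\sum_p \binom{d}{k-p}^{1/2}\binom{n-d}{p}^{1/2}\ket{D^{k-p}_d}\ot\ket{D^{p}_{n-d}}$, and use orthonormality of the Dicke states on the traced block to collapse the double sum at the operator level, obtaining an intermediate formula for $\Tr_{n-d}\bigl(\ket{D^k_n}\!\bra{D^l_n}\bigr)$ in the $d$-qubit Dicke basis before projecting onto $\ket{\mathbf{i}'}$, $\ket{\mathbf{j}'}$. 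Your cancellation of the $\binom{d}{\cdot}^{1/2}$ factors against the Dicke normalizations is exactly right, your constraints $k=|\mathbf{i}'|+p$ and $l=|\mathbf{j}'|+p$ reproduce the paper's step $\delta(k-|\mathbf{i}'|-l+|\mathbf{j}'|)$, the range bookkeeping is handled equivalently (vanishing binomials in place of the paper's indicator function), and the opening appeal to permutation invariance to trace out the last $n-d$ qubits is also the paper's first move. What your route buys is that the intermediate operator identity makes manifest that the reduced state is supported on the $d$-qubit symmetric space and is reusable for other Dicke-basis computations; what the paper's route buys is self-containment within the index machinery already set up for Theorem \ref{thm:bigcalculation}, with no auxiliary lemma needed. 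Both proofs rest on the identical counting of weight-$k$ strings split between retained and discarded qubits, so the difference is one of organization rather than substance, but it is a real structural difference.
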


Let us observe that the factor appearing in Eq. (\ref{eq:reduced-state}), i.e.,
\begin{equation}
f(n,k,d,|\mathbf{i}'|,|\mathbf{j}'|)={n-d \choose k}\left[{n\choose k + |\mathbf{i'}|}{n\choose k + |\mathbf{j'}|}\right]^{-1/2}
\end{equation}
%
%
is symmetric in $|\mathbf{i'}|$ and $|\mathbf{j'}|$ and it fulfills the geometric mean property
\begin{equation}
 f(n,k,d,|\mathbf{i'}|,|\mathbf{j'}|) = \sqrt{f(n,k,d,|\mathbf{i'}|,|\mathbf{i'}|)f(n,k,d,|\mathbf{j'}|,|\mathbf{j'}|)}.
 \label{eq:geometricmean}
\end{equation}
When exploring large values of $n$, from the point of view of numerics, it is important to have in mind that the combinatorial numbers appearing in $f$ grow very rapidly, affecting the formula given in Eq. (\ref{eq:reduced-state}) if they are not handled carefully. In particular, for $n\gtrsim 650$ they can take values greater than $10^{308}$, which is the storage limit for $64$-bits floating point arithmetic. It is therefore important to work with a simplified version of $f$ and here we present the cases for $d=1,2$, which are the ones we are going to use in the present context.
\begin{equation}
 f(n,k,1,0,0)=\frac{n-k}{n}, \qquad f(n,k,1,1,1)=\frac{k+1}{n}.
 \label{eq:f_for_d=1}
\end{equation}
\begin{equation}
 f(n,k,2,0,0)=\frac{n-k}{n}\frac{n-k-1}{n-1}, \quad f(n,k,2,1,1)=\frac{n-k-1}{n}\frac{k+1}{n-1}, \quad f(n,k,2,1,1)=\frac{k+1}{n}\frac{k+2}{n-1}.
 \label{eq:f_for_d=2}
\end{equation}
Using Eq. (\ref{eq:geometricmean}) we can find the rest of the values.

We are now ready to show what is the form of $\rho_2$ for the state given in Theorem \ref{thm:qv}.
\begin{thm}
 Let $\ket{\psi}$ be the state introduced in Theorem \ref{thm:qv}. Its two-body reduced state $\rho_2$ can be well approximated for large $n$ as
 \begin{equation}
  \rho_2 = \frac{1}{n(n-1)}\left(n^2\proj{+}^{\ot 2}
  +\frac{n}{2}
  \left(
  \begin{array}{cccc}
   -(1+2c)&-c&-c&1\\
   -c&0&0&c\\
   -c&0&0&c\\
   1&c&c&(2c-1)
  \end{array}
  \right)
  +o(n)
  \right),
  \label{eq:2bodyanalyticalreducedstate}
 \end{equation}
 where $c:=\mu-n/2$ and $\ket{+}:=(\ket{0}+\ket{1})/\sqrt{2}$.
 \label{thm:2bodyreduced}
\end{thm}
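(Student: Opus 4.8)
The plan is to compute the two-body reduced density matrix $\rho_2$ of the Gaussian superposition state $\ket{\psi_n}$ from Theorem \ref{thm:qv} by directly applying the partial-trace formula of Theorem \ref{thm:partialtrace} with $d=2$, and then to control the large-$n$ asymptotics of the resulting matrix elements. First I would write the symmetric representation of $\rho=\proj{\psi_n}$ as $(\rho_{\cal S})^p_q = \psi_p^{(n)}\psi_q^{(n)}$ with the Gaussian coefficients $\psi_k^{(n)}=e^{-(k-\mu)^2/4\sigma}/\sqrt[4]{2\pi\sigma}$. Plugging this into Eq. (\ref{eq:reduced-state}) gives, for each pair of binary indices $\mathbf{i}',\mathbf{j}'\in\{0,1\}^2$,
\begin{equation}
\left(\rho_2\right)^{\mathbf{i}'}_{\mathbf{j}'}=\sum_{k=0}^{n-2} f(n,k,2,|\mathbf{i}'|,|\mathbf{j}'|)\,\psi_{k+|\mathbf{i}'|}^{(n)}\psi_{k+|\mathbf{j}'|}^{(n)},
\end{equation}
where the combinatorial prefactors $f$ are the explicit ratios recorded in Eq. (\ref{eq:f_for_d=2}). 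Because $f$ depends on the indices only through $|\mathbf{i}'|$ and $|\mathbf{j}'|$, there are really only the three distinct weight-classes $(0,0),(1,1),(2,2)$ together with the off-diagonal couplings, so the whole matrix is fixed by a small number of scalar sums.

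\textbf{Asymptotic evaluation.} The heart of the argument is to estimate each sum $\sum_k f\,\psi_{k+|\mathbf{i}'|}\psi_{k+|\mathbf{j}'|}$ for large $n$. The key observation is that the Gaussian is sharply peaked at $k\approx\mu=n/2+c+o(1)$ with width $\sqrt{\sigma}\ll\sqrt{n}$, so in the region where $\psi_k$ is non-negligible one has $k/n\to 1/2$. I would therefore Taylor-expand the rational functions $f(n,k,2,\cdot,\cdot)$ from Eq. (\ref{eq:f_for_d=2}) around $k=n/2$: for instance $f(n,k,2,0,0)=\frac{(n-k)(n-k-1)}{n(n-1)}$ evaluates to $\tfrac14$ plus corrections of relative order $(k-n/2)/n$, and similarly the mixed and top-weight prefactors approach $\tfrac14$. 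The leading-order contribution of every sum is then $\tfrac14\sum_k\psi_k^2=\tfrac14$ (using normalization), which reproduces the rank-one $n^2\proj{+}^{\ot2}$ piece once the overall $1/[n(n-1)]$ is restored, since $\proj{+}^{\ot2}$ has all sixteen entries equal to $\tfrac14$. The nontrivial $O(n)$ correction matrix comes from the first-order terms: the linear-in-$(k-n/2)$ pieces of $f$, when weighted against the Gaussian, pick up the shift $c=\mu-n/2$ through $\sum_k (k-n/2)\psi_k^2 \approx c$, while the index offsets $|\mathbf{i}'|,|\mathbf{j}'|$ in $\psi_{k+|\mathbf{i}'|}\psi_{k+|\mathbf{j}'|}$ generate the asymmetric entries. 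Carefully collecting these first-order terms should yield precisely the displayed correction matrix with its pattern of $\pm c$ and $\pm1$ entries.

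\textbf{Main obstacle.} I expect the principal difficulty to be the bookkeeping of the first subleading order: one must expand both the prefactor $f$ and the product of shifted Gaussian amplitudes $\psi_{k+|\mathbf{i}'|}^{(n)}\psi_{k+|\mathbf{j}'|}^{(n)}$ to first order simultaneously, and show that the half-integer shifts combine with the linear slope of $f$ to give exactly the stated matrix rather than some spurious extra terms. In particular one has to verify that the $1/\sigma$-dependent Gaussian overlaps such as $\sum_k\psi_k\psi_{k+1}=e^{-1/8\sigma}=1+O(1/\sigma)$ contribute only at the claimed order and do not pollute the $O(n)$ coefficient, using the hypothesis $\sigma\ll n$ from Theorem \ref{thm:qv}. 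A clean way to organize this is to replace the discrete sums by Gaussian integrals (Euler–Maclaurin), so that $\sum_k\psi_k^2 g(k)\approx \int \tfrac{1}{\sqrt{2\pi\sigma}}e^{-(k-\mu)^2/2\sigma}g(k)\,dk$ reduces to evaluating $g$ and its first moment at the mean; the error terms are then $o(n)$ provided $\sigma=o(n)$, matching the $o(n)$ remainder in Eq. (\ref{eq:2bodyanalyticalreducedstate}). Once the diagonal and the few independent off-diagonal sums are pinned down to this order, assembling them into the $4\times4$ array and factoring out $1/[n(n-1)]$ completes the proof.
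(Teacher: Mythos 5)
Your proposal follows essentially the same route as the paper's own proof: both apply the partial-trace formula of Theorem \ref{thm:partialtrace} with $d=2$ and the explicit prefactors of Eq. (\ref{eq:f_for_d=2}), replace the discrete sums by Gaussian integrals evaluated through low-order moments at the mean $\mu=n/2+c$, obtain the $n^2\proj{+}^{\ot 2}$ piece from the leading value $f\simeq 1/4$ near the Gaussian peak and the $O(n)$ correction matrix from the first-order terms and the index shifts $|\mathbf{i}'|,|\mathbf{j}'|$, and finally drop the overlap factors $e^{-1/8\sigma}$, $e^{-1/2\sigma}$. One cosmetic point: discarding those exponentials uses the lower bound on $\sigma$ (the paper invokes $\sigma\in O(\sqrt{n})$ growing, so $e^{-1/8\sigma}\to 1$), whereas the upper bound $\sigma\ll n$ that you cite is what justifies the sum-to-integral and infinite-range approximations.
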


It is worth noting that, by monogamy of entanglement, $\rho_2$ tends to a separable state, as one can see directly from (\ref{eq:2bodyanalyticalreducedstate}). When this state is used in the class of Bell inequalities presented in Theorem \ref{thm:class}, we observe that the quantum violation relative to the classical bound increases \cite{Science}, \textit{i.e.}, its quantum violation increases linearly with $n$, whereas Theorem \ref{thm:2bodyreduced} ensures that the two-body reduced states are almost separable. The reason for that is that the coefficients of the inequality in Theorem \ref{thm:class} are picked such that the state $(\ket{0}+\ket{1})/\sqrt{2}$ has expectation value exactly zero on them, so the quantum violation is completely defined by the second term in Eq. (\ref{eq:2bodyanalyticalreducedstate}).

\subsection{Experimental realization of permutationally invariant Bell inequalities with $2$-body correlators}
\label{sec:Experimental}

Let us finally notice that when the two-body Bell inequality is permutationally invariant and the observers perform the same pair of measurements at each site (\ref{eq:same_measurements}), the one-body and the two-body permutationally invariant expectation values $\mathcal{S}_k$ (\ref{eq:def1bodysym}) and $\mathcal{S}_{kl}$ (\ref{eq:def2bodysym}) can be determined by measuring collective spin operators. Such operators, corresponding to spin-$1/2$ particles, are given by
\begin{equation}
 S_{\alpha}:=\frac{1}{2}\sum_{i=0}^{n-1}\sigma_{\alpha}^{(i)},
\end{equation}
where $\alpha \in \{x,y,z\}$. It is convenient to consider the vector $\vec{S} := [S_x, S_y, S_z]$. The value of the permutationally invariant correlators (hence, the value of any Bell inequality built from them), can be readily measured by taking measurements in any direction $\mathbf{\hat n}$ in the Bloch sphere, where $\mathbf{\hat n}$ is a unit vector, by means of $\mathbf{\hat n}\cdot \vec{S}$. Observe that, if ${\cal M}_k^{(i)}$ is aligned along the direction $\mathbf{\hat{m}}_k$ for all $i$, meaning that ${\cal M}_k^{(i)}=\mathbf{\hat{m}}_k\cdot\vec{\sigma}$, then ${\cal S}_k$ is given by
\begin{equation}
 {\cal S}_k = 2 \langle \mathbf{\hat{m}}_k \cdot \vec{S} \rangle.
\end{equation}
Similarly, the value of $\mathcal{S}_{kl}$ can be expressed in terms of the different directions $\mathbf{\hat m}_0, \mathbf{\hat m}_1$ of the measurements as
\begin{equation}
 {\cal S}_{kl} = \left\langle \left[(\mathbf{{\hat m}}_k + \mathbf{{\hat m}}_l)\cdot \vec{S}\right]^2 \right\rangle - \left\langle \left[(\mathbf{{\hat m}}_k - \mathbf{{\hat m}}_l)\cdot \vec{S}\right]^2 \right\rangle - n (\mathbf{{\hat m}}_k \cdot \mathbf{{\hat m}}_l).
\end{equation}
In particular, when $k=l$, we have
\begin{equation}
 {\cal S}_{kk} = 4 \left\langle \left(\mathbf{{\hat m}}_k\cdot  \vec{S}\right)^2\right\rangle - n.
\end{equation}
The way to estimate such collective measurements was already proposed in \cite{spin-sq} and it was implemented \cite{Hammerer} using spin polarization spectroscopy \cite{Isart}. In section VII we discuss its implementation with other technologies. Moreover, motivated by the recent experiment aiming at detection of entanglement in the Dicke states \cite{Klempt}, in appendix \ref{AppC} we study how certain experimental errors affect the quantum violation of our Bell inequalities.

\section{Inequalities detecting Dicke states}
\label{sec:Dicke}

Symmetric Dicke states $\ket{D_n^k}$ \cite{Dicke} are of great importance in many-body physics. They are the ground states of the isotropic Lipkin-Meshkov-Glick Hamiltonian \cite{LMG}
\begin{equation}
H=-\frac{\lambda}{n}\sum_{i<j}\left(\sigma_x^{(i)}\sigma_x^{(j)}+\sigma_y^{(i)}\sigma_y^{(j)}\right) - h \sum_{i=1}^n \sigma_z^{(i)},
 \label{eq:LMG}
\end{equation}
which describes $n$ spins interacting through the two-body ferromagnetic coupling ($\lambda > 0$) and $h$ is the intensity of the magnetic field along the $z$ direction. By tuning the value of $h$, the ground state of (\ref{eq:LMG})
varies along all $\ket{D_n^k}$ for every $k$.
Observe that the interactions in (\ref{eq:LMG}) are over all possible pairs, so it is worth mentioning that Dicke states of $n$ qubits can also be well approximated for low $n$ as ground states of a one-dimensional ferromagnetic
spin-$1/2$ chain with nearest-neighbor XXZ interactions \cite{DickeChain}. The states $\ket{D_n^k}$ are genuinely multipartite entangled for any $0<k<n$. Recall that any $n$-qubit symmetric state (pure or mixed) is entangled if,
 and only if, it is genuinely multipartite entangled \cite{Eckert2001,OurPRA}). These states are also generically genuinely multipartite nonlocal \cite{GMN}.
Dicke states can be  generated in experiments with current technologies (see, e.g., Refs. \cite{ExperimentZeilinger, ExperimentWeinfurter,Klempt}).
%

In Ref. \cite{Science} we showed that the two-body Bell inequalities are capable of detecting nonlocality in some of the Dicke states, namely, $\ket{D_N^k}$ with $k=\lfloor N/2\rfloor$ and $k=\lceil N/2\rceil$. Here, we show that nonlocality of any of the entangled Dicke states can be revealed with such Bell inequalities. In this section we present two classes of two-body Bell inequalities that are violated by any Dicke state $\ket{D_n^k}$ for $0 < k < n$. Note that $\ket{D_n^0}=\ket{0}^{\otimes n}$ and $\ket{D_n^n}=\ket{1}^{\otimes n}$, which cannot lead to nonlocal correlations, as they are clearly separable states. The first class, introduced in Theorem \ref{thm:classDickeLowk} detects all Dicke states $\ket{D^k_n}$ for which $0 < k \lesssim 3n/10$ (or $7n/10 \lesssim k < n$, by an appropriate symmetry of the Bell inequalities, consisting on renaming the outcomes of all the observables); the second one is introduced in Theorem \ref{thm:classDickeMidk} and it detects the rest, where $3n/10 \lesssim k \lesssim 7n/10$. As it can be seen in Figure \ref{fig:dickeviol}, the limits $\lesssim$ have been taken very conservatively, and there is a clear overlap between the two classes, both detecting Dicke states with a number of excitations $k$ around $3n/10$ (or $7n/10$). For the states $\ket{D_n^k}$, the expectation value that a Dicke state takes with a symmetric two-body Bell inequality is straightforward to calculate via Theorem \ref{thm:Blocks} and it is simply $\bra{D_n^k}{\cal B}\ket{D_n^k}= d_k$.

\begin{thm}
 The inequality (\ref{eq:2bodySymgeneral}) defined with the following parameters
  \begin{equation}
  \beta_c = (1+2k)[(n-2k-1)^2+n-1],\qquad \alpha = \beta = (1+2k)(n-1-2k), \qquad \gamma = \varepsilon = k, \quad \delta = k+1,
  \label{class:lowk}
 \end{equation}
 where $0\leq k \leq (n-1)/2$, is a valid Bell inequality for $\mathbbm{P}_2^{{\mathfrak S}_n}$ for any $n$.
 \label{thm:classDickeLowk}
\end{thm}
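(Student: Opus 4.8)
The plan is to exploit Theorem~\ref{thm:vertexs}, which tells us that $\mathbbm{P}_2^{{\mathfrak S}_n}=\mathrm{conv}(\varphi(\mathbbm{T}_n))$ has its extreme points precisely among $\varphi(\partial\mathbbm{T}_n)$. Since the left-hand side of (\ref{eq:2bodySymgeneral}) is a linear functional of the coordinates $(\mathcal{S}_0,\mathcal{S}_1,\mathcal{S}_{00},\mathcal{S}_{01},\mathcal{S}_{11})$, call it $\mathcal{I}$, it is non-negative on the whole polytope if and only if it is non-negative at every vertex, i.e.\ for every integer tuple $(a,b,c,d)$ with $a+b+c+d=n$ and $abcd=0$. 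First I would substitute the parameters (\ref{class:lowk}) and the vertex coordinates (\ref{eq:coordinatesS0})--(\ref{eq:coordinatesS11}). Because $\alpha=\beta$, $\gamma=\varepsilon=k$ and $\delta=k+1$, the functional collapses, and in the variables $p:=a-d$, $q:=b-c$, $w:=a+d$ (so that $b+c=n-w$ and $\mathcal{Z}=2w-n$) it takes the compact quadratic form $\mathcal{I}=\beta_c+n+2\alpha p+(2k+1)p^2-q^2-2(k+1)w$.

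Next I would localise the minimum. Restricted to any of the four triangular facets $a=0$, $b=0$, $c=0$, $d=0$ of $\mathbbm{T}_n$, the Hessian of $\mathcal{I}$ is indefinite (its determinant equals $-16(2k+1)$ on the facet $b=0$, and likewise a negative number on the others), so $\mathcal{I}$ has no interior minimum and its minimum over each facet is attained on an edge of the tetrahedron, i.e.\ where two coordinates vanish. This leaves six edges to inspect, reduced to four by the symmetry $b\leftrightarrow c$ under which $\mathcal{I}$ is invariant. On each edge $\mathcal{I}$ becomes a one-variable quadratic in a single integer coordinate, which I would bring to factored form. For instance, on the edge $b=c=0$ one finds $\mathcal{I}=(2k+1)\big[(p+M)^2-1\big]$ with $M:=n-1-2k$; on the edge $a=b=0$ one gets $\mathcal{I}=2k\,(d-(n-2k-1))(d-(n-2k-2))$; and on the edge $a=d=0$ the minimum sits at a corner and equals $\beta_c+n-n^2=2k(n-2k-1)(n-2k-2)$.

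The crux --- and the main obstacle --- is that the \emph{continuous} relaxation is not valid: over $\partial\mathbf{T}_n$ the expression above reaches $-(2k+1)$ at the half-integer point $a=(2k+1)/2$, $d=(2n-2k-1)/2$, $b=c=0$, so $\beta_c$ has been tuned precisely so that it is integrality, not mere convexity, that secures the bound. The mechanism closing the argument is that on each edge the two real roots of the relevant quadratic are \emph{consecutive admissible values} of the integer coordinate, the allowed step being $2$ when a parity constraint is in force and $1$ otherwise, so that no admissible integer lies strictly between them. Concretely, on $b=c=0$ one has $p=a-d\equiv n\pmod 2$, whence $p+M$ is odd and $(p+M)^2\ge1$; on edges such as $a=b=0$ one instead gets a product of two consecutive integers, which is $\ge 0$; and the corner cases reduce to the nonnegativity of $2k(n-2k-1)(n-2k-2)$ for $0\le k\le(n-1)/2$. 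Reading off the equality cases then identifies the vertices at which the inequality is saturated, confirming both validity and the announced tangency to $\mathbbm{P}_2^{{\mathfrak S}_n}$.
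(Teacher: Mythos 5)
Your reduction from facets to edges is where the argument breaks. By Theorem~\ref{thm:vertexs} the vertices of $\mathbbm{P}_2^{{\mathfrak S}_n}$ are \emph{all} integer points of $\partial\mathbbm{T}_n$, i.e.\ all integer tuples with at least one vanishing coordinate --- including those lying in the \emph{relative interior} of a facet. Your Hessian argument correctly shows that the continuous minimum of $\mathcal{I}$ over each facet sits on an edge, but that continuous minimum is negative (the value $-(2k+1)$ you exhibit yourself), so nonnegativity of $\mathcal{I}$ at the integer points of the edges implies nothing about the integer points elsewhere on the facet: the facet-to-edge reduction is a statement about the continuous problem, and the very observation you make (``the continuous relaxation is not valid'') undercuts transferring it to the integer one. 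The missed points are not hypothetical. In your notation, at $(a,b,c,d)=(k,0,1,n-k-1)$ one has $p=-M$, $q=-1$, $w=n-1$, and direct substitution gives $\mathcal{I}=0$; likewise for $(k,1,0,n-k-1)$. For $1\le k$ these are saturating vertices with only one vanishing coordinate, interior to the facet $b=0$ (resp.\ $c=0$), so any complete proof must account for them. The paper's proof does exactly this: its table (\ref{eq:taula2}) lists these two families among the saturating strategies, and its explicit warning that one ``must also take into account the case when the value of the variables set to zero becomes $1$'' is precisely the step your plan omits.

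The repair is small but essential, and it turns your sketch into essentially the paper's proof in nicer coordinates. Your change of variables gives the global identity
\begin{equation}
\mathcal{I}=(2k+1)\left[(p+M)^2-1\right]+2(k+1)(b+c)-(b-c)^2 ,
\end{equation}
valid at every point of $\mathbbm{T}_n$. On the facet $b=0$, for fixed $p=a-d$ the function is concave in $w=a+d$ (since $q=w-n$ there), so over the admissible integers $w\in\{|p|,|p|+2,\dots\}$, $w\le n$, $w\equiv p\ (\mathrm{mod}\ 2)$, the minimum is at an extreme: $w=|p|$ (an edge you handled) or the largest admissible $w$, which by the parity constraint is $n$ (the edge $b=c=0$) \emph{or} $n-1$, i.e.\ $c=1$. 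On the line $b=0$, $c=1$ the identity above gives $\mathcal{I}=(2k+1)(p+M)^2\geq 0$, with equality exactly at the admissible value $p=-M$, which recovers the missing saturating vertex $(k,0,1,n-k-1)$; the facet $c=0$ is symmetric, while on the facets $a=0$ and $d=0$ the transverse variable is $q$ itself, whose extremes land on edges, so no new family appears there. With this addition your route --- the $(p,q,w)$ substitution, the factored edge quadratics, and the parity mechanism --- is correct; the paper performs the same continuous stationary-point analysis in the variables $(a,b,c)$ and then checks integers neighboring the negative continuous minima, including the ``zeroed variable equal to $1$'' cases, which is the integer bookkeeping your edge-only version dropped.
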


\begin{thm}
Let $\nu := \lfloor n/2 \rfloor-k$. The inequality (\ref{eq:2bodySymgeneral}) defined with the following parameters
 \begin{itemize}
 \item If $n\equiv 0 \mathrm{\ mod\ } 2$:
  \begin{equation}
   \beta_c = {n \choose 2}[n+2(2\nu^2+1)], \qquad \alpha = 2\nu n(n-1),\ \beta = \alpha/n, \qquad \gamma = n(n-1),\ \delta = n,\ \varepsilon = -2.
   \label{class:midkeven}
  \end{equation}
  \item If $n\equiv 1 \mathrm{\ mod\ } 2$:
  \begin{equation}
   \beta_c = {n \choose 2}[n+3+4\nu (\nu+1)], \quad \alpha = (1+2\nu)n(n-1),\ \beta = \alpha/n, \quad \gamma = n(n-1),\ \delta = n,\ \varepsilon = -2.
   \label{class:midkodd}
  \end{equation}
 \end{itemize}
 where $0\leq k \leq \lfloor n/2 \rfloor-1$, is a valid Bell inequality for $\mathbbm{P}_2^{{\mathfrak S}_n}$ for any $n$.
 \label{thm:classDickeMidk}
\end{thm}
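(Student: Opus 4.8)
The plan is to establish validity directly, i.e. to show that the left-hand side of (\ref{eq:2bodySymgeneral}) with the prescribed coefficients is nonnegative on all of $\mathbbm{P}_2^{{\mathfrak S}_n}$. Since that left-hand side is a linear functional of $({\cal S}_0,{\cal S}_1,{\cal S}_{00},{\cal S}_{01},{\cal S}_{11})$ and, by Theorem \ref{thm:vertexs}, $\mathbbm{P}_2^{{\mathfrak S}_n}=\mathrm{conv}(\varphi({\mathbbm T}_n))$, it suffices to evaluate it at the images $\varphi(a,b,c,d)$ of the integer tuples $(a,b,c,d)\in{\mathbbm T}_n$. Inserting the vertex coordinates (\ref{eq:coordinatesS0}--\ref{eq:coordinatesS11}) together with ${\cal Z}=a-b-c+d$ reduces the claim to a single statement: the \emph{quadratic} polynomial $F(a,b,c,d)$ obtained in this way is nonnegative whenever $a,b,c,d$ are nonnegative integers summing to $n$. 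Before doing so I would unify the two cases: setting $\ell:=n-2k$ (so that $\ell=2\nu$ for even $n$ and $\ell=2\nu+1$ for odd $n$), the parameters collapse to $\alpha=\ell n(n-1)$, $\beta=\ell(n-1)$, $\gamma=n(n-1)$, $\delta=n$, $\varepsilon=-2$ and $\beta_c=\binom{n}{2}(n+2+\ell^2)$, so a single computation covers both parities and the even/odd distinction survives only as the parity of $\ell$ (equal to that of $n$).

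It is convenient to pass to the rotated variables $p:=a-d$, $q:=b-c$ and $P:=a+d$, in which ${\cal S}_0=p+q$, ${\cal S}_1=p-q$ and ${\cal Z}=2P-n$. In these coordinates $F$ is linear in $P$ with coefficient $-2\delta=-2n<0$, while its quadratic part in $(p,q)$ has Hessian determinant $\xi_+\xi_--\zeta^2=-4(\delta^2-\gamma\varepsilon)=-4(3n^2-2n)<0$ and is thus \emph{indefinite}. Because $F$ decreases in $P$, its minimum over the continuous simplex ${\mathbf T}_n$ is attained where $P$ is maximal subject to $P\ge|p|$ and $n-P\ge|q|$, i.e. on a facet where one of $b,c$ (equivalently one of $a,d$) vanishes; and a short stationarity analysis in the spirit of the proof of Theorem \ref{thm:class} shows the interior and $2$-face critical points to be saddles rather than minima. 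Iterating this reduction, the minimum lands on the $1$-skeleton of the simplex, so it suffices to minimise the univariate restriction of $F$ along each of the six edges.

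The heart of the matter is that on these edges the continuous minimum of $F$ is genuinely negative, so the inequality fails on ${\mathbf P}_2^{{\mathfrak S}_n}$ and can hold only on the integer polytope ${\mathbbm P}_2^{{\mathfrak S}_n}$; integrality must therefore be used. On the edge $b=c=0$, writing $s=p=a-d$ (so $q=0$, $P=n$), one finds $F=(n-1)\big[\tfrac{n}{2}\ell^2+\ell(n+1)s+\tfrac{n+2}{2}s^2\big]$, a quadratic with roots $s=-\ell$ and $s=-\ell n/(n+2)$ separated by $2\ell/(n+2)<2$, and negative strictly between them. The root $s=-\ell$ corresponds exactly to the lattice vertex $(a,b,c,d)=(k,0,0,n-k)$, where $F=0$, while every admissible $s=a-d$ differs from it by an even integer (since $s\equiv n\equiv\ell \bmod 2$), hence by at least $2$, and so falls outside the negative interval. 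Thus the prescribed parity is precisely the device that keeps the unique non-positive point of the continuous problem off the lattice except at the tangency vertex. On the remaining edges I expect the zeros of the restricted quadratic to be instead \emph{consecutive} integers (for instance on $b=d=0$ one computes an $F\propto(a-1)(a-2)$-type factorisation), so that integrality alone, without a parity argument, keeps $F\ge0$.

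I expect the main obstacle to be exactly this edge-by-edge bookkeeping. Because the coefficients are not symmetric under the measurement swap $0\leftrightarrow1$ (only under the outcome flips $a\leftrightarrow c,\ b\leftrightarrow d$ and $a\leftrightarrow b,\ c\leftrightarrow d$), the six edges are not all equivalent and must be treated case by case; on each one has to locate the (generally non-integral) continuous minimum, identify the relevant parity of the edge variable, and certify that no admissible lattice point lands in the open interval where the univariate quadratic dips below zero. Carrying this out uniformly in $n$ and in $k$ over the whole range $0\le k\le\lfloor n/2\rfloor-1$, and checking that the two mechanisms above (sub-unit root gap plus parity on one family of edges, consecutive integer roots on the others) really do cover all edges, is the step demanding the most care; the reductions of the first two paragraphs are comparatively routine.
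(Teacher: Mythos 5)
Your computations are sound as far as they go: the unification of the two parities via $\ell=n-2k$ is correct (both parameter sets collapse to $\alpha=\ell n(n-1)$, $\beta=\ell(n-1)$, $\gamma=n(n-1)$, $\delta=n$, $\varepsilon=-2$, $\beta_c={n\choose 2}(n+2+\ell^2)$), the Hessian determinant $-4(\delta^2-\gamma\varepsilon)=-4(3n^2-2n)$ is right, the edge $b=c=0$ restriction $(n-1)\bigl[\tfrac{n}{2}\ell^2+\ell(n+1)s+\tfrac{n+2}{2}s^2\bigr]$ with roots $-\ell$ and $-\ell n/(n+2)$ checks out, and your ``expectation'' for the other edges is also correct (e.g.\ on $b=d=0$ one finds $F={n\choose 2}({\cal S}_0+\ell)({\cal S}_0+\ell+2)$, vanishing at $(k,0,n-k,0)$ and $(k-1,0,n-k+1,0)$, consecutive integer values of $a$). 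However, there is a genuine logical gap in your reduction to the $1$-skeleton. The statement ``interior and $2$-face critical points are saddles, hence the minimum lands on the edges'' localizes the \emph{continuous} minimum over $\mathbf{T}_n$ --- but, as you yourself stress, that continuous minimum is negative, so knowing that edge lattice points avoid the negative dips tells you nothing about lattice points in the \emph{interior} of a facet. An indefinite quadratic has a negative band that crosses the interior of the facet, and a priori integer points of $\mathbbm{T}_n$ could sit inside it; ``no interior local minimum'' does not exclude this. Your first reduction step does survive scrutiny, because it is lattice-safe: for a lattice point, sliding $P=a+d$ up to $P=n-|q|$ lands on another lattice point (since $p+q\equiv n \bmod 2$ forces $n-|q|\equiv p \bmod 2$) with smaller $F$, as the $P$-coefficient is $-2\delta=-2n<0$. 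But the facet-to-edge step needs an analogous lattice-compatible argument, which indefiniteness alone does not supply. (Also, the parenthetical ``equivalently one of $a,d$'' is wrong: maximizing $P$ forces $\min(b,c)=0$, not $\min(a,d)=0$.)

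The repair is exactly the mechanism the paper's proof uses, and it replaces indefiniteness by \emph{concavity}. Since $\varepsilon=-2<0$, along any line on which ${\cal S}_0$ (equivalently $a+b$ and $c+d$) is held fixed, the only quadratic dependence of $F$ is through $\tfrac{\varepsilon}{2}{\cal S}_1^2$, so $F$ is concave there; the direction vectors $(1,-1,0,0)$ and $(0,0,1,-1)$ are integral and the segment endpoints (where a coordinate hits zero) are lattice points on edges. Hence the lattice minimum on each facet is attained at lattice points of the edges, and only then does your edge-by-edge parity/integrality bookkeeping close the argument. This is precisely how the paper proceeds: it fixes ${\cal S}_0$, observes that $I_{n,{\cal S}_0}(b,d)$ is concave (quadratic part $2\varepsilon(b+d)^2$), reduces to the corner cases $a=0$ or $b=0$ combined with $c=0$ or $d=0$, and then minimizes the resulting univariate quadratics over integer ${\cal S}_0$ with the parity constraint ${\cal S}_0\equiv n\bmod 2$ --- obtaining the same factorizations, the same sub-unit gaps between roots, and the same saturating vertices you found. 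So your plan is the paper's proof in different coordinates, with one step (facet to edges) argued by a continuous criterion that fails for the integer polytope and must be swapped for the concavity slicing.
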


\subsection{Quantum violation of the Dicke states}
Here we analyze which are the optimal measurements to be performed on a Dicke state $\ket{D^k_n}$. By Theorem \ref{thm:Blocks}, the computation of the expectation value $\bra{D^k_n}{\cal B}(\varphi, \theta)\ket{D^k_n}$ is immediate because it is precisely $d_k$, as defined in Theorem \ref{thm:Blocks}. Note, however, that the state is fixed, so we cannot suppose that $\bra{D^k_n}{\cal B}(\varphi, \theta)\ket{D^k_n}$ depends only on $\theta - \varphi$, since the Theorem \ref{thm:unitaryrotation} does not apply. Using the results stated in Theorem \ref{thm:bigcalculation}, one can easily generalize Theorem \ref{thm:Blocks} to the use of any pair of measurements ${\cal M}_0 = \hat{\mathbf n}_0\cdot \vec{\sigma}$, ${\cal M}_1 = \hat{\mathbf n}_1\cdot \vec{\sigma}$. However, we have seen that $\bra{D^k_n}{\cal B}(\hat{\mathbf n}_0, \hat{\mathbf n}_1)\ket{D^k_n}$ depends only on the difference between the azimuthal angles that define $\hat{\mathbf n}_0, \hat{\mathbf n}_1$ in spherical coordinates, and that this expectation value is minimal when such difference is zero. That implies that there is no loss of generality in considering real observables, and the present form of Theorem \ref{thm:Blocks} applies.
For the case of inequality (\ref{class:lowk}), \textit{i.e.} with Dicke states with low number of excitations, we have that the expectation value $\bra{D^k_n}{\cal B}(\varphi, \theta)\ket{D^k_n}$ is a symmetric function of $\varphi$ and $\theta$. Numerically we observe that its minimum is obtained at $\varphi = - \theta$, a fact which significantly simplifies our analysis, and it allows us to obtain an analytically closed form for its maximal quantum violation for a fixed state $\ket{D^k_n}$. After some algebra, by solving
\begin{equation}
\frac{\mathrm{d}}{\mathrm{d} \theta}d_k(-\theta,\theta)=0
 \label{eq:lowk}
\end{equation}
we obtain the following non-trivial solutions:
\begin{equation}
 \theta^* = \pm 2 \arccos \left(\pm \sqrt{\frac{C_1}{C_2}}\right),
\end{equation}
where $C_1 = k[n-(1+3k)]$ and $C_2 = n^2 (2k+1) - n (8k^2+4k+1) + 2k^2(1+4k)$ and the two $\pm$ signs are independent. Using this optimal angle $\theta^*$ we obtain the maximal quantum violation, given by
\begin{equation}
 d_k(-\theta^*, \theta^*)= -\frac{4C_1^2}{C_2}\in O((n^2)^2/n^3) = O(n).
\end{equation}

For the case of inequality (\ref{class:midkeven}) or (\ref{class:midkodd}), a proof that quantum violation exists for any $n$ was already given in \cite{Science} for $k=\lceil n/2 \rceil$. However, such violation is not the maximal one, as it was assumed that ${\cal M}_0 = \sigma_z$ in order to simplify the calculations.\\

We have calculated numerically the maximal quantum violation of (\ref{class:lowk})-(\ref{class:midkodd}), and the results are reported in Figure \ref{fig:dickeviol} for $n= 2^{10}$.
\begin{figure}[!h]
\centering
\includegraphics[width=0.4\columnwidth]{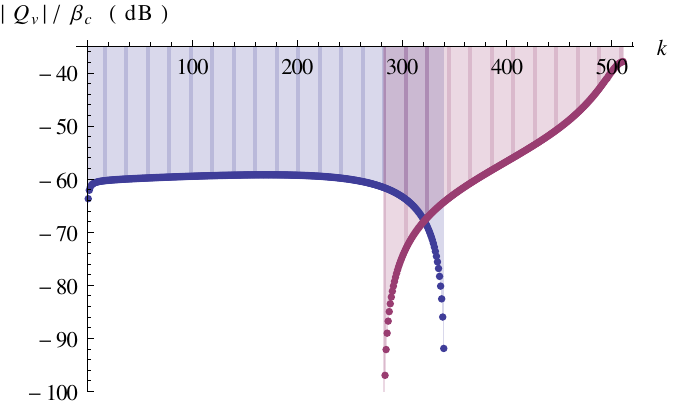}
\includegraphics[width=0.4\columnwidth]{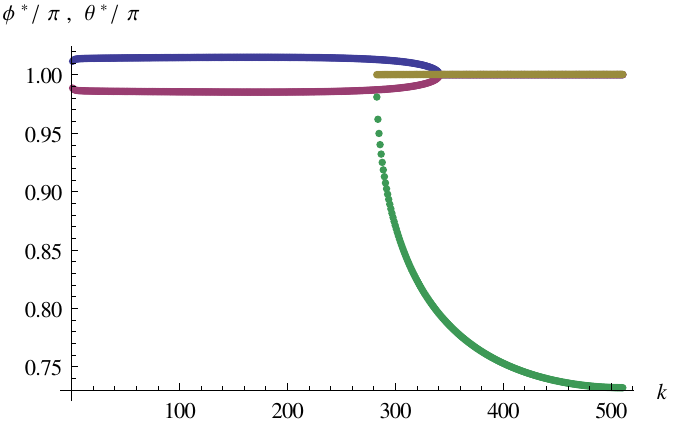}
\caption{(Color online) On the left, the values of the maximal quantum violation, denoted $|Q_v|$, of (\ref{class:lowk} - \ref{class:midkeven}) for the Dicke state $\ket{D_n^k}$ with $n= 2^{10}=1024$ qubits. The violation is normalized to the corresponding classical bound $\beta_c$.  The blue curve corresponds to (\ref{class:lowk}) and the red curve corresponds to (\ref{class:midkeven}). The maximal quantum violation has been plotted in logarithmic scale (in dB) in order to better compare the two bounds; \textit{i.e.} the actual plotted value is $10 \log_{10}(-\min_{\varphi, \theta} d_k/\beta_c)$, where $d_k$ is the quantum expecation value $\bra{D^k_n}{\cal B}(\varphi, \theta)\ket{D^k_n}$. The horizontal axis is cut at $k=n/2$ since the plot is symmetric with respect to $k=n/2$. The reason for that is that the same violation can be achieved by an inequality that results from renaming the outcomes of the measurements or, equivalently, by switching $\ket{0} \leftrightarrow \ket{1}$ for every qubit. It is clear from the plot that the two classes overlap, so these two classes cover all possible entangled Dicke states $\ket{D^k_n}$.\\
On the right, the optimal measurement angles $\varphi^*/\pi, \theta^*/\pi$ for every $k$ that lead to the violation shown on the left. For the class of inequalities (\ref{class:lowk}) $\phi^*$ and $\theta^*$ are plotted in blue and red, respectively. For the class of inequalities (\ref{class:midkeven}), the optimal measurement angles $\phi^*$ and $\theta^*$ are plotted in yellow and green, respectively. Interestingly, although the plot suggests that $\varphi^* = \pi$ for the inequality (\ref{class:midkeven}), it is not the case. If it were exactly $\pi$, one could not achieve the optimal quantum violation. For this particular $n$, $\phi^*|_{k=280}\approx\pi$ and $\phi^*|_{k=511}\approx 1.00023\pi$.
Intuitively, the reason for this slight discrepancy is that $d_k$ can be expressed as $d_k = 2n^4\cos(\varphi/2)^4 -2n^3\cos(\varphi/2)^2(k-\cos \theta + (1+3k)\cos \varphi) + 2n^2k^2\cos^2(\varphi/2)(1+3\cos \varphi) + o(n^2k^2)$, and this expression can be made $o(n^2k^2)$ just by picking $\varphi = \pi$. However, one can do better by taking advantage of the fact that a small enough value of $\cos(\varphi/2)$ makes the whole expression effectively $o(n^2k^2)$ (this value of $\varphi^*$ will actually depend on $n$ and $k$), and the leading terms actually help in the minimization in this situation.}
\label{fig:dickeviol}
\end{figure}

\section{Conclusions}
\label{sec:conclusions}

In this paper we have constructed  and classified Bell
inequalities involving 1-body and 2-body correlators
of many-body systems. We studied maximal quantum violations of these
inequalities, as well as  robustness of these violations and
dependence with respect to various parameters involved. We paid
particular attention to Bell inequalities that are violated by
Dicke states that have physical relevance: on one hand they can be
prepared in experiments, on the other hand they appear as ground
states of physically relevant models, such as the Lipkin-Meshkov-Glick
model.

Let us conclude by listing several experimental setups in which
nonlocality in many body systems may be tested using our
inequalities:

\begin{itemize}

\item {\bf Ultracold trapped atoms.} Dicke states have been recently created in spinor Bose-Einstein condensates
(BEC) of Rubidium $F=1$ atoms, via the parametric process of spin
changing collisions, in which two $m_F=0$ collide to produce a
$m_F=\pm 1$ pair  \cite{Klempt}.   These recent experiments
demonstrate the production of many thousands of neutral atoms
entangled in their spin degrees of freedom. Dicke-like states can
be produced in this way, with at least 28-particle genuine
multiparty entanglement and  a generalized squeezing parameter of
$−11.4(5)$ dB. Similarly, Rubidium atoms of pseudo-spin 1/2 in BEC may be
employed to generate scalable squeezed states (for the early theory proposal see \cite{spin-sq}, for experiments see \cite{Muessel}). Very recently non-squeezed (non-Gaussian) entangled states of many atoms \cite{Strobel} were generated in this way. The number of atoms used in these experiments are of order of thousands
and larger. So far, these numbers and experimental errors and
imperfections are too large, while the corresponding fidelities
too small to detect many body nonlocality. In principle, however,
it is possible to perform these experiments with mesoscopic atom
numbers (say $\lesssim 100$, controlling the atom number to a single atom level (see
Ref. \cite{Hume} for the resonant fluorescence detection of
Rb$^{87}$ atoms in a MOT, and Refs. \cite{Wenz,Zuern} for
optically trapped spin 1/2 fermions).

\item{\bf Ultracold trapped ions.} Ultracold trapped ions with internal
pseudo-spin "talk" to each other via phonon excitations, and under
some conditions behave as spin chains with long range interaction.
This was originally proposed in Ref. \cite{Wunderlich}, using
inhomogeneous magnetic fields, and in Ref. \cite{Porras},
employing appropriately designed laser-ion interactions. The
pioneering experiments were reported in Refs.
\cite{Schaetz,Monroe2010}. While in the early theory studies
\cite{Porras,Porras2,HaukeNJP,Maik}  spin interactions decaying
with the third power of the distance were considered, it was experimentally
demonstrated  that  management of phonon dispersion allows to
achieve powers between 0.1 and 3  in the 2D arrays of
traps \cite{Bollinger}. Recent state of art represents the  work
on experimental realization of a quantum integer-spin chain with
controllable interactions \cite{Monroe}. We have studied trapped
ion systems in relation to long range $SU(3)$ spin chains and
quantum chaos \cite{Grass}, and trapped-ion quantum simulation of
tunable-range Heisenberg chains \cite{Grass1}. In the latter paper
we demonstrated that significant violation of the Bell
inequalities, discussed in the present paper, is possible even for
the ground states of the models with large, but finite interaction
range.

\item {\bf Ultracold atoms in nanostructures.} Yet another possibility concerns systems of
ultracold atoms traps in a vicinity of tapered fibers and optical
crystals (band gap materials). The experimental progress in
coupling of ultracold atomic gases to nanophotonic waveguides,
initiated by Refs. \cite{Nayak,Vetsch,Goban}, is very rapid (cf.
\cite{Atomlight}). Early theoretical studies concerned remarkable
optical properties of these systems (cf.
\cite{Kien,Zoubi,Gong,Gorshkov}).  Ideas  and proposals concerning
realization of long range spin models were developed more
recently, and mainly in Refs. see \cite{Chang, Chang1, Chang2}.

\item{\bf Cold and ultracold atomic ensembles.} Last, but not least, one should consider cold and ultracold ensembles (for an excellent  review
 see \cite{Hammerer}), in which, by employing quantum Faraday effect, one can reach unprecendented
degeees of squeezing of the total atomic spin (cf. \cite{Napolitano,Sewell}, and unprecendented degrees of  precision of quantum magnetometry (cf. \cite{magneto}).
Note that in many concrete realisations the many body Bell inequalities derived in this paper require precise measurments of the total spin components, and their quantum fluctuations.
Quantum Faraday effect, or in other words spin polarization spectroscopy, seems to be  a perfect method to achieve this goal; note that in principle
 it allows also to reach spatial resolution, and/or  to measure spatial Fourier components of the total spin \cite{Isart}.

\end{itemize}

\acknowledgements We thank Joe Eberly,
Markus Oberthaler, Julia Stasi\'nska, and Tam\'as V\'ertesi for
enlightening discussions. This work was supported by Spanish MINECO
projects FOQUS and a AP2009-1174 FPU PhD grant, the EU IP SIQS, ERC
AdG OSYRIS and CoG QITBOX, the Generalitat de Catalunya
and the John Templeton Foundation. C. K. and B. L. 
acknowledge support from the Centre for Quantum Engineering and
Space-Time Research (QUEST) and from the Deutsche Forschungsgemeinschaft
(Research Training Group 1729).

\appendix

\section{Proofs}\label{AppA}
\subsection{Proof of Theorem \ref{thm:vertexs}}
\begin{proof}
 Let us start from the \textit{if} part and consider  $p=(a,b,c,d) \in {\mathbbm T}_n \setminus \partial {\mathbbm T}_n$. This implies that all the coordinates of $p$ are strictly positive, \textit{i.e.}, $a,b,c,d \geq 1$. We will show that $p$ is not a vertex of $\mathbbm{P}_2^{{\mathfrak S}_n}$ by explicitly finding a convex decomposition of $p$ into vertices of $\mathbbm{P}_2^{{\mathfrak S}_n}$. Let us consider a vector $v=(1,-1,-1,1)$. Note that $v$ is proportional to the fourth row of the Hadamard matrix in (\ref{eq:Hadamard}). Since the Hadamard matrix is orthogonal, all its rows are pairwise orthogonal. Consequently, for any $\lambda$, $p+\lambda v$ has the same value for $n$, ${\cal S}_0$ and ${\cal S}_1$, whereas ${\cal Z}(p+\lambda v)={\cal Z}(p)+4\lambda$. Hence, the only coordinate that can change is ${\cal S}_{01}$, where one obtains ${\cal S}_{01}(p+\lambda v)={\cal S}_{01}(p)-4\lambda$. Observe that for any $\mu_1, \mu_2$, we have
 \begin{equation}
  \mu_1 \varphi(p+\mu_2 v)+ \mu_2 \varphi(p-\mu_1 v) = (\mu_1 + \mu_2)\varphi(p).
 \end{equation}
If $\mu_1, \mu_2 > 0$, then a convex decomposition of $\varphi(p)$ is given:
\begin{equation}
\label{eq:cdecomp}
 \varphi(p)=\frac{\mu_1}{\mu_1+\mu_2}\varphi(p+\mu_2 v) + \frac{\mu_2}{\mu_1+\mu_2}\varphi(p-\mu_1 v)
\end{equation}
Notice that the choice $\mu_1 = \min \{a,d\}$ and $\mu_2 = \min \{b,c\}$ ensures that $p+\mu_2 v \in \partial {\mathbbm T}_n$ and that $p-\mu_1 v \in \partial {\mathbbm T}_n$. Since $\mu_1, \mu_2 > 0$, this ensures that (\ref{eq:cdecomp}) is a proper convex decomposition, so $\varphi(p) \notin \mathrm{Ext}(\mathbbm{P}_2^{{\mathfrak S}_n})$.

Conversely, for the \textit{only if} part, let us consider $p\in \partial {\mathbbm T}_n$. If $\varphi(p)$ were not a vertex of $\mathbbm{P}_2^{{\mathfrak S}_n}$, it could be decomposed as a convex combination of elements from $\mathrm{Ext}(\mathbbm{P}_2^{{\mathfrak S}_n})$, namely $\{q_i\}_{i=1}^k$ with some $k>1$. Since $\varphi$ is an invertible transformation in deterministic local strategies, for each $q_i$ there is a unique $p_i \in {\mathbbm T}_n$ such that $\varphi(p_i)=q_i$.
To see that $\varphi$ is indeed invertible in any local deterministic strategy, it suffices to observe that $({\cal S}_0, {\cal S}_1, {\cal S}_{00}, {\cal S}_{01}, {\cal S}_{11})=({\cal S}_0, {\cal S}_1, ({\cal S}_{0})^2-n, {\cal S}_{0}{\cal S}_{1} - {\cal Z}, ({\cal S}_{1})^2-n)$, from which one can trivially obtain $n,{\cal S}_1, {\cal S}_0, {\cal Z}$ and, from these, $a,b,c,d$ by virtue of Equation (\ref{eq:Hadamard}).
Denote then $p_i=(a_i,b_i,c_i,d_i)$; $\varphi(p)$ has then the following decomposition:
\begin{equation}
\label{eq:cdecomp2}
\varphi(p)=\sum_{i=1}^k \lambda_i \varphi(p_i),
\end{equation}
where $0<\lambda_i<1$ and the sum of the $\lambda_i$ is $1$.

By looking at the third coordinate of Equation (\ref{eq:cdecomp2}), the fact that ${\cal S}_{00}(r)=({\cal S}_0(r))^2-n$ for all $r\in {\mathbbm T}_n$ implies the following:
\begin{eqnarray}
\label{eq:tmp1}
({\cal S}_0(p))^2-n=\sum_{i=1}^k\lambda_i [({\cal S}_0(p_i))^2-n]\nonumber\\
 \left(\sum_{i=1}^k \lambda_i {\cal S}_0(p_i)\right)^2=\sum_{i=0}^k \lambda_i ({\cal S}_0(p_i))^2.
\end{eqnarray}
Treating Equation (\ref{eq:tmp1}) as a quadratic function in ${\cal S}_0(p_m)$, for some $1\leq m\leq k$ we obtain
\begin{equation}
\label{eq:tmp2}
 \lambda_m(\lambda_m-1)({\cal S}_0(p_m))^2 + 2\lambda_m {\cal S}_0(p_m) \sum_{i \neq m}\lambda_i{\cal S}_0(p_i) + \left(\sum_{i\neq m} \lambda_i {\cal S}_0(p_i)\right)^2-\sum_{i \neq m}\lambda_i ({\cal S}_0(p_i))^2=0.
\end{equation}
Equation (\ref{eq:tmp2}) has a real solution ${\cal S}_0(p_m) \in {\mathbbm R}$ if, and only if, its discriminant is non-negative, which is equivalent to
\begin{equation}
\label{eq:tmp3}
 -4\lambda_m \sum_{i<j:\ i,j\neq m} \lambda_i\lambda_j({\cal S}_0 (p_i) - {\cal S}_0(p_j))^2 \geq 0.
\end{equation}
Because all $\lambda_i >0$, (\ref{eq:tmp3}) is satisfied if, and only if, ${\cal S}_0(p_i) = {\cal S}_0(p_j)$ for all $i,j \neq m$. However, this is true for every $m$, which allows us to conclude that it must necessarily
be ${\cal S}_0(p_i)={\cal S}_0(p)$ for all $i$. By applying exactly the same reasoning to ${\cal S}_1$, we finally obtain
\begin{equation}
 \label{eq:tmp4}
 {\cal S}_x(p_i)={\cal S}_x(p), \qquad 0\leq x \leq 1, \quad 1 \leq i \leq k.
\end{equation}
The assumption that $\varphi(p)\notin \mathrm{Ext}(\mathbbm{P}_2^{{\mathfrak S}_n})$ leaves only the possibility that all ${\cal S}_{01}(p_i)$ must be different. Since in any local deterministic strategy one
has ${\cal S}_{01}(r)={\cal S}_{0}(r) {\cal S}_{1}(r) - {\cal Z}(r)$, Equation (\ref{eq:tmp4}) implies that all the ${\cal Z}(p_i)$ have to be different. The fourth coordinate or (\ref{eq:cdecomp2}) then reads
\begin{equation}
 \label{eq:tmp5}
 a-b-c+d = \sum_{i=1}^{k} \lambda_i (a_i-b_i-c_i+d_i).
\end{equation}
Equations (\ref{eq:tmp5}, \ref{eq:tmp4}) together with the fact that $a+b+c+d=a_i+b_i+c_i+d_i=n$ imply (by appropriately adding and subtracting them) the following:
\begin{equation}
 \label{eq:tmp6}
 a=\sum_{i=1}^k \lambda_i a_i, \quad b=\sum_{i=1}^k \lambda_i b_i, \quad c=\sum_{i=1}^k \lambda_i c_i, \quad d=\sum_{i=1}^k \lambda_i d_i,
\end{equation}
which is equivalent to $p=\sum_{i=1}^k \lambda_i p_i$; \textit{i.e.} $p$ is a convex combination of elements of ${\mathbbm{T}_n}$ with the same weights as the convex combination of $\varphi(p)$.
Finally, if $p\in \partial{\mathbbm T}_n$, then one of its coordinates must be $0$. However, if any of its coordinates is $0$, Equations (\ref{eq:tmp4}, \ref{eq:tmp5}) and $a+b+c+d=a_i+b_i+c_i+d_i$ imply that $p=p_i\ \forall i$, contradicting the fact that  (\ref{eq:cdecomp}) is a proper convex decomposition. This contradiction comes from the fact that we assumed that $\varphi(p)$ was not a vertex of $\mathbbm{P}_2^{{\mathfrak S}_n}$. Hence, $\varphi(p) \in \mathrm{Ext}(\mathbbm{P}_2^{{\mathfrak S}_n})$.
\end{proof}

\subsection{Proof of Theorem \ref{thm:class}}
\begin{proof}

 We consider the function $I$ defined as
 \begin{equation}
  \label{eq:tmp10}
  I:=\frac{1}{2}[n(x+y)^2+(\sigma \mu \pm x)^2-1] + x[\sigma \mu \pm (x+y)]{\cal S}_0 + \mu y {\cal S}_1 + \frac{x^2}{2} {\cal S}_{00} + \sigma x y {\cal S}_{01} + \frac{y^2}{2}{\cal S}_{11}
 \end{equation}
and we wish to show that $\min_{{\mathbbm P}_2^{{\mathfrak S}_n}} I = 0$.
For all local deterministic strategies, $I$ has the form
 \begin{eqnarray}
 \label{eq:tmp11}
  I&=&\frac{1}{2}[n(x+y)^2 + (\sigma \mu \pm x)^2 -1]+x[\sigma \mu \pm (x+y)]{\cal S}_0 + \mu y{\cal S}_1 + \frac{x^2}{2}\left({\cal S}_0^2 - n\right)+\sigma x y \left({\cal S}_0{\cal S}_1 - {\cal Z}\right)+\frac{y^2}{2}\left({\cal S}_1^2 - n\right)\nonumber\\
  &=&\frac{1}{2}\left(x^2{\cal S}_0^2+2\sigma x y {\cal S}_0{\cal S}_1 + y^2 {\cal S}_1^2\right) + xy n -\sigma xy{\cal Z} + \frac{1}{2}(\sigma \mu \pm x)^2 +(\sigma \mu \pm x)x{\cal S}_0 \pm xy{\cal S}_0+\mu y {\cal S}_1-\frac{1}{2}\nonumber\\
  &=&\frac{1}{2}\left(x{\cal S}_0+\sigma y {\cal S}_1\right)^2+\frac{1}{2}\left[(\sigma \mu \pm x)^2 + 2(\sigma \mu \pm x)x{\cal S}_0 +2(\sigma \mu \pm x)\sigma y {\cal S}_1\right]\mp \sigma x y {\cal S}_1 \pm xy {\cal S}_0 -\sigma xy {\cal Z} + xy n -\frac{1}{2}\nonumber\\
  &=&\frac{1}{2}\left(x{\cal S}_0 +\sigma y {\cal S}_1 +\sigma \mu \pm x\right)^2+ xy \left(\pm {\cal S}_0 \mp \sigma {\cal S}_1 -\sigma {\cal Z} + n\right)-\frac{1}{2}.\nonumber
 \end{eqnarray}
The condition $I\geq 0$ can be recast now as
\begin{equation}
 \label{eq:tmp12}
 \left(x{\cal S}_0 +\sigma y {\cal S}_1 +\sigma \mu \pm x\right)^2 + 8xy r \geq 1,
\end{equation}
where $r := (\pm {\cal S}_0 \mp \sigma {\cal S}_1 - \sigma {\cal Z} + n)/4$. We note that $r$ results into one of the variables $a,b,c,d$ which depends on the choice of the signs $\sigma, \pm$ we make.
Indeed, one has the following values for $r$:
\begin{equation}
 \label{eq:tmp13}
 \begin{array}{c|cc}
  \sigma\backslash \pm&+&-\\
  \hline
  +&b&c\\
  -&a&d
 \end{array}
\end{equation}
The left-hand side of (\ref{eq:tmp12}), denoted $\tilde I$, is always non-negative, because $x,y \in {\mathbbm{N}}$ and $r\geq 0$. We wish to show however that $0$ cannot be achieved, thus proving
that $I$ is indeed tangent to ${\mathbbm P}_2^{{\mathfrak S}_n}$. Note that if $r>0$ or $x{\cal S}_0 +\sigma y {\cal S}_1 +\sigma \mu \pm x\neq 0$, inequality (\ref{eq:tmp12}) is trivially satisfied.
Thus, the set of points in which $\tilde I=0$ could be possible needs to be a subset of $\tilde I_0$, where $\tilde I_0$ is defined as
\begin{equation}
 \label{eq:tmp14}
 \tilde I_0:= \{(a,b,c,d) \in {\mathbbm{T}}_n:\ r=0,\ x{\cal S}_0 +\sigma y {\cal S}_1 +\sigma \mu \pm x=0\}.
\end{equation}
Observe that when we take $(a,b,c,d)$ to be continuous, then $\tilde I_0$ is a line lying on a facet of $\partial {\mathbf T}_n$.
We shall now see how the conditions of Theorem \ref{thm:class} ensure that $\tilde I_0 = \emptyset$. Let us discuss the parity of $x{\cal S}_0 +\sigma y {\cal S}_1 +\sigma \mu \pm x$.
Note first that for all $m\in \mathbbm{Z}$, $m\equiv m^2\equiv -m \mod 2$. Then, it follows that ${\cal S}_0 \equiv {\cal S}_1 \equiv n \mod 2$. If $n$ is even, $x{\cal S}_0 +\sigma y {\cal S}_1 +\sigma \mu \pm x \equiv \mu + x \mod 2$. But $\mu + x \equiv 1 \mod 2$ by hypothesis. Otherwise, if $n$ is odd, then $x{\cal S}_0 +\sigma y {\cal S}_1 +\sigma \mu \pm x \equiv x+y+\mu + x\equiv \mu + y \mod 2$ and in this case $\mu + y \equiv 1 \mod 2$ again by hypothesis. Hence, $\tilde I_0 = \emptyset$, which implies (\ref{eq:tmp12}), which in turn proves that (\ref{eq:parametersclass}, \ref{eq:classicalboundclass}) is a valid Bell inequality, tangent to ${\mathbbm P}_2^{{\mathfrak S}_n}$ for any $n$.

\end{proof}

\subsection{Proof of Theorem \ref{thm:numberofvertices}}
\begin{proof}
 We will prove the theorem by explicitly solving the Diophantine equation
 \begin{equation}
  \frac{1}{2}[n(x+y)^2+(\sigma \mu \pm x)^2-1] + x[\sigma \mu \pm (x+y)]{\cal S}_0 + \mu y {\cal S}_1 + \frac{x^2}{2} {\cal S}_{00} + \sigma x y {\cal S}_{01} + \frac{y^2}{2}{\cal S}_{11}=0
  \label{eq:tmp20}
 \end{equation}
 over the integers. As discussed in the proof of Theorem \ref{thm:class}, (\ref{eq:tmp20}) is saturated on $r=0$ and $(x{\cal S}_0 +\sigma y {\cal S}_1 +\sigma \mu \pm x)^2=1$. Let us rewrite the last condition as $x{\cal S}_0 +\sigma y {\cal S}_1 +\sigma \mu \pm x + \tau = 0$, where $\tau \in \{-1,1\}$. In the notation introduced in (\ref{eq:reparametrization}) this condition reads $\tilde{K}_r(\tau) \pm 2(xs-yt)=0$, where $\tilde{K}_r(\tau)$ is defined as $\tilde{K}_r(\tau):=\pm n(y-x) + \sigma \mu \pm x + \tau$.
 Let us notice that the assumptions of Theorem \ref{thm:class} guarantee that $\tilde{K}_r(\tau)$ is even: Indeed, if $n\equiv 0 \mod 2$, then $\tilde{K}_r(\tau) \equiv \mu + x + 1\equiv 0 \mod 2$ (because the assumptions on Theorem \ref{thm:class} guarantee that $\mu$ has opposite parity to $x$ when $n$ is even); otherwise, if $n\equiv 1 \mod 2$ then $\tilde{K}_r(\tau) \equiv y+\mu+1\equiv 0 \mod 2$ (because $\mu$ has opposite parity to $y$ when $n$ is odd). Hence, we can define $K_r(\tau):=\tilde{K}_r(\tau)/2$ and we just showed that $K_r(\tau) \in {\mathbb Z}$.

 Now, the condition $\tilde{K}_r(\tau) \pm 2(xs-yt)=0$ is equivalent to $K_r(\tau) \pm (xs-yt)=0$, which we can solve for $t$: $yt=\pm K_r(\tau)+ xs$. This equation, when taken \textit{modulo} $x$, reads $yt\equiv \pm K_r(\tau) \mod x$. Since $\gcd(x,y)=1$ by hypothesis, this equation has a solution $t\equiv \pm y^{-1}K_r(\tau) \mod x$, where $y^{-1}$ is the inverse of $y$ in the group of integers \textit{modulo} $x$, typically denoted ${\mathbb Z}_x$, (such an inverse exists if, and only if, $\gcd(x,y)=1$). In practice, computation of $y^{-1}$ is done via the B\'ezout identity, which says that for any pair of integers $x,y \in \mathbb{Z}$ there exist (not unique) $p,q \in \mathbb{Z}$ such that $px+qy = \gcd(x,y)$. In the case that $\gcd(x,y)=1$ when taking B\'ezout's identity \textit{modulo} $x$ we obtain $qy\equiv 1 \mod x$; because of the assumption $\gcd(x,y)=1$ of the Theorem, the inverse of $y$ \textit{modulo} $x$ is well defined and we can write $q\equiv y^{-1}\mod x$.

 Let us now solve Equation (\ref{eq:tmp20}) for the other variables $s,u$, but we will keep the following form for $t$, in order to count the total number of solutions: $t(\tau) = t_0(\tau) + kx$, where $k\in {\mathbb Z}$ and $0\leq t_0(\tau) < x$. Then, from $K_r(\tau)\pm (xs-yt)=0$ we solve for $s$ and obtain $s(\tau)=s_0(\tau)+ky$, where $s_0(\tau):=(\mp K_r(\tau)+yt_0(\tau))/x$. Note that $s_0(\tau)$ is well defined integer number, a fact that can be shown by directly solving $xs(\tau) = (\mp K_r(\tau) + yt_0(\tau))+kxy$, because $y t(\tau) \equiv y t_0(\tau) \equiv \pm K_r(\tau) \mod x$.
 Now, the last variable $u$ is directly obtained from the normalization condition $n=r+s+t+u$. Then, writing $u=n-s-t$ as $u(\tau)=u_0(\tau)-k(x+y)$, where $u_0(\tau):=(nx+K_r(\tau) -(x+y)t_0(\tau))/x$, which is integer by the same argument as $s_0(\tau)$.

 To sum up, the family of points $(r,s,t,u)\in \mathbbm{Z}^4$ for which (\ref{eq:tmp20}) is fulfilled is
 \begin{equation}
  \bigcup_{\tau \in \{-1,1\}}\{[0,s_0(\tau),t_0(\tau),u_0(\tau)]+k[0,x,y,-(x+y)]: \ k\in \mathbb{Z}\}
  \label{eq:tmp21}
 \end{equation}
 Geometrically, this corresponds to alternating points in a \textit{zig-zag} pattern along two parallel lines which lie on the facet $r=0$ of $\mathbb{T}_n$.

 Let us now study which is the number of solutions that really belong to $\mathbb{T}_n$; \textit{i.e.} those for which $r,s,t,u \geq 0$. Since we have the family of solutions indexed by $k$, we just have to count how many $k$'s are available.
 The condition $t\geq 0$ leads to $t_0 + k x \geq 0$, which for $k\in \mathbb{Z}$ means $k\geq \lceil-t_0/x\rceil$. Now, taking into account that $t_0$ is chosen to be $0\leq t_0<x$, which is equivalent to $0\geq-t_0/x>-1$; equivalently $\lceil -t_0/x \rceil = 0$. Thus, $k \geq 0$.
 The condition $s \geq 0$ becomes $s_0 + k y \geq 0$, which for $k\in \mathbb{Z}$ is $k\geq \lceil -s_0/y\rceil$.
 Finally, the condition $u\geq 0$ is $u_0 - k (x+y) \geq 0$, which for $k \in \mathbb{Z}$ is $k \leq \lfloor u_0/(x+y) \rfloor$.

 Hence, for each $\tau \in \{-1,1\}$, the number of solutions belonging to $\mathbb{T}_n$ is given by $|\{k\in \mathbb{Z}: k\geq \max\{0,\lceil -s_0(\tau)/y\rceil\} , k \leq \lfloor u_0(\tau)/(x+y)\rfloor \}|$, which is $\max\{0,\lfloor u_0(\tau)/(x+y)\rfloor -\max \{0,\lceil -s_0(\tau)/y\rceil\} + 1\}$. Note that we put the first maximum just to ensure that there is always a non-negative number of solutions, denoted $N_S$.
 Finally, we obtain $N_S$ by summing this expression over the possible values for $\tau$:
 \begin{equation}
  N_S=\sum_{\tau =\pm 1}\max\left\{0,\left\lfloor\frac{u_0(\tau)}{x+y}\right\rfloor-\max\left\{0,\left\lceil\frac{-s_0(\tau)}{y}\right\rceil\right\}+1\right\}.
  \label{eq:tmp22}
 \end{equation}
\end{proof}

\subsection{Proof of Theorem \ref{thm:Blocks}}
\begin{proof}
 Recall that the Bell Operator can be decomposed into a sum as
 \begin{equation}
  \begin{array}{ccl}
  {\cal B}(\varphi, \theta)&=&\alpha {\cal S}_0 + \beta {\cal S}_1 + \frac{\gamma}{2} {\cal S}_{00} + \delta {\cal S}_{01} + \frac{\varepsilon}{2} {\cal S}_{11} + \beta_c {\mathbbm{1}_{2^n}}\\
  &=&A S_z + A' S_x + B S_{zz} + C S_{xx} + D S_{xz} + \beta_c \mathbbm{1}_{2^n},
 \end{array}
 \end{equation}
where $S_k = \sum_{v}\sigma_k^{(v)}$ and $S_{kl}=\sum_{v\neq w}\sigma_k^{(v)}\sigma_l^{(w)}$, with $k,l \in \{x,z\}$.
By projecting everything into the space ${\cal H}_J$ of (\ref{eq:Schur-Weyl}) we obtain the analytic expression of every block ${\cal B}_J(\varphi, \theta)$, as the projection of $S_k$ and $S_{k,l}$ is given by Theorem \ref{thm:bigcalculation}.
Simply by joining terms we arrive at the form (\ref{eq:penta-block}).
\end{proof}

\subsection{Proof of Theorem \ref{thm:unitaryrotation}}
\begin{proof}
By a direct check one finds that for any $c\in\mathbbm{R}$, $U(c)\mathcal{M}_0(\varphi)U(c)^{\dagger}=\mathcal{M}_0(\varphi+c)$, where $U(c)$ is given by Eq. (\ref{eq:unitary}).
Obviously, the same holds for $\mathcal{M}_1(\theta)$. Consequently,
\begin{equation}
\mathcal{B}(\varphi+c,\theta+c)=[U(c)]^{\ot n}\mathcal{B}(\varphi,\theta)[U(c)^{\dagger}]^{\ot n},
\end{equation}
which in particular means that the spectrum of $\mathcal{B}(\varphi,\theta)$ is invariant under
mutual translations of both its arguments, and therefore it depends only on the difference $\varphi-\theta$.
\end{proof}

\subsection{Proof of Theorem \ref{thm:corollary}}
\begin{proof}
We have to make $C=0$ with $C$ being defined in (\ref{eq:def-C}). Hence,
$\theta_{\pm}$ is the solution to $C=\gamma/2 \sin^2 \varphi + \delta \sin
\varphi \sin \theta + \varepsilon/2 \sin^2\theta=0$. This means that
\begin{eqnarray}
\sin\varphi& =&\frac{-2\delta \sin \theta \pm \sqrt{4\delta^2 \sin^2
\theta -4\gamma \varepsilon \sin^2\theta}}{2\gamma}=\frac{-\delta \sin \theta
\pm \sqrt{\delta^2-\gamma \varepsilon }|\sin \theta |}{\gamma}\nonumber\\
&=&\frac{-\delta \pm \sqrt{\delta^2-\gamma \varepsilon }}{\gamma}\sin
\theta.
\end{eqnarray}
Note that in the third equality the absolute value is omitted due to the freedom of choice in the sign $\pm$, which can be absorbed into the $\sin \theta$.

Let us denote $$\xi := \frac{-\delta \pm \sqrt{\Delta}}{\gamma}, \quad \Delta :=
\delta^2 - \gamma \varepsilon.$$ Then, we wish to solve for $\theta$ the
equation $\sin(\theta - \kappa)=\xi \sin(\theta)$ for $\xi, \kappa \in
\mathbb{R}$.
 We can assume that $\kappa \neq m \pi,\ m \in {\mathbb Z}$, otherwise the
equation is trivially satisfied by setting $\theta = 0$. But then ${\cal M}_0 =
\pm {\cal M}_1$, which is a degenerate case.
 Then, by expanding the $\sin(\theta - \kappa)$, one obtains
 $$\xi = \frac{\sin \theta \cos \kappa - \cos \theta \sin \kappa}{\sin
\theta}=\cos \kappa - \frac{\sin \kappa}{\tan \theta},$$ which leads to
 $$\theta = \arctan\left(\frac{\sin \kappa}{\cos \kappa - \xi}\right) = \arctan
\left(\frac{\gamma \sin \kappa}{\gamma \cos \kappa + \delta \pm
\sqrt{\Delta}}\right).$$
\end{proof}

\subsection{Proof of Theorem \ref{thm:qv}}
\begin{proof}

Let us first recall a known result from probability theory:
The non-central moments of the gaussian distribution are given by
\begin{equation}
\mathbb{E}_{\mu, \sigma}(x^k)= \int_{-\infty}^\infty x^k
\frac{e^{-(x-\mu)^2/2\sigma}}{\sqrt{2 \pi \sigma}}\mathrm{d}x =
\sigma^{k/2}H_k(\mu \sigma^{-1/2}) \qquad (k=0,1,\ldots),\nonumber
\end{equation}
where $H_k(x)$ are the Hermite polynomials, which can be defined as
\begin{equation}
 H_k(x)=e^{-x^2/2}\frac{d^k}{dx^k}e^{x^2/2}.
 \nonumber
\end{equation}
So, we have
\begin{equation}
\{\mathbb{E}_{\mu, \sigma}(x^k)\}_{k\geq
0}=\{1,\mu,\mu^2+\sigma, \mu^3 + 3 \mu \sigma, \mu^4 + 6\mu^2\sigma + 3
\sigma^2, \mu^5 + 10 \mu^3 \sigma+ 15 \mu \sigma^2, \ldots \}.
 \label{eq:lema1}
\end{equation}

Additionally, we shall make use of the two following facts in the proof:
\begin{itemize}
 \item Let $c_1, c_2 \in \mathbbm{R}$. Then
 \begin{equation}
 \int_{-\infty}^\infty x^k
\frac{e^{-(x-\mu+c_1)^2/4\sigma-(x-\mu+c_2)^2/4\sigma}}{\sqrt{2 \pi
\sigma}}\mathrm{d}x =e^{-(c_1-c_2)^2/8\sigma}\mathbb{E}_{\mu',\sigma}(x^k),
  \label{eq:lema2}
 \end{equation}
where $\mu' = \mu -(c_1+c_2)/2$.
This can easily be verified by using the following expression
\begin{equation}
\frac{(x-a)^2+(x-b)^2}{2}=\left[x-\left(\frac{a+b}{2}
\right)\right]^2+\left(\frac{a-b}{2}\right)^2,\nonumber
\end{equation}
 with $a=\mu-c_1$ and $b=\mu-c_2$.
 \item Let us consider the family of states $$\ket{\varphi_n} = \frac{1}{{\cal
N}_n}\sum_{k=0}^n \gamma_k^{(n)} \ket{D^k_n} = \frac{1}{{\cal
N}_n}\sum_{k=0}^n\frac{e^{-(k-\mu)^2/4\sigma}}{\sqrt[4]{2\pi
\sigma}}\ket{D^k_n},$$
where $\mu = n/2 + A/(2B - C)$ and $e^{-1}/2\pi\ll \sigma \ll n$.
Then, one finds that $\lim_{n\rightarrow \infty}{\cal N}_n=1$ because for large
$n$ the following holds:
\begin{equation}
 1=\langle \varphi_n | \varphi_n \rangle = \frac{1}{{\cal N}_n^2}
\sum_{k=0}^n \frac{e^{-(k-\mu)^2/2\sigma}}{\sqrt{2\pi\sigma}} \simeq
\frac{1}{{\cal N}_n^2}\int_{0}^n\frac{e^{-(k-\mu)^2/2\sigma}}{\sqrt{2\pi\sigma}}
\mathrm{d}k \simeq \frac{1}{{\cal N}_n^2}.\nonumber
\end{equation}
Observe that the choice of $\sigma$ cannot be too small: otherwise the first approximation (the sum by an integral) would be invalid; $\sigma$ cannot be too large either as
the second approximation (integration over $\mathbbm{R}$ instead of the interval $[0,n]$) would be incorrect. This is the reason why we added the constraints $e^{-1}/2\pi\ll \sigma \ll n$.
\end{itemize}

Let us then assume that $n$ is large enough so that we can omit the factor ${\cal N}_n$ for practical purposes.
To calculate $\bra{\psi_n}{\cal B}_{n/2}(\varphi(\theta), \theta)\ket{\psi_n}$,
let us first express $d_k$ and $u_k$ conveniently as polynomials in terms of $k$
(note that $C=0$):
 \begin{itemize}
  \item $d_k = 2Bk^2-2(A+nB)k+[\beta_c + n(A-B/2)+n^2B/2]$, for $0\leq k \leq
n$.
  \item $u_k=u_{k'+m-1}\equiv \tilde{u}_{k'} =
(A'-2k'D)m\sqrt{1-\frac{k'^2}{m^2}}$, for $-m+1\leq k' \leq m-1$, where
$m=(n+1)/2$ and $k'=k-(n-1)/2$. By means of the Taylor expansion of
$\sqrt{1-x^2}$ around $x=0$,
$$\sqrt{1-x^2}=\sum_{l=0}^\infty c_l
x^{2l}=1 - \frac{1}{2}x^2 -\frac{1}{8}x^4 - \frac{1}{16}x^6 - \frac{5}{128}x^8 +
o(x^{10}),$$
  we can express $u_k'$ as a polynomial in $k'$; this is a good approximation as long as
$|k'|\ll m$:
$$u_{k'}=(A'-2k'D)\left(m-\frac{k'^2}{2m}-\frac{k'^4}{8m^3}-\frac{k'^6}{16m^5}
-\frac{5k'^8}{128m^7}-\ldots\right).$$
  \item $v_k=0$ for $0\leq k \leq n-2$.
 \end{itemize}

 Then, the expectation value of the elements of the diagonal of
${\cal B}_{n/2}(\varphi(\theta), \theta)$ is given by
 \begin{eqnarray}
  \sum_{k=0}^n (\psi_k^{(n)})^2 d_k &\simeq& \int_{0}^n d_k
\frac{e^{-(k-\mu)^2/2\sigma}}{\sqrt{2\pi \sigma}} \mathrm{d}k \simeq
\int_{-\infty}^\infty d_k\frac{e^{-(k-\mu)^2/2\sigma}}{\sqrt{2\pi \sigma}}
\mathrm{d}k\nonumber \\
  &=& 2B\mathbb{E}_{\mu, \sigma}(x^2) -2(A+nB)\mathbb{E}_{\mu, \sigma}(x) +
[\beta_c + n(A-B/2)+n^2B/2] \mathbb{E}_{\mu, \sigma}(1)\nonumber \\
  &=& 2B(\mu^2+ \sigma)-2(A+nB)\mu + [\beta_c + n(A-B/2)+n^2B/2]\nonumber \\
  &=& 0\cdot n^2 + \left(\frac{\beta_c}{n} -\frac{B}{2} \right)n + 2B \sigma
-\frac{A^2}{2B},
 \end{eqnarray}
where we have applied (\ref{eq:lema1}).

The contribution to the expectation value for the elements that correspond
to the off-diagonal terms of ${\cal B}_{n/2}(\varphi(\theta),\theta)$ is given by (let us
write in short $c\equiv A/2B$)
 \begin{eqnarray}
  &&\hspace{-0.5cm}\sum_{k=0}^{n-1} \psi_k^{(n)} \psi_{k+1}^{(n)} u_k=
\sum_{{k'}=-(m-1)}^{m-1} \psi_{k'+m-1}^{(n)} \psi_{{k'}+m}^{(n)}
\tilde{u}_{k'}\nonumber\\
&&\simeq  \int_{-m+1}^{m-1} \tilde{u}_{k'}
\frac{e^{-(k'+m-1-\mu)^2/4\sigma-(k'+m-\mu)^2/4\sigma}}{\sqrt{2\pi
\sigma}}\mathrm{d}k'\nonumber\\
  &&= \int_{-m+1}^{m-1} \tilde{u}_{k'}
\frac{e^{-(k'-(c-1/2))^2/4\sigma-(k'-(c+1/2))^2/4\sigma}}{\sqrt{2\pi
\sigma}}\mathrm{d}k'\nonumber\\
&&\simeq \int_{-\infty}^{\infty} \tilde{u}_{k'}
\frac{e^{-(k'-(c-1/2))^2/4\sigma-(k'-(c+1/2))^2/4\sigma}}{\sqrt{2\pi\sigma}}
\mathrm{d}k'\nonumber \\
  &&= \int_{-\infty}^{\infty}
(A'-2k'D)\left(m-\frac{k'^2}{2m}-\frac{k'^4}{8m^3}-\frac{k'^6}{16m^5}-\frac{
5k'^8}{128m^7}-\ldots\right)\nonumber\\
&&\hspace{5cm}\times\frac{e^{-(k'-(c-1/2))^2/4\sigma-(k'-(c+1/2))^2/4\sigma}}{
\sqrt { 2\pi\sigma } } \mathrm{d}k'\nonumber \\
 &&= e^{-1/8\sigma}  \int_{-\infty}^{\infty}
(A'-2k'D)\left(m-\frac{k'^2}{2m}-\frac{k'^4}{8m^3}-\frac{k'^6}{16m^5}-\frac{
5k'^8}{128m^7}-\ldots\right)
\frac{e^{-(k'-c)^2/2\sigma}}{\sqrt{2\pi\sigma}}\mathrm{d}k'\nonumber \\
 &&= e^{-1/8\sigma} \left(A'm \mathbb{E}_{c,\sigma}(1) -2Dm
\mathbb{E}_{c,\sigma}(x) -
\frac{A'}{2m}\mathbb{E}_{c,\sigma}(x^2)+\frac{D}{m}\mathbb{E}_{c,\sigma}
(x^3)\right.\nonumber\\
&&\hspace{5cm}\left.-\frac{A'}{8m^3}\mathbb{E}_{c,\sigma}(x^4)+\frac{D}{4m^3}
\mathbb { E } _ { c , \sigma } (x^5) - \cdots\right)\nonumber \\
 &&=e^{-1/8\sigma}\left(\frac{A'-2Dc}{2}n + \frac{A'-2Dc}{2} -
\frac{(c^2+
\sigma)A'-2Dc(c^2+3\sigma)}{2}\frac{2}{n+1}+O(\sigma^2n^{-3})\right),\nonumber
 \end{eqnarray}
where we have applied (\ref{eq:lema1} and \ref{eq:lema2}).
Hence, the expectation value $\bra{\psi_n}{\cal B}_{n/2}(\varphi(\theta),
\theta)\ket{\psi_n}$ is given by
\begin{eqnarray}
\bra{\psi_n}B_{n}^S(\varphi(\theta),
\theta)\ket{\psi_n}&\nmsss=\nmsss&\sum_{k=0}^n \left(\psi_k^{(n)}\right)^2
d_k+ 2\sum_{k=0}^{n-1} \psi_k^{(n)}\psi_{k+1}^{(n)} u_k \nonumber\\
&\simeq& \left(\frac{\beta_c}{n}-\frac{B}{2} +
e^{-1/8\sigma}\left(A'-\frac{AD}{B}\right)\right) n + \left(2B \sigma -\frac{A^2}{2B}
+e^{-1/8\sigma}\left(A'-\frac{AD}{B}\right)\right) + o(\sigma n^{-1}).\nonumber
\end{eqnarray}
\end{proof}

\subsection{Proof of Theorem \ref{thm:partialtrace}}
\begin{proof}
Since the state acts on the symmetric space, without loss of generality we will compute its partial trace after forgetting about the last $n-d$ subsystems.
In general, when $\rho$ acts on the Hilbert Space of $n$ qubits $({\mathbb C}^2)^{\otimes n}$, its parcial trace after forgetting about the last $n-d$ subsystems is given by
$$\left(\mathrm{Tr}_{n-d}(\rho)\right)^{i_0\ldots i_{d-1}}_{j_0 \ldots j_{d-1}} = \sum_{0\leq i_d, \ldots, i_{n-1}\leq 1} \rho^{i_0,\ldots,i_{d-1},i_{d},\ldots, i_{n-1}}_{j_0,\ldots, j_{d-1},i_d,\ldots, i_{n-1}}.$$
Now, let us expand $\rho_{\cal S}$ to the full Hilbert Space:
\begin{eqnarray}
(\rho)^{\mathbf{i}}_{\mathbf{j}}=(p\rho_{\cal S}p^T)^{\mathbf{i}}_{\mathbf j} = \sum_{0\leq k,l\leq n} p^\mathbf{i}_k(\rho_{\cal S})^k_l p^{\mathbf{j}}_l = \sum_{0\leq k,l\leq n}\frac{(\rho_{\cal S})^{k}_{l}}{\sqrt{{n \choose k}{n \choose l}}}\delta(k-|\mathbf{i}|)\delta(l-|\mathbf{j}|),\nonumber
\end{eqnarray}
where $p$ is the same matrix for the change of basis as the one defined in the proof of Theorem \ref{thm:bigcalculation}, but in this case $m=0$.\\
Joining the last two expressions and introducing the notation $\overline{\mathbf{i}'}:=i_d\ldots i_{n-1}$ and $\overline{\mathbf{j}'}:=j_d\ldots j_{n-1}$, we have
\begin{eqnarray}
\left(\mathrm{Tr}_{n-d}(\rho)\right)^{\mathbf i'}_{\mathbf j'}
=\left(\mathrm{Tr}_{n-d}(p\rho_{\cal S}p^T)\right)^{\mathbf i'}_{\mathbf j'}=\sum_{\overline{\mathbf i'}}\sum_{0\leq k,l\leq n}\frac{(\rho_{\cal S})^{k}_{l}}{\sqrt{{n \choose k}{n \choose l}}}\delta(k-|\mathbf{i}'|-|\overline{\mathbf{i}'}|)
\delta(l-|\mathbf{j}'|-|\overline{\mathbf{i}'}|)\nonumber\\
= \sum_{\overline{\mathbf i'}}\sum_{0\leq k,l\leq n}\frac{(\rho_{\cal S})^{k}_{l}}{\sqrt{{n \choose k}{n \choose l}}}\delta(k-|\mathbf{i}'|-|\overline{\mathbf{i}'}|)
\delta(k-|\mathbf{i}'| -l+|\mathbf{j}'|)\nonumber\\
= \sum_{0\leq k,l\leq n}\frac{(\rho_{\cal S})^{k}_{l}}{\sqrt{{n \choose k}{n \choose l}}}
\delta(k-|\mathbf{i}'| -l+|\mathbf{j}'|)\sum_{\overline{\mathbf i'}}\delta(k-|\mathbf{i}'|-|\overline{\mathbf{i}'}|),\nonumber\\
\end{eqnarray}
where in the third equality we have used that Kronecker deltas fulfill $\delta(a-b)\delta(a-c)=\delta(a-b)\delta(b-c)$.
The expression $\sum_{\overline{\mathbf i'}}\delta(k-|\mathbf{i}'|-|\overline{\mathbf{i}'}|)$ is equal to ${n-d \choose k-|\mathbf{i}'|}I_{[|\mathbf{i}'|,|\mathbf{i}'|+n-d]}(k)$, where $I_{[a,b]}(k)$ is the indicator function: it outputs $1$ if $a\leq k \leq b$ and $0$ otherwise. The reason behind that equality is that we are counting how many indices $\mathbf{i}$ with the first $d$ bits already set and with weight $|\mathbf{i}'|$ have total weight $k$. Hence, we have to choose from the remaining $n-d$ bits, indexed by $\overline{\mathbf{i}'}$, that they have weight $k-|\mathbf{i}'|$. Of course, this is possible only if $|\mathbf{i}'|$ is not already greater than $k$ and it is not smaller than $n-d-k$; that is the reason why the indicator function appears, so that $|\mathbf{i}'|\leq k \leq |\mathbf{i}'|+n-d$.
We can then write
\begin{eqnarray}
\left(\mathrm{Tr}_{n-d}(\rho)\right)^{\mathbf i'}_{\mathbf j'} = \sum_{0\leq k,l\leq n}\frac{(\rho_{\cal S})^{k}_{l}}{\sqrt{{n \choose k}{n \choose l}}}
\delta(k-|\mathbf{i}'| -l+|\mathbf{j}'|){n-d \choose k-|\mathbf{i}'|}I_{[|\mathbf{i}'|,|\mathbf{i}'|+n-d]}(k)
\nonumber\\
=\sum_{0\leq k\leq n}\frac{(\rho_{\cal S})^{k}_{k-|\mathbf{i}'|+|\mathbf{j}'|}}{\sqrt{{n \choose k}{n \choose k-|\mathbf{i}'|+|\mathbf{j}'|}}}{n-d \choose k-|\mathbf{i}'|}I_{[|\mathbf{i}'|,|\mathbf{i}'|+n-d]}(k)\nonumber\\
=\sum_{|\mathbf{i}'|\leq k\leq |\mathbf{i}'|+n-d}\frac{(\rho_{\cal S})^{k}_{k-|\mathbf{i}'|+|\mathbf{j}'|}}{\sqrt{{n \choose k}{n \choose k-|\mathbf{i}'|+|\mathbf{j}'|}}}{n-d \choose k-|\mathbf{i}'|}.\nonumber
\end{eqnarray}
Finally we apply the change of variables $m:=k-|\mathbf{i}'|$ in order to obtain
\begin{eqnarray}
\left(\mathrm{Tr}_{n-d}(\rho)\right)^{\mathbf i'}_{\mathbf j'} = \sum_{0\leq m\leq n-d}\frac{{n-d \choose m}(\rho_{\cal S})^{m+|\mathbf{i}'|}_{m+|\mathbf{j}'|}}{\sqrt{{n \choose m+|\mathbf{i}'|}{n \choose m+|\mathbf{j}'|}}},\nonumber
\end{eqnarray}
which completes the proof. Observe that the expression only depends on the weights of the indices, so it is permutationally invariant, as expected.
\end{proof}

\subsection{Proof of Theorem \ref{thm:2bodyreduced}}
\begin{proof}
 We shall make use of the formula provided in Theorem \ref{thm:partialtrace} and use the same kind of approximations as in the proof of Theorem \ref{thm:qv}.
 We shall use the definition for the function $f$ given in Eq. (\ref{eq:f_for_d=2}) and its relation (\ref{eq:geometricmean}). We shall also adopt the notation (\ref{eq:lema1}).
 The simplest elements of $\rho_2$ to begin with are its diagonal elements:
 \begin{equation}
 \begin{array}{rcl}
(\rho_2)^{00}_{00} &\simeq& \int_{0}^{n-2}f(n,x,2,0,0) \psi_{x}^2 \mathrm{d}x = \frac{1}{n(n-1)}\int_{0}^{n-2}(n-x)(n-x-1)\frac{e^{-(x-\mu)^2/2\sigma}}{\sqrt{2\pi \sigma}}\mathrm{d}x\nonumber \\
&\simeq&\frac{1}{n(n-1)}\left(n(n-1) \mathbb{E}_{\mu, \sigma}(1) - (2n-1)\mathbb{E}_{\mu, \sigma}(x) + \mathbb{E}_{\mu, \sigma}(x^2)\right),\\
(\rho_2)^{01}_{01} &\simeq& \int_{0}^{n-2}f(n,x,2,1,1) \psi_{x+1}^2 \mathrm{d}x = \frac{1}{n(n-1)}\int_{0}^{n-2}(n-x-1)(x+1)\frac{e^{-(x-(\mu-1))^2/2\sigma}}{\sqrt{2\pi \sigma}}\mathrm{d}x\nonumber \\
&\simeq&\frac{1}{n(n-1)}\left((n-1) \mathbb{E}_{\mu-1, \sigma}(1) + (n-2)\mathbb{E}_{\mu-1, \sigma}(x) - \mathbb{E}_{\mu-1, \sigma}(x^2)\right),\\
(\rho_2)^{11}_{11} &\simeq& \int_{0}^{n-2}f(n,x,2,2,2) \psi_{x+2}^2 \mathrm{d}x = \frac{1}{n(n-1)}\int_{0}^{n-2}(x+1)(x+2)\frac{e^{-(x-(\mu-2))^2/2\sigma}}{\sqrt{2\pi \sigma}}\mathrm{d}x\nonumber \\
&\simeq&\frac{1}{n(n-1)}\left(2 \mathbb{E}_{\mu-2, \sigma}(1) +3 \mathbb{E}_{\mu-2, \sigma}(x) + \mathbb{E}_{\mu-2, \sigma}(x^2)\right).
\end{array}
\end{equation}
Note that $(\rho_2)^{01}_{01} = (\rho_2)^{10}_{10} = (\rho_2)^{10}_{01} = (\rho_2)^{01}_{10}$. Observe as well that $\Tr\rho_2$ does not depend neither on $\mu$ nor $\sigma$ and it is $1$.
Now we move to the off-diagonal elements, starting with $(\rho_2)^{00}_{11}$, which is equal to $(\rho_2)^{11}_{00}$.
\begin{eqnarray}
 \begin{array}{rcl}
(\rho_2)^{00}_{11}&\simeq& \int_{0}^{n-2} f(n,x,2,0,2) \psi_x \psi_{x+2} \mathrm{d}x\nonumber \\
&=&\frac{1}{n(n-1)}\int_{0}^{n-2} \sqrt{(n-x)(n-x-1)(x+1)(x+2)} \frac{e^{-\frac{(x-\mu)^2}{4\sigma}-\frac{-(x-\mu+2)^2}{4\sigma}}}{\sqrt{2\pi \sigma}} \mathrm{d}x \nonumber \\
&\simeq&\frac{e^{-1/2\sigma}}{n(n-1)}\int_{-\infty}^{\infty}(n-x-1/2)(x+3/2) \frac{e^{-\frac{(x-(\mu-1))^2}{2\sigma}}}{\sqrt{2\pi \sigma}} \mathrm{d}x\nonumber \\
&=&\frac{e^{-1/2\sigma}}{n(n-1)}\int_{-\infty}^{\infty} (n+1/2-y)(y+1/2) \frac{e^{-(y-\mu)^2/2\sigma}}{\sqrt{2\pi \sigma}}\mathrm{d}y\nonumber \\
&=&\frac{e^{-1/2\sigma}}{n(n-1)}\left(\frac{(2n+1)}{4}\mathbb{E}_{\mu, \sigma}(1) + n \mathbb{E}_{\mu, \sigma}(x) - \mathbb{E}_{\mu, \sigma}(x^2)\right).
\end{array}
\end{eqnarray}
Here we have used the approximations $\sqrt{(n-x)(n-x-1)}\simeq (n-x-1/2)$ and $\sqrt{(x+1)(x+2)}\simeq x+3/2$ and the change of variables $y\equiv x+1$.
Finally, we consider the cases $(\rho_2)^{00}_{01} = (\rho_2)^{00}_{10} = (\rho_2)^{01}_{00} = (\rho_2)^{10}_{00}$ and $(\rho_2)^{01}_{11} = (\rho_2)^{10}_{11} = (\rho_2)^{11}_{01} = (\rho_2)^{11}_{10}$.
To this end, let us recall that $\mu = n/2 + A/2B$. We shall denote $c\equiv A/2B$ and we shall also define $m\equiv (n+1)/2$ and  $\mu' \equiv \mu - 1/2$.
\begin{eqnarray}
 \begin{array}{rcl}
(\rho_2)^{00}_{01} &\simeq& \int_{0}^{n-2} f(n,x,2,0,1) \psi_x \psi_{x+1} \mathrm{d}x=\frac{1}{n(n-1)}\int_{0}^{n-2} (n-x-1)\sqrt{(n-x)(x+1)} \frac{e^{-\frac{(x-\mu)^2}{4\sigma}-\frac{-(x-\mu+1)^2}{4\sigma}}}{\sqrt{2\pi \sigma}} \mathrm{d}x  \nonumber \\
&\simeq&\frac{e^{-1/8\sigma}}{n(n-1)}\int_{-\infty}^{\infty}(n-x-1)\sqrt{(n-x)(x+1)} \frac{e^{-\frac{(x-\mu')^2}{2\sigma}}}{\sqrt{2\pi \sigma}} \mathrm{d}x=\frac{e^{-1/8\sigma}}{n(n-1)}\int_{-\infty}^{\infty} (m-1-y)m\sqrt{1-(y/m)^2} \frac{e^{-(y-c)^2/2\sigma}}{\sqrt{2\pi \sigma}}\mathrm{d}y\nonumber \\
&=&\frac{e^{-1/8\sigma}}{n(n-1)}\int_{-\infty}^{\infty} (m-1-y)m\sum_{l=0}^\infty c_l (y/m)^{2l} \frac{e^{-(y-c)^2/2\sigma}}{\sqrt{2\pi \sigma}}\mathrm{d}y  \nonumber \\
&=&\frac{e^{-1/8\sigma}}{n(n-1)}\int_{-\infty}^{\infty} (m-1-y)m\left(1 - \frac{y^2}{2m^2} - \frac{y^4}{8m^4}-O((y/m)^6)\right) \frac{e^{-(y-c)^2/2\sigma}}{\sqrt{2\pi \sigma}}\mathrm{d}y  \nonumber\\
&\simeq&\frac{e^{-1/8\sigma}}{n(n-1)}\left(m(m-1)\mathbb{E}_{c,\sigma}(1) - m \mathbb{E}_{c,\sigma}(x) - \frac{m-1}{2m} \mathbb{E}_{c,\sigma}(x^2) + \frac{1}{2m} \mathbb{E}_{c,\sigma}(x^3)- \frac{m-1}{8m^3} \mathbb{E}_{c,\sigma}(x^4) + \frac{1}{8m^3} \mathbb{E}_{c,\sigma}(x^5)\right)  \nonumber \\
&\simeq&\frac{e^{-1/8\sigma}}{n(n-1)}\left(\frac{(n+1)(n-1)}{4}-\frac{(n+1)}{2}c - \frac{n-1}{2(n+1)}(c^2+ \sigma)+\frac{1}{n+1}(c^3 + 3c \sigma)+ \cdots \right) \nonumber\\
&=&\frac{e^{-1/8\sigma}}{n(n-1)}\left(\frac{1}{4}n^2 - \frac{c}{2}n - \frac{2c^2+2c+1+2\sigma}{4} + \frac{c^2+\sigma + c^3+3c\sigma}{n} + \cdots \right), \nonumber\\
\end{array}
\end{eqnarray}
where $c_l$ are the coefficients of the Taylor expansion of $\sqrt{1-x^2}$ at $x=0$, we have used the change of variables $y\equiv x -(n-1)/2$ and the fact that
\begin{equation}
 \frac{1}{n-a}=\frac{1}{n}\sum_{k=0}^{\infty}(a/n)^k
\end{equation}
in order to obtain an expression purely in powers of $n$.

\begin{eqnarray}
 \begin{array}{rcl}
(\rho_2)^{01}_{11} &\simeq& \int_{0}^{n-2} f(n,x,2,1,2) \psi_{x+1} \psi_{x+2} \mathrm{d}x=\frac{1}{n(n-1)}\int_{0}^{n-2} (x+1)\sqrt{(n-x-1)(x+2)} \frac{e^{-\frac{(x-\mu+1)^2}{4\sigma}-\frac{-(x-\mu+2)^2}{4\sigma}}}{\sqrt{2\pi \sigma}} \mathrm{d}x  \nonumber \\
&\simeq&\frac{e^{-1/8\sigma}}{n(n-1)}\int_{-\infty}^{\infty}(x+1)\sqrt{(n-x-1)(x+2)} \frac{e^{-\frac{(x-(\mu'-1))^2}{2\sigma}}}{\sqrt{2\pi \sigma}} \mathrm{d}x\\
&=&\frac{e^{-1/8\sigma}}{n(n-1)}\int_{-\infty}^{\infty} (m-1+y)m\sqrt{1-(y/m)^2} \frac{e^{-(y-c)^2/2\sigma}}{\sqrt{2\pi \sigma}}\mathrm{d}y\nonumber \\
&=&\frac{e^{-1/8\sigma}}{n(n-1)}\int_{-\infty}^{\infty} (m-1+y)m\left(1 - \frac{y^2}{2m^2} - \frac{y^4}{8m^4}-O((y/m)^6)\right) \frac{e^{-(y-c)^2/2\sigma}}{\sqrt{2\pi \sigma}}\mathrm{d}y\nonumber\\
&\simeq&\frac{e^{-1/8\sigma}}{n(n-1)}\left(m(m-1)\mathbb{E}_{c,\sigma}(1) + m \mathbb{E}_{c,\sigma}(x) - \frac{m-1}{2m} \mathbb{E}_{c,\sigma}(x^2) - \frac{1}{2m} \mathbb{E}_{c,\sigma}(x^3)- \frac{m-1}{8m^3} \mathbb{E}_{c,\sigma}(x^4) - \frac{1}{8m^3} \mathbb{E}_{c,\sigma}(x^5)\right) \nonumber \\
&\simeq& \frac{e^{-1/8\sigma}}{n(n-1)}\left(\frac{(n+1)(n-1)}{4}+\frac{n+1}{2}c - \frac{n-1}{2(n+1)}(c^2+ \sigma)-\frac{1}{n+1}(c^3 + 3c \sigma)+ \cdots \right) \nonumber\\
&=&\frac{e^{-1/8\sigma}}{n(n-1)}\left(\frac{1}{4}n^2 + \frac{c}{2}n - \frac{2c^2+2c+1+2\sigma}{4} + \frac{c^2+\sigma -(c^3+3c\sigma)}{n} + \cdots \right), \nonumber\\
\end{array}
\end{eqnarray}
where we have used the change of variables $y\equiv x -(n-3)/2$.

Joining all terms, we obtain that $\rho_2$ can be written as
\begin{equation}
 \rho_2 = \frac{1}{n(n-1)}\left(
 \left(
 \begin{array}{cccc}
 1&e^{-1/8\sigma}&e^{-1/8\sigma}&e^{-1/2\sigma}\\
 e^{-1/8\sigma}&1&1&e^{-1/8\sigma}\\
 e^{-1/8\sigma}&1&1&e^{-1/8\sigma}\\
 e^{-1/2\sigma}&e^{-1/8\sigma}&e^{-1/8\sigma}&1\\
 \end{array}
\right)\frac{n^2}{4}+
\left(
 \begin{array}{cccc}
 -(2c+1)&-ce^{-1/8\sigma}&-ce^{-1/8\sigma}&e^{-1/2\sigma}\\
 -ce^{-1/8\sigma}&0&0&ce^{-1/8\sigma}\\
 -ce^{-1/8\sigma}&0&0&ce^{-1/8\sigma}\\
 e^{-1/2\sigma}&ce^{-1/8\sigma}&ce^{-1/8\sigma}&-1+2c\\
 \end{array}
\right)\frac{n}{2}+o(n)
 \right)
 \label{eq:tmp60}
\end{equation}
Since $\sigma \in O(n^{1/2})$, we can neglect the exponential terms and we obtain the form given in Eq. (\ref{eq:2bodyanalyticalreducedstate}).

\end{proof}

\subsection{Proof of Theorem \ref{thm:classDickeLowk}}
\begin{proof}
We begin by proving that the choice of coefficients (\ref{class:lowk}) defines a valid Bell inequality; \textit{i.e.}, (\ref{eq:2bodySymgeneral}) is non-negative on all points $\mathbbm{P}_2^{{\mathfrak S}_n}$.
 To this aim, we shall express the inequality (\ref{eq:2bodySymgeneral}) in terms of the variables $a,b,c,d$ and finding the minimum to its left hand side, while imposing the conditions $a, b, c, d \geq 0$. If we denote by $F(a,b,c,d)$ the left hand side of (\ref{eq:2bodySymgeneral}), then we find that it does not depend on the variable $d$, and, as a quadratic function of $a,b,c$, it can be written in the following form:
 \begin{equation}
  F(a,b,c,d)=(1,a,b,c,d)
\left(
\begin{array}{ccccc}
 4k(1+k)(1+2k)&2(1+2k)^2&-k(3+4k)&-k(3+4k)&0\\
 2(1+2k)^2&4(1+2k)&2(1+2k)&2(1+2k)&0\\
 -k(3+4k)&2(1+2k)&2k&2(1+k)&0\\
 -k(3+4k)&2(1+2k)&2(1+k)&2k&0\\
 0&0&0&0&0
\end{array}
\right)
\left(
\begin{array}{c}
 1\\a\\b\\c\\d
\end{array}
\right).
 \end{equation}
We aim to prove that $F\geq 0$ on all vertices. The condition for extrema is $\partial_a F = \partial_b F = \partial_c F = 0$, which reads
\begin{eqnarray}
 \partial_a F &=& 4 (1 + 2 k) [2 a + b + c - (1 + 2 k)]=0,\label{eq:tmp40}\\
 \partial_b F &=& 2 [ 2 b - (4 k + 3)] k + 4 c (1 + k) + 4a (1 + 2 k)=0,\label{eq:tmp41}\\
 \partial_c F &=& 2 [ 2 c - (4 k + 3)] k + 4 b (1 + k) + 4a (1 + 2 k)=0.\label{eq:tmp42}
\end{eqnarray}
We also have to take into account the boundary, so there are $8$ cases to consider, depending on the case if the inequalities $a,b,c\geq 0$ are saturated or not, which are summarized in the following table:

\begin{equation}
 \begin{array}{c|c|c}
 (a=0,b=0,c=0)&(a,b,c)&F(a,b,c)\\
 \hline
 (Y,Y,Y)&(0,0,0)&4k(1+k)(1+2k)\\
 (Y,Y,N)&(0,0,3/2+2k)&-k/2\\
 (Y,N,Y)&(0,3/2+2k,0)&-k/2\\
 (Y,N,N)&(0,\frac{k(3+4k)}{2(1+2k)},\frac{k(3+4k)}{2(1+2k)})&\frac{k(8k^2+11k+4)}{1+2k}\\
 (N,Y,Y)&(1/2+k,0,0)&-(1+2k)\\
 (N,Y,N)&(k/2,0,1+k)&k^2\\
 (N,N,Y)&(k/2,1+k,0)&k^2\\
 (N,N,N)&\mbox{The system is incompatible}&\mbox{N/A}
\end{array}
\label{eq:taula}
\end{equation}
Thus, in order to prove $F(a,b,c)\geq 0$, one only needs to look at the cases where $F(a,b,c)$ is negative. Since $k>0$, it is immediate to see from (\ref{eq:taula}) that this occurs when exactly two of the variables $a,b,c$ are equal to zero. Since in these cases $F$ is just a quadratic function of the remaining nonzero variable, we only need to check two cases, which are the neighboring integer values to the optimal (real number) given in (\ref{eq:taula}) for this nonzero variable. When exploring the case for which this nonzero variable is rounded below (cf. the case $a>0$ in table (\ref{eq:taula2})) we must also take into account the case when the value of the variables set to zero becomes $1$ or we might miss some solutions.
The following table summarizes the strategies for which $F=0$ (the rest are $F>0$ so they are not listed) thus proving inequality (\ref{class:lowk}). More precisely, from $c>0$ one finds the first two rows of (\ref{eq:taula2}), from $b>0$ one obtains the third and fourth row of (\ref{eq:taula2}) and the remaining ones are obtained from $a>0$.

Observe that the fact that (\ref{class:lowk}) is symmetric under the exchange of measurements translates in the following: if a vertex exists for which $F=0$, then by swapping $b$ and $c$ (recall that these variables corresponded to $({\cal M}_0, {\cal M}_1)=(+,-)$ and $({\cal M}_0, {\cal M}_1)=(-,+)$, respectively) we obtain another vertex for which $F=0$. Equivalently, by swapping the values of ${\cal S}_0$ and ${\cal S}_1$, the same happens.

\begin{equation}
 \begin{array}{cccc|cccc}
 a&b&c&d&{\cal S}_0&{\cal S}_1&{\cal Z}&n\\
 \hline
 0&0&1+2k&n-(2k+1)&-n&2(2k+1)-n&n-2(2k+1)&n\\
 0&0&2+2k&n-(2k+2)&-n&4(k+1)-n&n-4(k+1)&n\\
 \hline
 0&1+2k&0&n-(2k+1)&2(2k+1)-n&-n&n-2(2k+1)&n\\
 0&2+2k&0&n-(2k+2)&4(k+1)-n&-n&n-4(k+1)&n\\
 \hline
 k&0&0&n-(k+1)+1& 2k-n&2k-n&n&n\\
 k&0&1&n-(k+1)&2k-n&2(k+1)-n&n-2&n\\
 k&1&0&n-(k+1)&2(k+1)-n&2k-n&n-2&n\\
 k+1&0&0&n-(k+1)&2(k+1)-n&2(k+1)-n&n&n
\end{array}
\label{eq:taula2}
\end{equation}

\end{proof}

\subsection{Proof of Theorem \ref{thm:classDickeMidk}}
\begin{proof}
 We begin by making a general observation. For any vertex in ${\mathbbm{P}_2^{{\mathfrak S}_n}}$, the following inequalities hold:
 \begin{equation}
 -n \leq {\cal S}_{00}, {\cal S}_{11} \leq n(n-1), \qquad |{\cal S}_{01}|\leq n(n-1), \qquad  |{\cal S}_0|, |{\cal S}_1| \leq n.
  \label{eq:tmp50}
 \end{equation}
Hence, the dominating term in (\ref{class:midkeven} $-$ \ref{class:midkodd}) is the one corresponding to $\gamma {\cal S}_{00}/2$, which is $O(n^4)$, while the rest are $O(n^3)$. Hence, in order to minimize the value of the Bell inequality, a necessary condition is that the term containing ${\cal S}_{00}$ should be small, at least not bigger than $O(n^3)$. As ${\cal S}_{00}=({\cal S}_0)^2-n$ in the vertices of ${\mathbbm{P}_2^{{\mathfrak S}_n}}$, this suggests to use ${\cal S}_0$ as a parameter and minimize the left hand side of (\ref{eq:2bodySymgeneral}) with the coefficients given in (\ref{class:midkeven} $-$ \ref{class:midkodd}) for a fixed value of ${\cal S}_0$.
This leaves us with just two variables to consider in the optimization, because $n$ and ${\cal S}_0$ are fixed. We pick, for example, $b$ and $d$, so that we have
\begin{equation}
a\equiv a(b) = \frac{n+{\cal S}_0}{2}-b, \qquad c \equiv c(d) = \frac{n-{\cal S}_0}{2}-d.
 \label{eq:tmp51}
\end{equation}
Equation (\ref{eq:tmp51}) enables us to rewrite ${\cal S}_1$ and ${\cal Z}$ as
\begin{equation}
 {\cal S}_1 = n-2(b+d), \qquad {\cal Z}= {\cal S}_0 - 2(b-d).
 \label{eq:tmp52}
\end{equation}
And the constraints $a\geq 0, c\geq 0$ impose bounds on $b$ and $d$, which together with their non-negativity become
\begin{equation}
0 \leq b \leq \frac{n+{\cal S}_0}{2}, \qquad 0 \leq d \leq \frac{n-{\cal S}_0}{2}.
 \label{eq:tmp53}
\end{equation}

We shall proceed in the same spirit as in Theorem \ref{thm:classDickeLowk}, and we start by focusing
on the cases for even $n$ and odd $n$, which have a similar proof:
 \begin{itemize}
  \item Even $n$.\\
  We start by considering the Bell inequality (\ref{class:midkeven}) as a function of $b$ and $d$, with $n$ and ${\cal S}_0$ treated as parameters. We denote it $I_{n,{\cal S}_0}(b,d)$:\\
  \begin{itemize}
   \item Case $a=0$. Note that it is equivalent to $b=(n+{\cal S}_0)/2$. Then we have
   \begin{equation}
    I_{n,{\cal S}_0}\left(\frac{n+{\cal S}_0}{2},d\right)=-4d^2 + f_1(\nu,{\cal S}_0,n)d+f_0(\nu,{\cal S}_0,n),
    \label{eq:tmp54}
   \end{equation}
   where $f_i$ are expressions that do not depend on $d$. Equation (\ref{eq:tmp54}) is quadratic in $d$, and it fulfills that its second derivative is negative for all $d$, so Eq. (\ref{eq:tmp54}) has only one maximum. Therefore, its minimal value is attained either at $d=0$ or at $d=(n-{\cal S}_0)/2$. The first case does not lead to the optimal solution (See \cite{Science} for the case $\nu=0$, where the same argument follows), so we shall omit it here. The second case produces the following result:
   \begin{equation}
    I_{n,{\cal S}_0}\left(\frac{n+{\cal S}_0}{2},\frac{n-{\cal S}_0}{2}\right)={n \choose 2}({\cal S}_0+2\nu)({\cal S}_0+2\nu-2),
    \label{eq:tmp55}
   \end{equation}
   which is minimal for ${\cal S}_0 = 1-2\nu$. However, since we are in the case where $n$ is even, ${\cal S}_0$ must be even as well, which now is not the case. So we look for the closest even integers, which are $2-2\nu$ and $-2\nu$. It is immediate to see from (\ref{eq:tmp55}) that the value of $I$ is precisely $0$ on these points. Hence, we find the following values of $(a,b,c,d)$ that saturate $I$:
   \begin{equation}
    \left(0, \frac{n}{2}-\nu, 0, \frac{n}{2}+\nu\right), \qquad \left(0, \frac{n}{2}+1-\nu, 0, \frac{n}{2}-1+\nu\right).
    \label{eq:tmppunts01}
   \end{equation}
   \item Case $b=0$.
   In this case we need to consider
   \begin{equation}
    I_{n,{\cal S}_0}(0,d)=-4d^2+ g_1(\nu,{\cal S}_0,n)d+g_0(\nu,{\cal S}_0,n),
    \label{eq:tmp56}
   \end{equation}
   for some expressions $g_i$ that do not depend on $d$. Again, we see from (\ref{eq:tmp56}) that $I$ has only one maximum, so its minimum must be either at $d=0$ or at $d=(n-{\cal S}_0)/2$.
   If $d=0$, we have that
   \begin{equation}
    I_{n,{\cal S}_0}(0,0) = {n \choose 2}({\cal S}_0 + 2 \nu)({\cal S}_0 + 2\nu + 2),
    \label{eq:tmp57}
   \end{equation}
   which is minimal for ${\cal S}_0 = -1-2\nu$. Again, as ${\cal S}_0$ must be even, we consider two cases, ${\cal S}_0 = -2-2\nu$ and ${\cal S}_0 = -2\nu$, for which Equation (\ref{eq:tmp57}) immediately guarantees that $I=0$ on these points.
   We then obtain the following values of $(a,b,c,d)$:
   \begin{equation}
    \left(\frac{n}{2}-1-\nu, 0, \frac{n}{2}+1+\nu, 0\right), \qquad \left(\frac{n}{2}-\nu, 0, \frac{n}{2}+\nu,0\right).
    \label{eq:tmppunts23}
   \end{equation}
   If $d=(n-{\cal S}_0)/2$ then we obtain
   \begin{equation}
    I_{n,{\cal S}_0}\left(0,\frac{n-{\cal S}_0}{2}\right)=(n-1)({\cal S}_0 + 2 \nu)((n+2){\cal S}_0 +2n\nu)/2,
   \end{equation}
   which is minimal for ${\cal S}_0 = -2\nu \frac{n+1}{n+2}$, which is not an integer. The closest even integers to ${\cal S}_0$ are $-2\nu$ and $-2\nu+2$ and in this case, only $-2\nu$ leads to $I=0$, which corresponds to the following coordinates $(a,b,c,d)$:
   \begin{equation}
    \left(\frac{n}{2}-\nu, 0, 0, \frac{n}{2}+\nu\right).
    \label{eq:tmppunts4}
   \end{equation}
  \end{itemize}
  The points (\ref{eq:tmppunts01}, \ref{eq:tmppunts23} and \ref{eq:tmppunts4}) comprise all the vertices of ${\mathbbm{P}_2^{{\mathfrak S}_n}}$ in which $I$ is minimum, with corresponding value $0$, showing that (\ref{class:midkeven}) is a valid Bell inequality, as there are no more cases.
  \item Odd $n$.\\
  We proceed in a similar way to the case of even $n$, so we also discriminate the cases $a=0$ and $b=0$. In this case, we denote by $\tilde I$ the inequality corresponding to (\ref{class:midkodd}).
  \begin{itemize}
   \item Case $a=0$. This is equivalent to $b=(n+{\cal S}_0)/2$. We have that
   \begin{equation}
    \tilde I_{n,{\cal S}_0}\left(\frac{n+{\cal S}_0}{2},d\right) = -4d^2+ \tilde f_1(\nu,{\cal S}_0,n)d+\tilde f_0(\nu,{\cal S}_0,n),
   \end{equation}
   as in (\ref{eq:tmp54}). So, its second derivative is negative and we must find the minimum at the boundary; \textit{i.e.,} either at $c=0$ or $d=0$. If we choose $d=0$ (See \cite{Science} for $\nu = 0$) then we obtain $\tilde I > 0$ for all $n$, so we shall omit its discussion here. If we consider $d=(n-{\cal S}_0)/2$, then
   \begin{equation}
    \tilde I_{n,{\cal S}_0}\left(\frac{n+{\cal S}_0}{2}, \frac{n-{\cal S}_0}{2}\right)={n \choose 2}({\cal S}_0 + 2\nu + 1)({\cal S}_0 + 2\nu - 1),
    \label{eq:tmp58}
   \end{equation}
   which is minimal for ${\cal S}_0 = -2\nu$. However, now $n$ is odd, which implies that ${\cal S}_0$ must be odd as well. Hence, the closest odd integer values to $-2\nu$ are $1-2\nu$ and $-1-2\nu$ which give the following coordinates $(a,b,c,d)$:
   \begin{equation}
   \left(0, \frac{n\pm 1}{2}-\nu, 0, \frac{n\mp 1}{2}+\nu\right).
    \label{eq:tmppunts56}
   \end{equation}
   \item Case $b=0$.\\
   Here we consider
   \begin{equation}
    \tilde I_{n,{\cal S}_0}\left(0,d\right) = -4d^2+ \tilde g_1(\nu,{\cal S}_0,n)d+\tilde g_0(\nu,{\cal S}_0,n),
   \end{equation}
   and we inspect the boundary of the domain of $d$.\\
   If $d=0$, we obtain
   \begin{equation}
    \tilde I_{n,{\cal S}_0}\left(0,0\right) = {n \choose 2}({\cal S}_0 + 2\nu +1)({\cal S}_0 + 2\nu + 3),
   \end{equation}
   an expression which is minimal for ${\cal S}_0 = -2(1+\nu)$, which is an even number. As it has to be odd, the candidates to be considered are $-3-2\nu$ and $-1-2\nu$, both of which produce a value of $\tilde I =0$. The corresponding coordinates are
   \begin{equation}
    \left(\frac{n- 1}{2}-\nu, 0, \frac{n+ 1}{2}+\nu,0\right), \qquad \left(\frac{n- 3}{2}-\nu, 0, \frac{n+ 3}{2}+\nu,0\right).
    \label{eq:tmppunts78}
   \end{equation}
   Finally, if $d=(n-{\cal S}_0)/2$, we have
   \begin{equation}
    \tilde I_{n,{\cal S}_0}\left(0,\frac{n-{\cal S}_0}{2}\right) = (n-1)({\cal S}_0 + 2\nu + 1)(2{\cal S}_0 + n({\cal S}_0 + 2\nu + 1))/2,
   \end{equation}
   which is minimal for ${\cal S}_0 = -(1+2\nu) \frac{n+1}{n+2}$, a value which is not integer. The closest odd integers in this case are $-(1+2\nu)$ and $1-2\nu$, and the smallest value of $\tilde I$ is given by the first one, for which we obtain $\tilde I = 0$ and the point
   \begin{equation}
    \left(\frac{n-1}{2}-\nu,0,0,\frac{n+1}{2}+\nu\right).
    \label{eq:tmppunts9}
   \end{equation}
  \end{itemize}
  Since the points (\ref{eq:tmppunts56}, \ref{eq:tmppunts78} and \ref{eq:tmppunts9}) comprise all the vertices of ${\mathbbm{P}_2^{{\mathfrak S}_n}}$ in which $\tilde I$ is minimum and it is $0$, we deduce that (\ref{class:midkeven}) is a valid Bell inequality, as we have checked all the cases.
 \end{itemize}
\end{proof}

\section{Blocks}\label{AppB}
In this section we discuss what is the form of $S_k$ and $S_{kl}$ in the block-decomposition given by the basis (\ref{eq:basisblocks}), where $k,l\in\{x,y,z\}$ and
\begin{equation}
\label{eq:def-sk,skl}
 S_{k} = \sum_{i=0}^{n-1} \sigma_k^{(i)}, \qquad S_{kl} = \sum_{\stackrel{i,j=0}{i\neq j}}^{n-1}\sigma_k^{(i)}\otimes \sigma_l^{(j)},
\end{equation}
where $\sigma_x, \sigma_y$ and $\sigma_z$ are the Pauli matrices.
\begin{thm}
\label{thm:bigcalculation}
 Let us define $m=n-2J$ and let $\ket{\xi_{k,J}}=\ket{D^k_{2J}}\otimes \ket{\psi^-}^{\otimes m/2}$, as in (\ref{eq:basisblocks}). Then, the $J$-th block $\rho_J$ of $S_k$ or $S_{kl}$, where $k,l\in \{x,y,z\}$, in the form (\ref{eq:blocksstate}) is given by
 \begin{equation}
 \begin{array}{ccl}
  \bra{\xi_{k,J}}S_x\ket{\xi_{l,J}} &=& \sqrt{(l+1)(2J-l)}\delta(k-l-1) + \sqrt{(k+1)(2J-k)}\delta(l-k-1)\\
  \bra{\xi_{k,J}}S_y\ket{\xi_{l,J}} &=& {\mathbbm{i}}\cdot \mathrm{sgn}(k-l)\bra{\xi_{k,J}}S_x\ket{\xi_{l,J}}\\
  \bra{\xi_{k,J}}S_z\ket{\xi_{l,J}} &=& (2J-2k)\delta(k-l)\\
  \bra{\xi_{k,J}}S_{xx}\ket{\xi_{l,J}} &=& [2k(2J-k)-m]\delta(k-l)+\sqrt{(l+1)(l+2)(2J-l)(2J-l-1)}\delta(k-l-2)\\
  &&+\sqrt{(k+1)(k+2)(2J-k)(2J-k-1)}\delta(l-k-2)\\
  \bra{\xi_{k,j}}S_{xy}\ket{\xi_{l,J}}&=& {\mathbbm{i}}\cdot \mathrm{sgn}(k-l)\sqrt{(l+1)(l+2)(2J-l)(2J-l-1)}\delta(k-l-2)\\
  &&+{\mathbbm{i}}\cdot \mathrm{sgn}(k-l)\sqrt{(k+1)(k+2)(2J-k)(2J-k-1)}\delta(l-k-2)\\
  \bra{\xi_{k,J}}S_{xz}\ket{\xi_{l,J}}&=&(2J-1-2l)\sqrt{(2J-l)(l+1)}\delta(k-l-1) + (2J-1-2k)\sqrt{(2J-k)(k+1)}\delta(l-k-1)\\
  \bra{\xi_{k,J}}S_{yy}\ket{\xi_{l,J}}&=&[2k(2J-k)-m]\delta(k-l) - \sqrt{(l+1)(l+2)(2J-l)(2J-l-1)}\delta(k-l-2)\\
  && - \sqrt{(k+1)(k+2)(2J-k)(2J-k-1)}\delta(l-k-2)\\
  \bra{\xi_{k,J}}S_{yz}\ket{\xi_{l,J}}&=&{\mathbbm{i}}\cdot \mathrm{sgn}(k-l)(2J-1-2l)\sqrt{(2J-l)(l+1)}\delta(k-l-1)\\
  &&+{\mathbbm{i}}\cdot \mathrm{sgn}(k-l)(2J-1-2k)\sqrt{(2J-k)(k+1)}\delta(l-k-1)\\
  \bra{\xi_{k,J}}S_{zz}\ket{\xi_{l,J}}&=&[(2J-2k)^2-2J-m]\delta(k-l)
 \end{array}
 \end{equation}
 where $0\leq k,l\leq m$, $\delta$ is the Kronecker delta function and $\mathrm{sgn}$ is the sign function.
 \begin{equation}
 \delta(a)=\left\{
 \begin{array}{ccc}
  1&\mbox{if}&a=0\\
  0&\mbox{else}.&
 \end{array}
 \right.
  \label{eq:def-KroneckerDelta}
 \end{equation}
\end{thm}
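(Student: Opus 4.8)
The plan is to exploit the tensor-product structure of the basis states $\ket{\xi_{k,J}}=\ket{D^k_{2J}}\otimes\ket{\psi^-}^{\otimes m/2}$ by splitting the $n$ sites into a \emph{symmetric block} $\mathcal{A}$ of $2J$ qubits carrying the Dicke state and a \emph{singlet block} $\mathcal{B}$ of the remaining $m=n-2J$ qubits paired into singlets. Accordingly I would write each collective operator as a sum over the block to which its sites belong, $S_k=S_k^{\mathcal A}+S_k^{\mathcal B}$ with $S_k^{\mathcal A}=\sum_{i\in\mathcal A}\sigma_k^{(i)}$, and likewise $S_{kl}=S_{kl}^{\mathcal{AA}}+S_{kl}^{\mathcal{BB}}+S_{kl}^{\mathrm{cross}}$, where the cross term collects the ordered pairs $(i,j)$ with one index in each block.

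The single most useful fact, which I would isolate as a preliminary lemma, is that the singlet is a total-spin-zero state, $(\sigma_k^{(2p)}+\sigma_k^{(2p+1)})\ket{\psi^-}=0$ for every $k\in\{x,y,z\}$, so that $S_k^{\mathcal B}\ket{\psi^-}^{\otimes m/2}=0$. This at once kills the singlet contribution to every one-body term and, crucially, the entire cross term: after factorising, each summand of $S_{kl}^{\mathrm{cross}}$ contains one Pauli direction summed over the singlet block, $S^{\mathcal B}$, which annihilates $\ket{\psi^-}^{\otimes m/2}$. The one-body matrix elements therefore reduce to $\bra{\xi_{k,J}}S_k\ket{\xi_{l,J}}=\bra{D^k_{2J}}S_k^{\mathcal A}\ket{D^l_{2J}}$, a pure spin-$J$ computation on the symmetric subspace. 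I would evaluate $S_z^{\mathcal A}$ (diagonal, eigenvalue $2J-2k$) and $S_x^{\mathcal A}=J_++J_-$ (yielding the $\sqrt{(l+1)(2J-l)}$ and $\sqrt{(k+1)(2J-k)}$ factors) directly, then obtain $S_y^{\mathcal A}=-\ii(J_+-J_-)$ via the relation $\bra{\xi_{k,J}}S_y\ket{\xi_{l,J}}=\ii\,\mathrm{sgn}(k-l)\bra{\xi_{k,J}}S_x\ket{\xi_{l,J}}$, which merely records the opposite phases of the raising and lowering steps.

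For the two-body operators only $S_{kl}^{\mathcal{AA}}$ and $S_{kl}^{\mathcal{BB}}$ survive. The singlet-block piece contributes solely to the diagonal $k=l$, since $\braket{D^k_{2J}}{D^l_{2J}}=\delta_{kl}$, and I would evaluate $\bra{\psi^-}^{\otimes m/2}S_{kl}^{\mathcal{BB}}\ket{\psi^-}^{\otimes m/2}$ from $\langle\sigma_k\otimes\sigma_l\rangle_{\psi^-}=-\delta_{kl}$ within a pair and $\langle\sigma_k\rangle_{\psi^-}=0$ across pairs; only the $m$ intra-pair ordered terms survive, giving $-m\,\delta_{kl}$, which is the source of the ubiquitous $-m$ in the diagonal entries. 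For the symmetric-block piece I would use $S_{kl}^{\mathcal{AA}}=S_k^{\mathcal A}S_l^{\mathcal A}-\sum_{i\in\mathcal A}\sigma_k^{(i)}\sigma_l^{(i)}$, with the on-site identity $\sigma_k\sigma_l=\delta_{kl}\jed+\ii\,\epsilon_{klr}\sigma_r$ giving $\sum_{i\in\mathcal A}\sigma_k^{(i)}\sigma_l^{(i)}=2J\delta_{kl}\jed+\ii\,\epsilon_{klr}S_r^{\mathcal A}$. Since the one-body actions of $S_k^{\mathcal A}$ are already in hand, $S_k^{\mathcal A}S_l^{\mathcal A}$ follows by composing two ladder steps, producing the diagonal $2k(2J-k)$ terms and the double-step factors $\sqrt{(l+1)(l+2)(2J-l)(2J-l-1)}$ on the $\delta(k-l-2)$ off-diagonals; the $-2J$ self-correction in $S_{zz}$ and the relative sign between $S_{xx}$ and $S_{yy}$ follow from $\sigma_z^2=\jed$ and the phases of $J_\pm^2$.

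I expect the main obstacle to be the bookkeeping in assembling the mixed-direction two-body operators ($S_{xy}$, $S_{xz}$, $S_{yz}$): one must propagate the $\ii$ and $\mathrm{sgn}(k-l)$ phases consistently through both ladder steps of $S_k^{\mathcal A}S_l^{\mathcal A}$ and check that the on-site correction $\ii\,\epsilon_{klr}S_r^{\mathcal A}$ combines with the product to reproduce exactly the stated structure --- for instance cancelling the spurious diagonal of $S_x^{\mathcal A}S_y^{\mathcal A}$ so that $S_{xy}$ carries only double-step ($\delta(k-l\mp2)$) entries, while merely renormalising the single-step coefficients of $S_{xz}$ and $S_{yz}$. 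None of these steps is conceptually hard, but matching every square-root prefactor and phase to the closed forms in the statement is where the care lies.
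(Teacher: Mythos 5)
Your proposal is correct, and it reaches the theorem by a genuinely different route than the paper. The paper's own proof works entirely in the computational basis: it first derives the matrix elements of each $S_u$ and $S_{uv}$ on binary words (via weights $|\mathbf{i}|$, bitwise XOR and Hamming distances), then writes down the change-of-basis matrix $p$ onto the states $\ket{D^k_{2J}}\otimes\ket{\psi^-}^{\otimes m/2}$ and evaluates each $(p^T S p)^k_l$ by explicit combinatorial counting; in particular, the singlet contributions are disposed of or evaluated separately inside each of the nine computations through the delta products $\Delta(\mathbf{i'})\Delta(\mathbf{j'})$ and a logical-qubit encoding $\ket{0_L}=\ket{01}$, $\ket{1_L}=\ket{10}$, which is where the $-m$ diagonal terms arise. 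You instead let two structural facts do all of that work at once: the spin-zero property $(\sigma_u^{(2p)}+\sigma_u^{(2p+1)})\ket{\psi^-}=0$ kills the singlet one-body terms and \emph{all} cross terms in one line, and the intra-pair correlator $\langle\sigma_u\otimes\sigma_v\rangle_{\psi^-}=-\delta_{uv}$ (with vanishing inter-pair terms) yields the $-m\,\delta_{uv}\,\delta(k-l)$ contribution immediately. On the symmetric block you replace the paper's word counting by $SU(2)$ ladder algebra, $J_-\ket{D^k_{2J}}=\sqrt{(k+1)(2J-k)}\,\ket{D^{k+1}_{2J}}$ and $J_+\ket{D^k_{2J}}=\sqrt{k(2J-k+1)}\,\ket{D^{k-1}_{2J}}$, so the square-root prefactors come packaged once and for all instead of emerging from ratios of binomial coefficients; the identity $S_{uv}^{\mathcal{AA}}=S_u^{\mathcal{A}}S_v^{\mathcal{A}}-\sum_{i\in\mathcal{A}}(\sigma_u\sigma_v)^{(i)}$ with $\sigma_u\sigma_v=\delta_{uv}\jed+\ii\,\epsilon_{uvr}\sigma_r$ is in fact used by the paper too (for $S_{xz}$ and $S_{zz}$), but only at the level of computational-basis entries, whereas you apply it uniformly at the operator level. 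I verified the delicate mixed-direction cases you flag: the on-site correction $\ii S_z^{\mathcal{A}}$ cancels exactly the diagonal $\ii(2J-2k)$ of $S_x^{\mathcal{A}}S_y^{\mathcal{A}}$, leaving only the double-step $J_\pm^2$ entries with the stated $\ii\,\mathrm{sgn}(k-l)$ phases, and in $S_{xz}$, $S_{yz}$ the $\pm\ii S_y^{\mathcal{A}}$ (resp. $-\ii S_x^{\mathcal{A}}$) correction shifts $(2J-2l)\to(2J-1-2l)$ and $(2J-2k-2)\to(2J-2k-1)$ precisely as in the statement. In short, your route buys brevity and robustness — the representation-theoretic facts absorb the combinatorics — while the paper's buys self-containedness, reducing everything to counting bit strings with no appeal to angular-momentum machinery.
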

\begin{proof}
We shall start by obtaining the expression of (\ref{eq:def-sk,skl}) in the computational basis. For this purpose, it will be convenient to introduce a bit of notation to keep the calculations simple. We will denote the elements of a $2^n\times 2^n$ matrix $A$ as $(A)^{\mathbf{i}}_{\mathbf{j}}$ ($\mathbf{i}$-th row, $\mathbf{j}$-th column, $0 \leq \mathbf{i}, \mathbf{j} < 2^n$), \textit{i.e.,} $\bra{\mathbf{i}}A\ket{\mathbf{j}}$. We shall also use the binary representation of $\mathbf{i}=i_0i_1\ldots i_{n-1}$ with $i_k \in \{0,1\}$ and the same for $\mathbf{j}$, and we shall refer to $\mathbf{i}$ and $\mathbf{j}$ as words. We also define the weight of $\mathbf{i}$, denoted $|\mathbf{i}|$, as the number of $1$'s in the binary representation of $\mathbf{i}$: $|\mathbf{i}|:=|\{i_k: i_k=1\}|$. We shall also use the bit-wise XOR function (equivalently, bitwise addition modulo $2$) $\mathbf{i}\oplus \mathbf{j}:=(i_0\oplus j_0)(i_1\oplus j_1)\cdots (i_{n-1}\oplus j_{n-1})$. We will say that two words $\mathbf{i}$
and $\mathbf{j}$ are at Hamming distance $k$ if they differ exactly in $k$ bits; equivalently, if $|\mathbf{i}\oplus \mathbf{j}|=k$.
\begin{itemize}
 \item $(S_x)^{\mathbf{i}}_{\mathbf{j}} = \delta(|\mathbf{i}\oplus \mathbf{j}|-1)$.\\
 This expression stems from the fact that $(\sigma_x^{(v)})^{\mathbf{i}}_{\mathbf{j}}$ has nonzero entries (actually, their value is $1$) on those elements whose indices fulfill $i_a\oplus j_a = \delta(a-v)$ for all $0 \leq a < n$. In other words, $\mathbf{i}$ and $\mathbf{j}$ only differ in the bit corresponding to the party in which $\sigma_x$ is applied. This bit difference is a consequence of the form of $\sigma_x = \ket{0}\bra{1} + \ket{1}\bra{0}$. Summing over all parties $v$ leads to the result.
 \item $(S_y)^{\mathbf{i}}_{\mathbf{j}} = {\mathbbm{i}}\cdot \mathrm{sgn}(\mathbf{i} - \mathbf{j})\delta(|\mathbf{i}\oplus \mathbf{j}|-1)$.\\
 In this case we apply the same reasoning as in $S_x$, but because of the form of $\sigma_y = -\mathbbm{i} \ket{0}\bra{1} + \mathbbm{i}\ket{1}\bra{0}$, the elements in the upper half are multiplied by $-\mathbbm{i}$ whereas the elements in the lower half are multiplied by $\mathbbm{i}$.
 \item $(S_z)^{\mathbf{i}}_{\mathbf{j}} = (n-2|\mathbf{i}|)\delta(\mathbf{i}-\mathbf{j})$.\\
 This easily follows from the fact that $\sigma_z$ is diagonal and, when we sum $\sigma_z^{(v)}$ over all the parties $v$, those indices $\mathbf{i}$ with a $0$ in the $v$-th bit (\textit{i.e.,} $i_v=0$) get a $+1$ contribution, whereas the bits with $i_v=1$ get a $-1$ contribution.
 \item $(S_{xx})^{\mathbf{i}}_{\mathbf{j}} = 2\delta(|\mathbf{i}\oplus \mathbf{j}|-2)$.\\
 Observe first that $(\sigma_x^{(v)} \sigma_x^{(w)})^{\mathbf{i}}_{\mathbf{j}}=\delta((\mathbf{i}\oplus \mathbf{j})-(2^v+2^w))$. This is precisely the condition that bits $v$ and $w$ have to be different. Summing over all (ordered) pairs of parties $v$ and $w$, the result follows.
 \item $(S_{xy})^{\mathbf{i}}_{\mathbf{j}} = {\mathbbm{i}}\cdot \mathrm{sgn}(\mathbf{i} - \mathbf{j})[1-(-1)^\frac{|\mathbf{i}|-|\mathbf{j}|}{2}]\delta(|\mathbf{i}\oplus \mathbf{j}|-2)$.\\
 For this case, we shall compute first the value of $(\sigma_x^{(v)}\sigma_y^{(w)})^{\mathbf{i}}_{\mathbf{j}}$. Observe that $(\sigma_x^{(v)})^{\mathbf{i}}_{\mathbf{j}}= \delta((\mathbf{i}\oplus \mathbf{j})-2^v)$. and $(\sigma_y^{(w)})^{\mathbf{i}}_{\mathbf{j}}= -\mathbbm{i}\mathrm{sgn}(\mathbf{i}-\mathbf{j})\delta((\mathbf{i}\oplus \mathbf{j})-2^w)$. Then,
 \begin{equation}
  (\sigma_x^{(v)}\sigma_y^{(w)})^{\mathbf{i}}_{\mathbf{j}}=\sum_{\mathbf{k}} (\sigma_x^{(v)})^{\mathbf{i}}_{\mathbf{k}}(\sigma_y^{(w)})^{\mathbf{k}}_{\mathbf{j}}=-\mathbbm{i}\sum_{\mathbf{k}}\delta((\mathbf{i}\oplus \mathbf{k})-2^v)\delta((\mathbf{k}\oplus \mathbf{j})-2^w)\mathrm{sgn}(\mathbf{k}-\mathbf{j}).
 \end{equation}
Consider the factor $\delta((\mathbf{k}\oplus \mathbf{j})-2^w)\mathrm{sgn}(\mathbf{k}-\mathbf{j})$, since $\mathbf{k}$ and $\mathbf{j}$ only differ in the $w$-th bit, one has that the sign of $\mathbf{k}-\mathbf{j}$ is positive whenever $j_w=0$ ($k_w=1$) and negative whenever $j_w=1$ ($k_w=0$). Then, one can rewrite $\delta((\mathbf{k}\oplus \mathbf{j})-2^w)\mathrm{sgn}(\mathbf{k}-\mathbf{j})= \delta((\mathbf{k}\oplus \mathbf{j})-2^w)[\delta(j_w)-\delta(j_w-1)]=\delta((\mathbf{k}\oplus \mathbf{j})-2^w)(-1)^{j_w}$.
  \begin{equation}
  (\sigma_x^{(v)}\sigma_y^{(w)})^{\mathbf{i}}_{\mathbf{j}}=-\mathbbm{i}(-1)^{j_w}\sum_{\mathbf{k}}\delta((\mathbf{i}\oplus \mathbf{k})-2^v)\delta((\mathbf{k}\oplus \mathbf{j})-2^w)=-\mathbbm{i}(-1)^{j_w}(\sigma_x^{(v)} \sigma_x^{(w)})^{\mathbf{i}}_{\mathbf{j}}.
 \end{equation}
 Hence,
 \begin{equation}
 \label{eq:tmp30}
  (\sigma_x^{(v)}\sigma_y^{(w)})^{\mathbf{i}}_{\mathbf{j}}=-\mathbbm{i}(-1)^{j_w}\delta((\mathbf{i}\oplus \mathbf{j})-(2^v+2^w)).
 \end{equation}
Now we sum (\ref{eq:tmp30}) over all ordered pairs of particles $(v,w)$. A convenient way to perform this sum is to join term $(v,w)$ with $(w,v)$:
 \begin{equation}
  \label{eq:tmp31}
  (S_{xy})^{\mathbf{i}}_{\mathbf{j}} =\sum_{v\neq w} (\sigma_x^{(v)}\sigma_y^{(w)})^{\mathbf{i}}_{\mathbf{j}} = -\mathbbm{i}\sum_{0\leq v < w < n}\left[(-1)^{j_w}+(-1)^{j_v}\right]\delta((\mathbf{i}\oplus \mathbf{j})-(2^v+2^w)).
 \end{equation}
 The term $(-1)^{j_w} + (-1)^{j_v}$ can take $3$ different values. It will be $0$ whenever $j_v\neq j_w$, it will be $2$ whenever $j_v=j_w=0$ and it will be $-2$ whenever $j_v=j_w=1$. Observe that only terms for which $i_v\neq j_v$ and $i_w \neq j_w$ are counted in the sum (\ref{eq:tmp31}). Hence, one can substitute (\ref{eq:tmp31}) by the following expression:
 \begin{equation}
  \label{eq:tmp32}
  (S_{xy})^{\mathbf{i}}_{\mathbf{j}} = -\mathbbm{i}\sum_{0\leq v < w < n}\left[1 - (-1)^{\frac{|\mathbf{i}|-|\mathbf{j}|}{2}}\right]\mathrm{sgn}(\mathbf{i}-\mathbf{j})\delta((\mathbf{i}\oplus \mathbf{j})-(2^v+2^w)) = -\frac{\mathbbm{i}}{2}\left[1 - (-1)^{\frac{|\mathbf{i}|-|\mathbf{j}|}{2}}\right]\mathrm{sgn}(\mathbf{i}-\mathbf{j})(S_{xx})^{\mathbf{i}}_{\mathbf{j}}.
 \end{equation}
 Following the same argument as for $(S_{xx})^{\mathbf{i}}_{\mathbf{j}}$, the result follows.
 \item $(S_{xz})^{\mathbf{i}}_{\mathbf{j}} = (n-|\mathbf{i}|-|\mathbf{j}|)\delta(|\mathbf{i}\oplus \mathbf{j}|-1)$.\\
 In this case, we use the fact that $S_{xz}=S_{x}S_{z}-\sum_{v=0}^{n-1}(\sigma_x\sigma_z)^{(v)}$. Note that
 \begin{equation}
  \label{eq:tmp33}
  \sigma_x \sigma_z = \left(
  \begin{array}{cc}
   0&-1\\1&0
  \end{array}
\right),
 \end{equation}
which means that $\sum_{v=0}^{n-1}(\sigma_x\sigma_z)^{(v)}$ behaves essentially like $S_x$ but with a $-1$ instead of a $1$ on the upper diagonals. Hence, we can write
\begin{equation}
 \label{eq:tmp34}
 \left(\sum_{v=0}^{n-1}(\sigma_x \sigma_z)^{v}\right)^{\mathbf{i}}_{\mathbf{j}}=\mathrm{sgn}(\mathbf{i}-\mathbf{j})\delta(|\mathbf{i}\oplus \mathbf{j}|-1)=(|\mathbf{i}|-|\mathbf{j}|)\delta(|\mathbf{i}\oplus \mathbf{j}|-1).
\end{equation}
The last equality stems from the fact that, given a $\mathbf{j}$, a number $\mathbf{i}$ greater than $\mathbf{j}$, but at Hamming distance $1$ from this $\mathbf{j}$ can only be constructed by changing one of the $0$ bits of $\mathbf{j}$ to $1$, hence $|\mathbf{i}|=|\mathbf{j}|+1$. The opposite happens when we want to get a lower number: we turn one of the $1$ bits of $\mathbf{j}$ to $0$, hence lowering its weight by $1$.
At this point we just need to calculate
\begin{equation}
\begin{array}{ccl}
 (S_{xz})^{\mathbf{i}}_{\mathbf{j}} &=& \displaystyle\sum_{\mathbf{k}}\left[(S_{x})^{\mathbf{i}}_{\mathbf{k}} (S_{z})^{\mathbf{k}}_{\mathbf{j}}\right] - (|\mathbf{i}|-|\mathbf{j}|)\delta(|\mathbf{i}\oplus \mathbf{j}|-1)\\
 &=&\displaystyle\sum_{\mathbf{k}}\left[\delta(|\mathbf{i}\oplus\mathbf{k}|-1)(n-2|\mathbf{k}|)\delta(\mathbf{k}-\mathbf{j})\right] - (|\mathbf{i}|-|\mathbf{j}|)\delta(|\mathbf{i}\oplus \mathbf{j}|-1)\\
 &=&(n-2|\mathbf{j}|)\delta(|\mathbf{i}\oplus\mathbf{j}|-1)-(|\mathbf{i}|-|\mathbf{j}|)\delta(|\mathbf{i}\oplus\mathbf{j}|-1)=(n-|\mathbf{i}|-|\mathbf{j}|)\delta(|\mathbf{i}\oplus\mathbf{j}|-1).
\end{array}
\end{equation}
 \item $(S_{yy})^{\mathbf{i}}_{\mathbf{j}} = 2(-1)^\frac{|\mathbf{i}|-|\mathbf{j}|}{2}\delta(|\mathbf{i}\oplus \mathbf{j}|-2)$.\\
 In this case, we proceed in a similar way as in $(S_{xy})^{\mathbf{i}}_{\mathbf{j}}$:
 \begin{equation}
 \begin{array}{ccl}
  (\sigma_y^{(v)}\sigma_y^{(w)})^{\mathbf{i}}_{\mathbf{j}}&=&\displaystyle\sum_{\mathbf{k}}(S_{y})^{\mathbf{i}}_{\mathbf{k}} (S_{y})^{\mathbf{k}}_{\mathbf{j}} = -\displaystyle\sum_{\mathbf{k}}\delta(\mathbf{i}\oplus\mathbf{k} -2^v)\delta(\mathbf{k}\oplus \mathbf{j}-2^w)\mathrm{sgn}(\mathbf{i}-\mathbf{k})\mathrm{sgn}(\mathbf{k}-\mathbf{j})\\
  &=&\displaystyle\sum_{\mathbf{k}}\delta(\mathbf{i}\oplus\mathbf{k} -2^v)\delta(\mathbf{k}\oplus \mathbf{j}-2^w)(-1)^{i_v+j_w}=(-1)^{i_v+j_w}(\sigma_x^{(v)}\sigma_x^{(w)})^{\mathbf{i}}_{\mathbf{j}}.\\
 \end{array}
 \label{eq:tmp35}
 \end{equation}
Summing (\ref{eq:tmp35}) over all parties, we obtain the result, since $(-1)^{i_v+j_w}$ is equal to $1$ whenever $i_v=j_w$ and to $-1$ whenever $i_v\neq j_w$. The Kronecker deltas in (\ref{eq:tmp35}) enable us to substitute $(-1)^{i_v+j_w}$ by $(-1)^{\frac{|\mathbf{i}|-|\mathbf{j}|}{2}}$ in this case.
 \item $(S_{yz})^{\mathbf{i}}_{\mathbf{j}} = {\mathbbm{i}}\cdot \mathrm{sgn}(\mathbf{i} - \mathbf{j}) (S_{xz})^{\mathbf{i}}_{\mathbf{j}}$.\\
 In this case, it suffices to change the signs the upper half of the matrix and multiply by $\mathbbm{i}$, as we did to go to $(S_{y})^{\mathbf{i}}_{\mathbf{j}}$ from $(S_{x})^{\mathbf{i}}_{\mathbf{j}}$.
 \item $(S_{zz})^{\mathbf{i}}_{\mathbf{j}} = [n(n-1)-4|\mathbf{i}|(n-|\mathbf{i}|)]\delta(\mathbf{i}-\mathbf{j})$.\\
 For this last case, we apply that $S_{zz}=S_{z}S_{z}-\sum_{v=0}^{n-1}(\sigma_z\sigma_z)^{(v)}$.
 \begin{equation}
  (S_{zz})^{\mathbf{i}}_{\mathbf{j}}=\sum_{\mathbf{k}}(S_{z})^{\mathbf{i}}_{\mathbf{k}}(S_{z})^{\mathbf{k}}_{\mathbf{j}}-\left[\sum_{v}(\sigma_z\sigma_z)^{(v)}\right]^{\mathbf{i}}_{\mathbf{j}} = \sum_{\mathbf{k}}\left[(n-2|\mathbf{k}|)^2\delta(\mathbf{i}-\mathbf{k})\delta(\mathbf{k}-\mathbf{j})\right]-n\delta(\mathbf{i}-\mathbf{j}).
 \end{equation}
 Because $\delta(\mathbf{i}-\mathbf{k})\delta(\mathbf{k}-\mathbf{j})=\delta(\mathbf{i}-\mathbf{j})\delta(\mathbf{k}-\mathbf{j})$, we obtain the result:
 \begin{equation}
  (S_{zz})^{\mathbf{i}}_{\mathbf{j}}=\delta(\mathbf{i}-\mathbf{j})\left[(n-2|\mathbf{i}|)^2-n\right].
 \end{equation}
\end{itemize}

Let us now express $\ket{D^k_{2J}}\ket{\psi^-}^{\otimes m/2}$ in the computational basis. It will be convenient to denote $\Delta(\mathbf{i'})=\prod_{a=0}^{m/2-1}\delta(i'_{2a}-i'_{2a+1}-1)$. Let us also denote by $\mathbf{\aleph}=10101010...10$ and by $\mathbf{i'}\cdot \mathbf{\aleph}$ the scalar product $\sum_{a=0}^{m/2-1}i_{2a+1}'$. Then, we have that:
\begin{equation}
 \ket{D^k_{2J}}=\sum_{\mathbf{i}=0}^{2^{2J}-1} {2J\choose k}^{-1/2} \delta(|\mathbf{i}|-k)\ket{\mathbf{i}}, \qquad \ket{\psi^{-}}^{\otimes m/2}=\sum_{\mathbf{i'}=0}^{2^m-1} 2^{-m/4}(-1)^{\mathbf{i'}\cdot \mathbf{\aleph}}\Delta(\mathbf{i'})\ket{\mathbf{i'}}.
\end{equation}

The matrix for the change of basis, denoted $p$, is then
\begin{equation}
 p = \sum_{\mathbf{i}=0}^{2^{2J}-1}\sum_{\mathbf{i'}=0}^{2^m-1}\sum_{j=0}^{2J}
 \frac{(-1)^{\mathbf{i'}\cdot\mathbf{\aleph}}}{\sqrt{{2J \choose j}2^{m/2}}}\delta(|\mathbf{i}|-j)\Delta(\mathbf{i'})\ket{\mathbf{i}}\ket{\mathbf{i'}}\bra{D^k_{2J}}\bra{\psi^-}^{\otimes m/2}.
\end{equation}
Observe that $p^Tp={\mathbbm 1}_{2J+1}$, whereas $pp^T$ is the projector onto the space ${\cal H}_J$ in (\ref{eq:Schur-Weyl}). We shall denote $\ket{\underline{\mathbf i}}=\ket{\mathbf i}\ket{\mathbf{i'}}$ and $\underline{\mathbf{i}}=2^m\mathbf{i}+\mathbf{i'}$ in order to make the expressions more compact.\\
\begin{itemize}
 \item $(p^TS_xp)^k_l$.
  \begin{equation}
  \begin{array}{ccl}
   (p^TS_xp)^k_l&=&\displaystyle\sum_{\underline{\mathbf{i}}, \underline{\mathbf{j}}}p_k^{\underline{\mathbf{i}}} (S_x)^{\underline{\mathbf{i}}}_{\underline{\mathbf{j}}} p^{\underline{\mathbf{j}}}_l\\
   &=&\displaystyle\sum_{\underline{\mathbf{i}}, \underline{\mathbf{j}}}\frac{(-1)^{\mathbf{i'}\cdot \mathbf{\aleph}}(-1)^{\mathbf{j'}\cdot \mathbf{\aleph}}}{\sqrt{2^m{2J \choose k}{2J \choose l}}}\Delta(\mathbf{i'})\Delta(\mathbf{j'})\delta(|\underline{\mathbf{i}}\oplus\underline{\mathbf{j}}|-1)\delta(|\mathbf{i}|-k)\delta(|\mathbf{j}|-l).
  \end{array}\nonumber
 \end{equation}
 Using the fact that $|\underline{\mathbf{i}}\oplus\underline{\mathbf{j}}|=|\mathbf{i}\oplus \mathbf{j}|+|\mathbf{i'}\oplus \mathbf{j'}|$, it is convenient to distinguish the two possible cases for which $\delta(|\underline{\mathbf{i}}\oplus\underline{\mathbf{j}}|-1)$ is nonzero: $|\mathbf{i}\oplus \mathbf{j}|=0$ and $|\mathbf{i}\oplus \mathbf{j}|=1$. Equivalently, $\mathbf{i}=\mathbf{j}$ and $\mathbf{i'}=\mathbf{j'}$.
 We can write $\delta(|\underline{\mathbf{i}}\oplus\underline{\mathbf{j}}|-1) = \delta(\mathbf{i}-\mathbf{j})\delta(|\mathbf{i'}\oplus \mathbf{j'}|-1) + \delta(|\mathbf{i}\oplus \mathbf{j}|-1)\delta(\mathbf{i'}-\mathbf{j'})$.
 The first term does not lead to any contribution, because if $\mathbf{i'}$ and $\mathbf{j'}$ differ exactly in one bit, then either $|\mathbf{i'}|$ or $|\mathbf{j'}|$ cannot be $m/2$, and $\Delta(\mathbf{i'})\Delta(\mathbf{j'})$ neutralizes any index with weight other than $m/2$. Hence, we are left with the last summand, which is
 \begin{equation}
  \begin{array}{ccl}
   (p^TS_xp)^k_l&=&\displaystyle\sum_{\underline{\mathbf{i}}, \underline{\mathbf{j}}}p_k^{\underline{\mathbf{i}}} (S_x)^{\underline{\mathbf{i}}}_{\underline{\mathbf{j}}} p^{\underline{\mathbf{j}}}_l\\
   &=&\displaystyle\sum_{\underline{\mathbf{i}}, \underline{\mathbf{j}}}\frac{(-1)^{\mathbf{i'}\cdot\mathbf{\aleph}}(-1)^{\mathbf{j'}\cdot\mathbf{\aleph}}}{\sqrt{2^m{2J \choose k}{2J \choose l}}}\Delta(\mathbf{i'})\Delta(\mathbf{j'})\delta(|\mathbf{i}\oplus\mathbf{j}|-1)\delta(\mathbf{i'}-\mathbf{j'})\delta(|\mathbf{i}|-k)\delta(|\mathbf{j}|-l)\\
   &=&\displaystyle\sum_{\mathbf{i}, \mathbf{j}, \mathbf{i'}}\frac{(-1)^{2\mathbf{i'}\cdot \mathbf{\aleph}}}{\sqrt{2^m{2J \choose k}{2J \choose l}}}\Delta(\mathbf{i'})\delta(|\mathbf{i}\oplus\mathbf{j}|-1)\delta(|\mathbf{i}|-k)\delta(|\mathbf{j}|-l)\\
   &=&\displaystyle\sum_{\mathbf{i}, \mathbf{j}}\frac{1}{\sqrt{2^m{2J \choose k}{2J \choose l}}}\delta(|\mathbf{i}\oplus\mathbf{j}|-1)\delta(|\mathbf{i}|-k)\delta(|\mathbf{j}|-l)\sum_{\mathbf{i'}}\Delta(\mathbf{i'})\\
   &=&\displaystyle\sum_{\mathbf{i}, \mathbf{j}}\frac{1}{\sqrt{{2J \choose k}{2J \choose l}}}\delta(|\mathbf{i}\oplus\mathbf{j}|-1)\delta(|\mathbf{i}|-k)\delta(|\mathbf{j}|-l).
  \end{array}\nonumber
 \end{equation}
 Now we use the following fact:
 \begin{equation}
  \sum_{\mathbf{j}=0}^{2^{2J}-1}\delta(|\mathbf{j}|-l)\delta(|\mathbf{i}\oplus\mathbf{j}|-1)=|\mathbf{i}|\delta(|\mathbf{i}|-l-1)+(2J-|\mathbf{i}|)\delta(|\mathbf{i}|-l+1),
  \label{eq:tmp36}
 \end{equation}
 which is explained as follows: The left hand side of (\ref{eq:tmp36}) counts, among all words $\mathbf{j}$, which of them have weight $l$ and are at Hamming distance from a given $\mathbf{i}$ (they differ exactly in one bit). There are $|\mathbf{i}|$ of them which have weight $l=|\mathbf{i}|-1$ (\textit{i.e.}, we turn a $1$ into a $0$) and there are $2J-|\mathbf{i}|$ of them which have weight $l=|\mathbf{i}|+1$ (\textit{i.e.}, we turn a $0$ into a $1$).
 \begin{equation}
  \begin{array}{ccl}
   (p^TS_xp)^k_l&=&\displaystyle\sum_{\mathbf{i}}\frac{\delta(|\mathbf{i}|-k)}{\sqrt{{2J \choose k}{2J \choose l}}}(k\delta(k-l-1)+ (2J-k)\delta(k-l+1))\\
   &=&\frac{k{2J\choose k}\delta(k-l-1)+(2J-k){2J\choose k}\delta(k-l+1)}{\sqrt{{2J\choose k}{2J \choose l}}}\\
   &=&(l+1)\sqrt{{2J \choose l+1}{2J \choose l}}\delta(k-l-1)+(2J-k)\sqrt{{2J \choose k}/{2J \choose k+1}}\delta(l-k-1)\\
   &=&\sqrt{(l+1)(2J-l)}\delta(k-l-1)+\sqrt{(k+1)(2J-k)}\delta(l-k-1).
  \end{array}\nonumber
 \end{equation}

 \item For $(p^TS_yp)^k_l$, the same reasoning as $(p^TS_xp)^k_l$ follows, with the additional fact that the term $\mathrm{sgn}(\mathbf{i}-\mathbf{j})$ turns into $\mathrm{sgn}(k-l)$ because
 \begin{equation}
  \mathrm{sgn}(\underline{\mathbf{i}}-\underline{\mathbf{j}})\delta(\mathbf{i'}-\mathbf{j'})\delta(|\mathbf{i}\oplus\mathbf{j}|-1)\delta(|\mathbf{i}|-k)\delta(|\mathbf{j}|-l)=
  \mathrm{sgn}(\mathbf{i}-\mathbf{j})\delta(|\mathbf{i}\oplus\mathbf{j}|-1)\delta(|\mathbf{i}|-k)\delta(|\mathbf{j}|-l)=k-l,\nonumber
 \end{equation}
 and $(k-l)[\delta(k-l-1)+\delta(l-k-1)]=\mathrm{sgn}(k-l)$.

 \item Let us calculate $(p^TS_zp)^k_l$.
 \begin{equation}
  \begin{array}{ccl}
   (p^TS_zp)^k_l&=&\displaystyle\sum_{\underline{\mathbf{i}}, \underline{\mathbf{j}}}p_k^{\underline{\mathbf{i}}} (S_z)^{\underline{\mathbf{i}}}_{\underline{\mathbf{j}}} p^{\underline{\mathbf{j}}}_l\\
   &=&\displaystyle\sum_{\underline{\mathbf{i}}, \underline{\mathbf{j}}}\frac{(n-2|\underline{\mathbf{i}}|)(-1)^{\mathbf{i'}\mathbf{\aleph}}(-1)^{\mathbf{j'}\mathbf{\aleph}}}{\sqrt{2^m{2J \choose k}{2J \choose l}}}\Delta(\mathbf{i'})\Delta(\mathbf{j'})\delta(\underline{\mathbf{i}}-\underline{\mathbf{j}})\delta(|\mathbf{i}|-k)\delta(|\mathbf{j}|-l)\\
   &=&\displaystyle\sum_{\underline{\mathbf{i}}}\frac{(n-2|\underline{\mathbf{i}}|)(-1)^{(\mathbf{i'}+\mathbf{i'})\cdot \mathbf{\aleph}}}{\sqrt{2^m{2J \choose k}{2J \choose l}}}\Delta(\mathbf{i'})\delta(|\mathbf{i}|-k)\delta(|\mathbf{i}|-l)\\
   &=&\displaystyle\sum_{\mathbf{i}}\frac{\delta(|\mathbf{i}|-k)\delta(k-l)}{\sqrt{{2J \choose k}{2J \choose l}}}\sum_{\mathbf{i'}}\frac{(n-2|\mathbf{i}|-2|\mathbf{i'}|)}{\sqrt{2^m}}\Delta(\mathbf{i'})\\
   &=&\displaystyle\frac{{2J \choose k}\delta(k-l)}{\sqrt{{2J \choose k}{2J \choose k}}}\sum_{\mathbf{i'}}\frac{(n-2k-2m/2)}{\sqrt{2^m}}\Delta(\mathbf{i'})\\
   &=&\displaystyle\delta(k-l)\sum_{\mathbf{i'}}\frac{2J-2k}{2^{m/2}}\Delta(\mathbf{i'})=(2J-2k)\delta(k-l).\\
  \end{array}\nonumber
 \end{equation}
 We have used the fact that any function $f$ of $|\mathbf{i'}|$, when multiplied by $\Delta(\mathbf{i'})$, is a function of $m/2$, because only words with weight $m/2$ survive $\Delta(\mathbf{i'})$.\\
Hence, we obtain $(p^TS_zp)^k_l=2(J-k)\delta(k-l)$.

 \item Let us now move on to $(p^TS_{xx}p)^k_l$.
 \begin{equation}
  \begin{array}{ccl}
   (p^TS_{xx}p)^k_l&=&\displaystyle\sum_{\underline{\mathbf{i}}, \underline{\mathbf{j}}}p_k^{\underline{\mathbf{i}}} (S_{xx})^{\underline{\mathbf{i}}}_{\underline{\mathbf{j}}} p^{\underline{\mathbf{j}}}_l\\
   &=&2\displaystyle\sum_{\underline{\mathbf{i}}, \underline{\mathbf{j}}}\frac{(-1)^{\mathbf{i'}\cdot \mathbf{\aleph}}(-1)^{\mathbf{j'}\cdot \mathbf{\aleph}}}{\sqrt{2^m{2J \choose k}{2J \choose l}}}\Delta(\mathbf{i'})\Delta(\mathbf{j'})\delta(|\underline{\mathbf{i}}\oplus\underline{\mathbf{j}}|-2)\delta(|\mathbf{i}|-k)\delta(|\mathbf{j}|-l).
  \end{array}\nonumber
 \end{equation}
 Similarly to the case of $S_x$, now we use the fact that $\delta(|\underline{\mathbf{i}}\oplus\underline{\mathbf{j}}|-2) = \delta(\mathbf{i}-\mathbf{j})\delta(|\mathbf{i'}\oplus\mathbf{j'}|-2)+\delta(|\mathbf{i}\oplus\mathbf{j}|-1)\delta(|\mathbf{i'}\oplus\mathbf{j'}|-1)+\delta(|\mathbf{i}\oplus\mathbf{j}|-2)\delta(\mathbf{i'}-\mathbf{j'})$. The term in the middle is irrelevant, as the delta factors $\Delta(\mathbf{i'})\Delta(\mathbf{j'})$ will cancel it out (because of the term $\delta(|\mathbf{i'}\oplus\mathbf{j'}|-1)$, not both of them can have weight $m/2$).
 Let us consider the remaining terms:
 \begin{itemize}
  \item For the case that $\mathbf{i}=\mathbf{j}$, its contribution is
  \begin{equation}
  \begin{array}{ccl}
   &&2\displaystyle\sum_{\mathbf{i}, \mathbf{i'}, \mathbf{j'}}\frac{(-1)^{\mathbf{i'}\cdot \mathbf{\aleph}}(-1)^{\mathbf{j'}\cdot \mathbf{\aleph}}}{\sqrt{2^m{2J \choose k}{2J \choose l}}}\Delta(\mathbf{i'})\Delta(\mathbf{j'})\delta(|\mathbf{i'}\oplus\mathbf{j'}|-2)\delta(|\mathbf{i}|-k)\delta(|\mathbf{i}|-l)\\
   &=&2\delta(k-l)\displaystyle\sum_{\mathbf{i'}, \mathbf{j'}}\frac{(-1)^{\mathbf{i'}\cdot\mathbf{\aleph}}(-1)^{\mathbf{j'}\cdot\mathbf{\aleph}}}{\sqrt{2^m}}\Delta(\mathbf{i'})\Delta(\mathbf{j'})\delta(|\mathbf{i'}\oplus\mathbf{j'}|-2)\\
   &=&2\delta(k-l)\displaystyle\sum_{\mathbf{i_L'}, \mathbf{j_L'}=0}^{2^{m/2}-1}\frac{(-1)}{\sqrt{2^m}}\delta(|\mathbf{i_L'}\oplus\mathbf{j_L'}|-1)=
   -2\delta(k-l)\displaystyle\sum_{\mathbf{i_L'}, \mathbf{j_L'}=0}^{2^{m/2}-1}\frac{1}{\sqrt{2^m}}\delta(|\mathbf{i_L'}\oplus\mathbf{j_L'}|-1)\\
   &=&-2\delta(k-l)\displaystyle\sum_{\mathbf{i_L'}=0}^{2^{m/2}-1}\frac{m/2}{\sqrt{2^m}}=-m\delta(k-l),
  \end{array}\nonumber
  \end{equation}
  where we have used the encoding for the logical qubits $\ket{0_L}=\ket{01}$ and $\ket{1_L}=\ket{10}$. Observe that $\Delta(\mathbf{i'})\Delta(\mathbf{j'})\delta(|\mathbf{i'}\oplus\mathbf{j'}|-2)=\delta(|\mathbf{i_L'}\oplus\mathbf{j_L'}|-1)$, because a pair $01$ has to go to a pair $10$ in order to be considered. Observe as well that $\sum_{ \mathbf{j_L'}}\delta(|\mathbf{i_L'}\oplus\mathbf{j_L'}|-1)=m/2$, as there are $m/2$ logical bits to flip.
 \item And for the case $\mathbf{i'}=\mathbf{j'}$, its contribution is
  \begin{equation}
  \begin{array}{ccl}
   &&2\displaystyle\sum_{\mathbf{i}, \mathbf{j}, \mathbf{i'}}\frac{(-1)^{\mathbf{i'}\cdot{\mathbf{\aleph}}}(-1)^{\mathbf{i'}\cdot\mathbf{\aleph}}}{\sqrt{2^m{2J \choose k}{2J \choose l}}}\Delta(\mathbf{i'})\delta(|\mathbf{i}\oplus\mathbf{j}|-2)\delta(|\mathbf{i}|-k)\delta(|\mathbf{j}|-l)\\
   &=&2\displaystyle\sum_{\mathbf{i}, \mathbf{j}}\frac{1}{\sqrt{{2J \choose k}{2J \choose l}}}\delta(|\mathbf{i}\oplus\mathbf{j}|-2)\delta(|\mathbf{i}|-k)\delta(|\mathbf{j}|-l)\\
   &=&\frac{2}{\sqrt{{2J \choose k}{2J \choose l}}}\displaystyle\sum_{\mathbf{i}}\delta(|\mathbf{i}|-k)\left[{|\mathbf{i}|\choose 2}\delta(|\mathbf{i}|-2-l)+|\mathbf{i}|(2J-|\mathbf{i}|)\delta(|\mathbf{i}|-l)+{2J-|\mathbf{i}|\choose 2}\delta(|\mathbf{i}|+2-l)\right]\\
   &=&\frac{2{2J\choose k}}{\sqrt{{2J \choose k}{2J \choose l}}}\left[{k \choose 2}\delta(k-2-l)+k(2J-k)\delta(k-l)+{2J-k\choose 2}\delta(k+2-l)\right]\\
   &=&2k(2J-k)\delta(k-l)+2{k \choose 2}\sqrt{{2J\choose k}/{2J \choose l}}\delta(k-2-l)+2{2J-k \choose 2}\sqrt{{2J\choose k}/{2J\choose l}}\delta(k+2-l)\\
   &=&2k(2J-k)\delta(k-l)+\sqrt{(l+2)(l+1)(2J-l)(2J-l-1)}\delta(k-2-l)\\
   &&+\sqrt{(2J-k)(2J-k-1)(k+1)(k+2)}\delta(l-k-2).\\
  \end{array}\nonumber
  \end{equation}

  We have used that, given a word $\mathbf{i}$, the words $\mathbf{j}$ at Hamming distance $2$ from $\mathbf{i}$ can be classified into $3$ subsets: ${|\mathbf{i}| \choose 2}$ of them having weight $|\mathbf{j}|=|\mathbf{i}|-2$; ${|\mathbf{i}|\choose 1}{2J-|\mathbf{i}|\choose 1}$ of them having weight $|\mathbf{j}|=|\mathbf{i}|$; and ${2J-|\mathbf{i}|\choose 2}$ of them having weight $|\mathbf{j}|=|\mathbf{i}|+2$. One can check that the relation ${2J \choose 2}={|\mathbf{i}| \choose 2}+{|\mathbf{i}|\choose 1}{2J-|\mathbf{i}|\choose 1}+{2J-|\mathbf{i}|\choose 2}$ holds, as it counts all the possibilities (all the ways to pick the $2$ different bits between $\mathbf{i}$ and $\mathbf{j}$).\\
 \end{itemize}

 \item The case $(p^TS_{xy}p)^k_l$ is similar to $(p^TS_{xx}p)^k_l$, so we shall point out only their differences:
 \begin{equation}
  \begin{array}{ccl}
   (p^TS_{xy}p)^k_l&=&\displaystyle\sum_{\underline{\mathbf{i}}, \underline{\mathbf{j}}}p_k^{\underline{\mathbf{i}}} (S_{xy})^{\underline{\mathbf{i}}}_{\underline{\mathbf{j}}} p^{\underline{\mathbf{j}}}_l\\
   &=&\mathbbm{i}\displaystyle\sum_{\underline{\mathbf{i}}, \underline{\mathbf{j}}}\frac{(-1)^{\mathbf{i'}\cdot\mathbf{\aleph}}(-1)^{\mathbf{j'}\cdot \mathbf{\aleph}}}{\sqrt{2^m{2J \choose k}{2J \choose l}}}\Delta(\mathbf{i'})\Delta(\mathbf{j'})\mathrm{sgn}(\underline{\mathbf{i}}-\underline{\mathbf{j}})
   \left(1-(-1)^{\frac{|\underline{\mathbf{i}}|-|\underline{\mathbf{j}}|}{2}}\right)
   \delta(|\underline{\mathbf{i}}\oplus\underline{\mathbf{j}}|-2)\delta(|\mathbf{i}|-k)\delta(|\mathbf{j}|-l).
  \end{array}\nonumber
 \end{equation}
 Again, we divide into the cases where $\mathbf{i}=\mathbf{j}$ or $\mathbf{i'}=\mathbf{j'}$.
 \begin{itemize}
  \item Case $\mathbf{i}=\mathbf{j}$ and $|\mathbf{i'}\oplus \mathbf{j'}|=2$:\\
  The sign function is then determined by the part corresponding to the singlet: $\mathrm{sgn}(\mathbf{i'}-\mathbf{j'})$, and we have
   \begin{equation}
  \begin{array}{ccl}
   &&\mathbbm{i}\displaystyle\sum_{\mathbf{i}, \mathbf{i'}, \mathbf{j'}}\frac{(-1)^{\mathbf{i'}\cdot\mathbf{\aleph}}(-1)^{\mathbf{j'}\cdot\mathbf{\aleph}}}{\sqrt{2^m{2J \choose k}{2J \choose l}}}\Delta(\mathbf{i'})\Delta(\mathbf{j'})\mathrm{sgn}(\mathbf{i'}-\mathbf{j'})
   \left[1-(-1)^{\frac{|\mathbf{i'}|-|\mathbf{j'}|}{2}}\right]
   \delta(|\mathbf{i'}\oplus\mathbf{j'}|-2)\delta(|\mathbf{i}|-k)\delta(|\mathbf{i}|-l).
  \end{array}\nonumber
 \end{equation}
 But if $\mathbf{i'}$ and $\mathbf{j'}$ differ exactly in two bits, because of the factors $\Delta(\mathbf{i'})\Delta(\mathbf{j'})$, we only have to consider when a pair $01$ turns into a pair $10$ and viceversa. So, $|\mathbf{i'}|=|\mathbf{j'}|$ and the whole expression is zero.
 \item Case $\mathbf{i'}=\mathbf{j'}$ and $|\mathbf{i}\oplus \mathbf{j}|=2$:
 Now the sign function is determined by the part corresponding to the Dicke states: $\mathrm{sgn}(\mathbf{i}-\mathbf{j})$, so we have
  \begin{equation}
  \begin{array}{ccl}
   &&\mathbbm{i}\displaystyle\sum_{\mathbf{i}, \mathbf{j}, \mathbf{i'}}\frac{(-1)^{2\mathbf{i'}\cdot\mathbf{\aleph}}}{\sqrt{2^m{2J \choose k}{2J \choose l}}}\Delta(\mathbf{i'})\mathrm{sgn}(\mathbf{i}-\mathbf{j})
   \left[1-(-1)^{\frac{|\mathbf{i}|-|\mathbf{j}|}{2}}\right]
   \delta(|\mathbf{i}\oplus\mathbf{j}|-2)\delta(|\mathbf{i}|-k)\delta(|\mathbf{j}|-l)\\
   &=&\mathbbm{i}\displaystyle\sum_{\mathbf{i}, \mathbf{j}}\frac{1}{\sqrt{{2J \choose k}{2J \choose l}}}\mathrm{sgn}(\mathbf{i}-\mathbf{j})
   \left[1-(-1)^{\frac{|\mathbf{i}|-|\mathbf{j}|}{2}}\right]
   \delta(|\mathbf{i}\oplus\mathbf{j}|-2)\delta(|\mathbf{i}|-k)\delta(|\mathbf{j}|-l)\displaystyle\sum_{\mathbf{i'}}\frac{\Delta(\mathbf{i'})}{2^{m/2}}\\
   &=&\mathbbm{i}\displaystyle\sum_{\mathbf{i}, \mathbf{j}}\frac{1}{\sqrt{{2J \choose k}{2J \choose l}}}\mathrm{sgn}(\mathbf{i}-\mathbf{j})
   \left[1-(-1)^{\frac{|\mathbf{i}|-|\mathbf{j}|}{2}}\right]
   \delta(|\mathbf{i}\oplus\mathbf{j}|-2)\delta(|\mathbf{i}|-k)\delta(|\mathbf{j}|-l).\\
  \end{array}\nonumber
 \end{equation}
 We observe now that there are only $3$ possibilities: $|\mathbf{i}|=|\mathbf{j}|\pm 2$ and $|\mathbf{i}|=|\mathbf{j}|$, but this last one clearly leads to $0$. If $|\mathbf{i}|=|\mathbf{j}|+2$ , then $\mathbf{i}>\mathbf{j}$, because we must change exactly two bits, and they must be two ones turning into two zeroes, as we have the $\delta(|\mathbf{i}\oplus\mathbf{j}|-2)$ factor. Similarly, if $|\mathbf{i}|=|\mathbf{j}|-2$, then $\mathbf{i}<\mathbf{j}$. We can simplify then $\mathrm{sgn}(\mathbf{i}-\mathbf{j})=(|\mathbf{i}|-|\mathbf{j}|)/2$. Hence, we  can write the following:
  \begin{equation}
  \begin{array}{ccl}
   &&2\mathbbm{i}\displaystyle\sum_{\mathbf{i}, \mathbf{j}}\frac{1}{\sqrt{{2J \choose k}{2J \choose l}}}\frac{k-l}{2}
   \delta(|\mathbf{i}\oplus\mathbf{j}|-2)\delta(|\mathbf{i}|-k)\delta(|\mathbf{j}|-l).\\
  \end{array}\nonumber
 \end{equation}
 Now, by following the same discussion as in $(p^TS_{xx}p)^k_l$, we arrive at the expression
 \begin{equation}
  \begin{array}{ccl}
   &&\mathbbm{i}\frac{k-l}{2}\left[2k(2J-k)\delta(k-l) + \sqrt{(l+2)(l+1)(2J-l)(2J-l-1)}\delta(k-l-2)\right.\\
   &&\hspace{1cm}+\left.\sqrt{(2J-k)(2J-k-1)(k+2)(k+1)}\delta(l-k-2)\right].\\
  \end{array}\nonumber
 \end{equation}
 The first term vanishes and the factor $(k-l)/2$ can be substituted by $\mathrm{sgn}(k-l)$ because of the Kronecker deltas, leading to the result.
 \end{itemize}

 \item Let us now turn to $(p^TS_{xz}p)^k_l$.
  \begin{equation}
  \begin{array}{ccl}
   (p^TS_{xz}p)^k_l&=&\displaystyle\sum_{\underline{\mathbf{i}}, \underline{\mathbf{j}}}p_k^{\underline{\mathbf{i}}} (S_{xz})^{\underline{\mathbf{i}}}_{\underline{\mathbf{j}}} p^{\underline{\mathbf{j}}}_l\\
   &=&\displaystyle\sum_{\underline{\mathbf{i}}, \underline{\mathbf{j}}}\frac{(-1)^{\mathbf{i'}\cdot\mathbf{\aleph}}(-1)^{\mathbf{j'}\cdot \mathbf{\aleph}}}{\sqrt{2^m{2J \choose k}{2J \choose l}}}\Delta(\mathbf{i'})\Delta(\mathbf{j'})(n-|\underline{\mathbf{i}}|-|\underline{\mathbf{j}}|)
   \delta(|\underline{\mathbf{i}}\oplus\underline{\mathbf{j}}|-1)\delta(|\mathbf{i}|-k)\delta(|\mathbf{j}|-l).
  \end{array}\nonumber
 \end{equation}
 As in $(p^TS_{x}p)^k_l$, it is convenient to distinguish $\delta(|\underline{\mathbf{i}}\oplus\underline{\mathbf{j}}|-1)$ into the cases $|\mathbf{i}|=|\mathbf{j}|\pm1, \mathbf{i'}=\mathbf{j'}$ or $\mathbf{i}=\mathbf{j}, |\mathbf{i'}|=|\mathbf{j'}|\pm1$. The second one is neutralized by the functions $\Delta(\mathbf{i'})\Delta(\mathbf{j'})$, so we only need to consider $\delta(|\mathbf{i}\oplus\mathbf{j}|-1)\delta(\mathbf{i'}-\mathbf{j'})$: Summing over $\mathbf{i'}$ we obtain
 \begin{equation}
  \begin{array}{ccl}
   (p^TS_{xz}p)^k_l&=&\displaystyle\sum_{\mathbf{i}, \mathbf{j}}\frac{1}{\sqrt{{2J \choose k}{2J \choose l}}}[n-|\mathbf{i}|-|\mathbf{j}|-2(m/2)]
   \delta(|\mathbf{i}\oplus\mathbf{j}|-1)\delta(|\mathbf{i}|-k)\delta(|\mathbf{j}|-l).
  \end{array}\nonumber
 \end{equation}
 Recall that, given $\mathbf{i}$, there are $|\mathbf{i}|$ words $\mathbf{j}$ having weight $|\mathbf{i}|-1$ and $2J-|\mathbf{i}|$ words $\mathbf{j}$ having weight $|\mathbf{i}|+1$. Thus, summing over $\mathbf{j}$ we obtain
 \begin{equation}
  \begin{array}{ccl}
   (p^TS_{xz}p)^k_l&=&\displaystyle\sum_{\mathbf{i}}\frac{1}{\sqrt{{2J \choose k}{2J \choose l}}}(2J-k-l)
   \delta(|\mathbf{i}|-k)\left[|\mathbf{i}|\delta(|\mathbf{i}|-l-1)+(2J-|\mathbf{i}|)\delta(|\mathbf{i}|-l+1)\right]\\
   &=&{2J \choose k}\frac{1}{\sqrt{{2J \choose k}{2J \choose l}}}(2J-k-l)
   \left[k\delta(k-l-1)+(2J-k)\delta(k-l+1)\right]\\
   &=&(2J-2l-1)\sqrt{(2J-l)(l+1)}\delta(k-l-1)+(2J-2k-1)\sqrt{(2J-k)(k+1)}\delta(l-k-1).
  \end{array}\nonumber
 \end{equation}

 \item Now we analyze $(p^TS_{yy}p)^k_l$.
  \begin{equation}
  \begin{array}{ccl}
   (p^TS_{yy}p)^k_l&=&\displaystyle\sum_{\underline{\mathbf{i}}, \underline{\mathbf{j}}}p_k^{\underline{\mathbf{i}}} (S_{yy})^{\underline{\mathbf{i}}}_{\underline{\mathbf{j}}} p^{\underline{\mathbf{j}}}_l\\
   &=&2\displaystyle\sum_{\underline{\mathbf{i}}, \underline{\mathbf{j}}}\frac{(-1)^{\mathbf{i'}\cdot \mathbf{\aleph}}(-1)^{\mathbf{j'}\cdot\mathbf{\aleph}}}{\sqrt{2^m{2J \choose k}{2J \choose l}}}\Delta(\mathbf{i'})\Delta(\mathbf{j'})(-1)^{\frac{|\underline{\mathbf{i}}|-|\underline{\mathbf{j}}|}{2}}
   \delta(|\underline{\mathbf{i}}\oplus\underline{\mathbf{j}}|-2)\delta(|\mathbf{i}|-k)\delta(|\mathbf{j}|-l).
  \end{array}\nonumber
 \end{equation}
 This case again looks very similar to $(p^TS_{xx}p)^k_l$. Recall that $\delta(|\underline{\mathbf{i}}\oplus\underline{\mathbf{j}}|-2) = \delta(\mathbf{i}-\mathbf{j})\delta(|\mathbf{i'}\oplus\mathbf{j'}|-2)+\delta(|\mathbf{i}\oplus\mathbf{j}|-1)\delta(|\mathbf{i'}\oplus\mathbf{j'}|-1)+\delta(|\mathbf{i}\oplus\mathbf{j}|-2)\delta(\mathbf{i'}-\mathbf{j'})$ and that the term in the middle is irrelevant, as the delta factors $\Delta(\mathbf{i'})\Delta(\mathbf{j'})$ cancel it out. Hence, we distinguish the two cases:
 \begin{itemize}
  \item Case $\mathbf{i}=\mathbf{j}$: For this case, its contribution is
  \begin{equation}
    \begin{array}{ccl}
    &&2\displaystyle\sum_{\mathbf{i}, \mathbf{i'}, \mathbf{j'}}\frac{(-1)^{\mathbf{i'}\cdot \mathbf{\aleph}}(-1)^{\mathbf{j'}\cdot \mathbf{\aleph}}}{\sqrt{2^m{2J \choose k}{2J \choose l}}}\Delta(\mathbf{i'})\Delta(\mathbf{j'})(-1)^{\frac{|\mathbf{i'}|-|\mathbf{j'}|}{2}}
    \delta(|\mathbf{i'}\oplus\mathbf{j'}|-2)\delta(|\mathbf{i}|-k)\delta(|\mathbf{i}|-l)\\
    &=&2\delta(k-l)\displaystyle\sum_{\mathbf{i'}, \mathbf{j'}}\frac{(-1)^{(\mathbf{i'}+\mathbf{j'})\cdot \mathbf{\aleph}}}{\sqrt{2^m}}\Delta(\mathbf{i'})\Delta(\mathbf{j'})(-1)^{\frac{|\mathbf{i'}|-|\mathbf{j'}|}{2}}
    \delta(|\mathbf{i'}\oplus\mathbf{j'}|-2).\\
    \end{array}\nonumber
  \end{equation}
 Observe that $(-1)^{(\mathbf{i'}+\mathbf{j'})\cdot  \mathbf{\aleph}}=(-1)^{(\mathbf{i'}\oplus\mathbf{j'})\cdot  \mathbf{\aleph}}$ and that $(-1)^{(\mathbf{i'}\oplus\mathbf{j'})\cdot \mathbf{\aleph}}\Delta(\mathbf{i'})\Delta(\mathbf{j'})\delta(|\mathbf{i'}\oplus\mathbf{j'}|-2)$ is nonzero only in those terms for which $\mathbf{i'}\oplus \mathbf{j'}=0000\cdots001100\cdots0000$, which means that $(\mathbf{i'}\oplus\mathbf{j'})\cdot \mathbf{\aleph}=1$. Also, the combination $\Delta(\mathbf{i'})\Delta(\mathbf{j'})\delta(|\mathbf{i'}\oplus\mathbf{j'}|-2)$ only allows for words with equal weight to be considered. Hence,
  \begin{equation}
    \begin{array}{ccl}
    &&2\delta(k-l)\displaystyle\sum_{\mathbf{i'}, \mathbf{j'}}\frac{(-1)}{\sqrt{2^m}}\Delta(\mathbf{i'})\Delta(\mathbf{j'})(-1)^{\frac{|\mathbf{i'}|-|\mathbf{j'}|}{2}}
    \delta(|\mathbf{i'}\oplus\mathbf{j'}|-2)\\
    &=&-2\delta(k-l)\displaystyle\sum_{\mathbf{i'}, \mathbf{j'}}\frac{1}{\sqrt{2^m}}\Delta(\mathbf{i'})\Delta(\mathbf{j'})
    \delta(|\mathbf{i'}\oplus\mathbf{j'}|-2)\\
    &=&-2\delta(k-l)\displaystyle\sum_{\mathbf{i_L'}, \mathbf{j_L'}}\frac{1}{\sqrt{2^m}}\delta(|\mathbf{i_L'}\oplus\mathbf{j_L'}|-1)\\
    &=&-2\delta(k-l)\displaystyle\sum_{\mathbf{i_L'}=0}^{2^{m/2}-1}\frac{m/2}{\sqrt{2^m}}=-m\delta(k-l).
    \end{array}\nonumber
  \end{equation}

 \item Case $\mathbf{i'}=\mathbf{j'}$: For this case, it will be convenient to consider
 \begin{equation}
  \sum_{\mathbf{j}}(-1)^{\frac{|\mathbf{i}|-|\mathbf{j}|}{2}}\delta(|\mathbf{i}\oplus\mathbf{j}|-2)\delta(|\mathbf{j}|-l).\nonumber
 \end{equation}
 There are ${|\mathbf{i}|\choose 2}$ words $\mathbf{j}$ at Hamming distance $2$ from a given $\mathbf{i}$ and with weight $|\mathbf{j}|=|\mathbf{i}|-2$, which are counted with the factor $(-1)^{(|\mathbf{i}|-{|\mathbf{j}|})/2}=-1$; there are $|\mathbf{i}|(2J-|\mathbf{i}|)$ words with weight $|\mathbf{j}|=|\mathbf{i}|$ which are counted with a factor $(-1)^{(|\mathbf{i}|-{|\mathbf{j}|})/2}=1$; and there are ${2J-|\mathbf{i}|\choose 2}$ words with weight $|\mathbf{j}|=|\mathbf{i}|+2$, hence counted with a factor $(-1)^{(|\mathbf{i}|-{|\mathbf{j}|})/2}=-1$.
 \end{itemize}
 Now, the same argument we used for $(p^TS_{xx}p)^k_l$ applies, and we get a contribution from this term equal to
 \begin{equation}
  \begin{array}{ccl}
   &&2k(2J-k)\delta(k-l)-\sqrt{(l+2)(l+1)(2J-l)(2J-l-1)}\delta(k-2-l)\\
   &&-\sqrt{(2J-k)(2J-k-1)(k+1)(k+2)}\delta(l-k-2).
  \end{array}\nonumber
 \end{equation}
Summing the two contributions, we arrive at the result.

 \item For the case $(p^TS_{yz}p)^k_l$, we use the same argument than that for $(p^TS_{xz}p)^k_l$ and, similarly to the $(p^TS_{y}p)^k_l$ case, we gain the extra factor $\mathbbm{i}\mathrm{sgn}(k-l)$.

 \item Finally, for the case $(p^TS_{zz}p)^k_l$, we have to consider
 \begin{equation}
  \begin{array}{ccl}
   (p^TS_{zz}p)^k_l&=&\displaystyle\sum_{\underline{\mathbf{i}}, \underline{\mathbf{j}}}p_k^{\underline{\mathbf{i}}} (S_{zz})^{\underline{\mathbf{i}}}_{\underline{\mathbf{j}}} p^{\underline{\mathbf{j}}}_l\\
   &=&\displaystyle\sum_{\underline{\mathbf{i}}, \underline{\mathbf{j}}}\frac{(-1)^{\mathbf{i'}\cdot \mathbf{\aleph}}(-1)^{\mathbf{j'}\cdot\mathbf{\aleph}}}{\sqrt{2^m{2J \choose k}{2J \choose l}}}\Delta(\mathbf{i'})\Delta(\mathbf{j'})
   [(n-2|\underline{\mathbf{i}}|)^2-n]
   \delta(\underline{\mathbf{i}}-\underline{\mathbf{j}})\delta(|\mathbf{i}|-k)\delta(|\mathbf{j}|-l)\\
   &=&\displaystyle\sum_{\underline{\mathbf{i}}}\frac{(-1)^{\mathbf{i'}\cdot \mathbf{\aleph}}(-1)^{\mathbf{i'}\cdot\mathbf{\aleph}}}{\sqrt{2^m{2J \choose k}{2J \choose l}}}\Delta(\mathbf{i'})\Delta(\mathbf{i'})
   [(n-2|\underline{\mathbf{i}}|)^2-n]
   \delta(|\mathbf{i}|-k)\delta(|\mathbf{i}|-l)\\
   &=&\delta(k-l)\displaystyle\sum_{\mathbf{i}}\frac{1}{\sqrt{{2J \choose k}{2J \choose l}}}
   [(n-2|\mathbf{i}|-2m/2)^2-n]
   \delta(|\mathbf{i}|-k)\sum_{\mathbf{i'}}\frac{\Delta(\mathbf{i'})}{\sqrt{2^{m}}}\\
   &=&\delta(k-l)\displaystyle\sum_{\mathbf{i}}\frac{1}{\sqrt{{2J \choose k}{2J \choose l}}}
   [(2J-2|\mathbf{i}|)^2-n]
   \delta(|\mathbf{i}|-k)\\
   &=&\delta(k-l)[(2J-2k)^2-n]=[(2J-2k)^2-2J-m]\delta(k-l).
  \end{array}\nonumber
 \end{equation}
\end{itemize}
\end{proof}

\section{Experimental errors and 2-body Bell inequalities}\label{AppC}

Here we discuss the role of experimental errors and imperfections in detection
of nonlocality with our two-body Bell inequalities by measuring the collective observables.
Our analysis is motivated by the recent experiment aiming at detection of entanglement in the Dicke states \cite{Klempt}. We analyse how errors in the quantities $S_\theta^2$ affect the ability to observe quantum violation with our Bell inequalities.

In Sec. \ref{sec:Experimental} we have shown how our permutationally invariant Bell inequalities
in the case when all parties perform the same pair of measurements can be
tested with the aid of total spin components $S_i$ $(i=x,y,z)$ and their combinations
$\textbf{m}\cdot\vec{S}$ along $\mathbf{m}$.
Here we take a bit more simplified approach and express our Bell inequalities in terms of only
$S_z$, $S_x$, $S_z^2$, $S_x^2$ and $S_{\pi/4}^2$, where $S_{\pi/4}=(S_x+S_z)/\sqrt{2}$.

To be more precise, let us recall that any permutationally invariant two-body Bell inequality
can be written as
\begin{equation}
\label{eq:eq1}
\beta_c + \alpha {\cal S}_0 + \beta {\cal S}_1 + \frac{\gamma}{2}{\cal S}_{00} + \delta {\cal S}_{01} + \frac{\varepsilon}{2}{\cal S}_{11} \geq 0,
\end{equation}
where $\mathcal{S}_k$ and ${\cal S}_{kl}$ are defined in Eqs. (\ref{eq:def1bodysym}) and (\ref{eq:def2bodysym}), and at each site we use the same pair of measurements of the form ${\cal M}_k^{(i)}=\cos (\theta_k) \sigma_z^{(i)} + \sin(\theta_k)\sigma_x^{(i)}$.
For simplicity, in what follows, we denote $\theta_0 = \varphi, \theta_1 = \theta$.
Denoting then by $\mathcal{S}_u$ and $\mathcal{S}_{uv}$ with $u,v=x,z$ the permutationally invariant
expectation values $\mathcal{S}_k$ and $\mathcal{S}_{kl}$ in which the measurements are taken to be simply
$\sigma_z$ and $\sigma_x$, we can now rewrite (\ref{eq:eq1}) as
%
\begin{equation}
\label{eq:eq2}
\beta_c + A {\cal S}_z + A'{\cal S}_x + \frac{B}{2}{\cal S}_{zz} + D {\cal S}_{xz}+ \frac{C}{2}{\cal S}_{xx} \geq 0,
\end{equation}
where
$$
\begin{array}{lll}
A&=&\alpha \cos \varphi + \beta \cos \theta,\quad
A'=\alpha \sin \varphi + \beta \sin \varphi,\\
B&=&\gamma \cos^2\varphi + 2 \delta \cos \varphi \cos \theta + \varepsilon \cos^2 \theta,\\
C&=&\gamma \sin^2\varphi + 2 \delta \sin \varphi \sin \theta + \varepsilon \sin^2 \theta,\\
D&=&\gamma \cos \varphi \sin \varphi + \delta \cos \varphi \sin \theta + \delta \cos \theta \sin \varphi + \varepsilon \cos \theta \sin \theta.
\end{array}
$$

We can clearly express the quantities $\mathcal{S}_u$ and $\mathcal{S}_{uv}$
in terms of the total spin components $S_i$ as ${\cal S}_u=2\langle S_u\rangle$ and ${\cal S}_{uu}=4\langle S_u^2\rangle-n$ with $u,v=x,z$, and
%
%
\begin{equation}
{\cal S}_{xy}=2\langle\{S_x,S_z\}\rangle=2\left(2\langle S_{\pi/4}^2\rangle-\langle S_x^2\rangle-\langle S_z^2\rangle\right).
\end{equation}
%
This allows us to rewrite (\ref{eq:eq2}) in the following form
\begin{equation}
\label{eq:eq3}
\beta_c + 2A \langle S_z\rangle + 2A'\langle S_x\rangle+\frac{B}{2}(4\langle S_z^2\rangle-n)+\frac{C}{2}(4 \langle S_x^2 \rangle - n) + 2D(2\langle S_{\pi/4}^2\rangle-\langle S_x^2\rangle-\langle S_z^2\rangle)\geq 0,
\end{equation}
which can be further  rewritten as:
\begin{equation}
\label{eq:eq4}
\beta_c + 2A\langle S_z\rangle + 2A'\langle S_x\rangle + 2(B-D)\langle S_z^2\rangle + 2(C-D)\langle S_x^2\rangle + 4D \langle S_{\pi/4}^2\rangle - n\frac{B+C}{2}\geq 0.
\end{equation}

Having all this we can now study how the experimental errors influence the violation of our Bell inequalities. Inspired by Ref. \cite{Klempt}, we consider errors of the form
\begin{equation}
\label{eq:errormodel}
\langle S_\theta^2\rangle_{\mbox{measured}}=\kappa + \eta\langle S_{\theta}^2\rangle_{\mbox{ideal}},
\end{equation}
where $0<\eta<1$ is some visibility parameter, $\kappa$ is an offset that depends on $n$, and $\theta=x,z,\pi/4$. The parameter $\kappa$ is strongly related to the detection noise (\textit{i.e.}, the atom counting uncertainty); thus it adds independently of $\eta$.
In this way, (\ref{eq:errormodel}) captures the behavior of the experimental data of Ref. \cite{Klempt}, in which $\eta \approx 0.8$ and $\kappa \approx 100$: For $\langle S_z^2\rangle$, which is ideally $0$, $\kappa$ dominates, whereas for $\langle S_x^2 \rangle$, which is $O(n^2)$, $\eta$ dominates. For $\langle S_{\pi/4}^2 \rangle$ the results of this measurement are not available, but we expect that both effects are taken into account reasonably by (\ref{eq:errormodel}).

\subsection{Half-filled Dicke state}
In the case of $\ket{D^{n/2}_n}$ with even $n$, the ideal values of the total spin components and its moments are

\begin{itemize}
\item $\langle S_x \rangle = \langle S_z \rangle = \langle S_z^2 \rangle=0$,
\item $\langle S_x^2\rangle=n(n+2)/8$,
\item $\langle S_{\pi/4}^2\rangle=n(n+2)/16$.
\end{itemize}
Recall that the Bell inequality that detects the Dicke state $\ket{D^{n/2}_n}$ is
\begin{equation}\label{appC:inequality}
 \left\lceil \frac{n+2}{2} \right \rceil {n \choose 2} + \frac{1}{2}{n \choose 2} {\cal S}_{00} + \frac{n}{2} {\cal S}_{01} - \frac{1}{2}{\cal S}_{11} \geq 0.
\end{equation}
In Fig. \ref{fig:1} one can see the effect that errors have on the violation of Ineq. (\ref{appC:inequality}).

\begin{center}
\begin{figure}
\includegraphics[width=7cm]{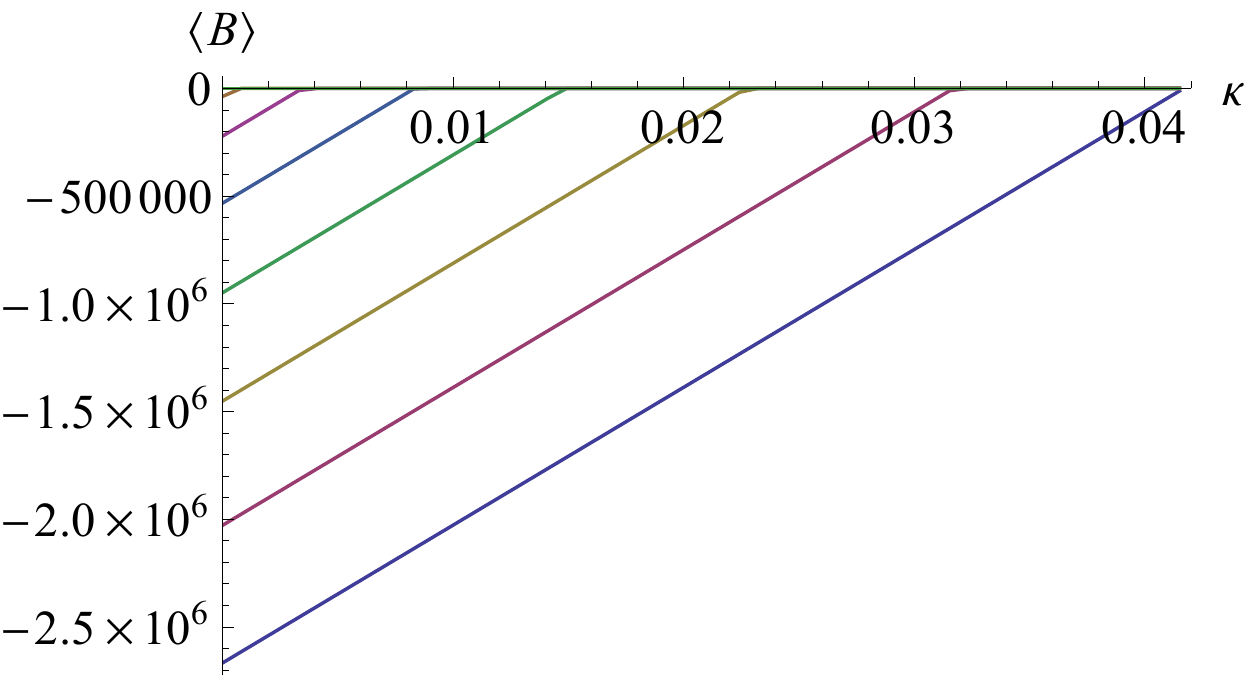}
\caption{On the vertical axis, we have the unnormalized quantum violation (the value of the left-hand side of (\ref{appC:inequality})) for $n=8000$; on the horizontal axis there is $\kappa$. The lines are for $\eta \in \{1,0.95,0.9,\ldots\}$ starting at $\eta=1$ for the blue line.
}\label{fig:1}
\end{figure}
\end{center}

\subsection{State for inequality (6)}
In this other case, for the Gaussian superposition of Dicke states, we have that for large $n$ the ideal values of the total spin components and its moments are
\begin{itemize}
\item $\langle S_z \rangle=1/2\sqrt{3}$,
\item $\langle S_x \rangle=n/2$,
\item $\langle S_z^2\rangle=0$,
\item $\langle S_x^2\rangle=n^2/4$,
\item $\langle S_{\pi/4}^2\rangle=n(n/2+1/\sqrt{3})/4$.
\end{itemize}
Recall that the Bell inequality that is violated by these superpositions is
\begin{equation}\label{appC:inequality2}
 2n-2{\cal S}_0+ \frac{1}{2}{\cal S}_{00} - {\cal S}_{01} + \frac{1}{2}{\cal S}_{11} \geq 0
\end{equation}
In Fig. \ref{fig:2} one can see the effect that errors have on these quantities. This case looks much more promising, as the error in $\langle S_z^2\rangle$ it can support may be almost an order of magnitude larger than what was achieved in the experiment \cite{Klempt}.
\begin{center}
\begin{figure}
\includegraphics[width=7cm]{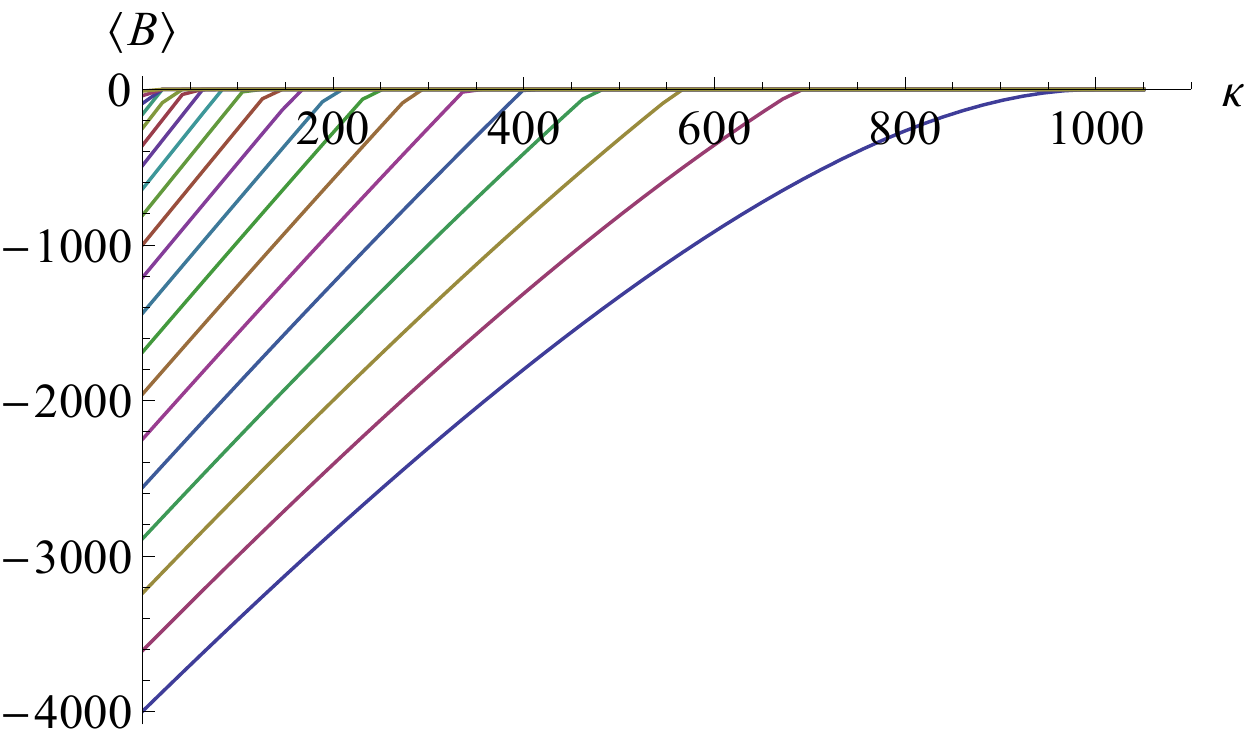}
\caption{On the vertical axis, we have the unnormalized quantum violation (the value of the left-hand side of Ineq. (\ref{appC:inequality2})) for $n=8000$; on the horizontal axis there is $\kappa$. The lines are for $\eta \in \{1,0.95,0.9,\ldots\}$ starting at $\eta=1$ for the blue line.}\label{fig:2}
\end{figure}
\end{center}
%
%
Few remarks are in order:

\begin{itemize}

\item It seems that the considerable difference in the performance with $\kappa$ stems from the fact that in inequality (\ref{appC:inequality2}) the coefficients are \textit{balanced}, whereas in the inequality for the Dicke states (\ref{appC:inequality}) they are \textit{non-linear} functions of $n$. Observe that in both cases $\langle S_z^2\rangle_{\mbox{ideal}}=0$.

\item In inequality (\ref{appC:inequality2}), an interesting fact occurs when we take different values of the parameter $\eta$ for $\langle S_x^2\rangle$ and $\langle S_{\pi/4}^2\rangle$. Then one observes that the two terms in $n^2$ no longer cancel each other out; hence in this way, one can \textit{mimic} a huge violation of the order of $n^2$ by taking one $\eta$ larger than the other.

\item It seems that in order to find a Bell inequality for $\ket{D^{n/2}_n}$ that can tolerate a greater $\kappa$, it  would be desirable to test other Bell inequalities that are also violated by $\ket{D^{n/2}_n}$, but may have more balanced coefficients; the price to pay, of course, will be that their effective quantum violation will be smaller.

\item Last but not least, going to smaller but controllable numbers of atoms (say 100, or so) is, of course, another way to detect (maybe not so `many', but still) many-body nonlocality.

\end{itemize}


\begin{thebibliography}{99}

\bibitem{EPR} A. Einstein, B. Podolsky, and N. Rosen, Phys. Rev.  {\bf 47}, 777 (1935).

\bibitem{Bell64} J. S. Bell, Physics {\bf 1}, 195-200 (1964).

\bibitem{NLreview} N. Brunner, D. Cavalcanti, S. Pironio, V. Scarani, and S. Wehner,
    Rev. Mod. Phys.  \textbf{86}, 419 (2014).

\bibitem{Bellbook} J. S. Bell, \textit{Speakable and Unspeakable in Quantum Mechanics: Collected Papers on Quantum Philosophy}, A. Aspect (Ed.), (Cambridge University Press, Cambridge,
2004).

\bibitem{Bellissue} N. Brunner, O. G\"uhne, and M. Huber (Eds.), \textit{Fifty years of Bell's theorem}, special issue of J. Phys. A: Math. Theor. {\bf 47}, no. 42 (2014).




\bibitem{Bohmbook} D. Bohm, \textit{Quantum Theory}, (Prentice Hall, Englewood Cliffs, NJ, 1951).

\bibitem{CHSH} J. F. Clauser, M. A. Horne, A. Shimony, and R. A. Holt,
\textit{Phys. Rev. Lett.} {\bf 23}, 880 (1969).


    \bibitem{key1}A. K. Ekert,
Phys. Rev. Lett. \textbf{67}, 661 (1991).

\bibitem{key2}J. Barrett, L. Hardy, A. Kent,
Phys. Rev. Lett. \textbf{95}, 010503 (2005).

\bibitem{key3}A. Ac\'in \textit{et al.},
 Phys. Rev. Lett. \textbf{98}, 230501 (2007).

    \bibitem{rand1}S. Pironio \textit{et al.},
Nature \textbf{464}, 1021 (2010).


\bibitem{rand2}R. Colbeck,
Ph.D. thesis, University of Cambridge (2007).

\bibitem{rand3}R. Colbeck and A. Kent,
J. Phys. A: Math. Theor. \textbf{44}, 095305 (2011).

\bibitem{ZurekWheeler}J. A. Wheeler and W. H. \.Zurek (Eds.) \textit{Quantum
Theory and Measurement} (Princeton University Press, Princeton, 1984).




\bibitem{Heisenberg-free} M. Heisenberg, Nature {\bf 459}, 164-165 (2009).



\bibitem{FreeWillbook} A. Suarez and P. Adams (Eds.), \textit{Is Science Compatible with Free Will?: Exploring Free Will and Consciousness in the Light of Quantum Physics and Neuroscience}, (Springer, Berlin, 2013).

\bibitem{Babai} L. Babai, L. Fortnow, and C. Lund, Comp. Complexity {\bf 1} (1), 3 (1991).

\bibitem{Avis} D. Avis, H. Imai, T. Ito, and Y. Sasaki, quant-ph/0404014 (2004).




\bibitem{Werner1989}  R. F. Werner, Phys. Rev. A {\bf 40}, 4277 (1989).

\bibitem{Horodeccy} R. Horodecki, P. Horodecki, M. Horodecki, and K. Horodecki, Rev. Mod. Phys. {\bf 81}, 865 (2009).

\bibitem{Oliveira} T. R. de Oliveira, A. Saguia, and M. S. Sarandy, EPL {\bf 100} 60004 (2012).


\bibitem{Gisin1991} N. Gisin, Phys. Lett. A {\bf 154}, 201 (1991).

\bibitem{PR}S. Popescu and D. Rohrlich,
Phys. Lett. A \textbf{166}, 293 (1992).

\bibitem{Horodecki} M. Horodecki, P. Horodecki, and R. Horodecki, Phys. Rev. Lett. {\bf 78}, 574 (1997).

\bibitem{Horodecki2} M. Horodecki, P. Horodecki, and R. Horodecki, Phys. Rev. Lett. {\bf 80}, 5239 (1998).



\bibitem{Barrett} J. Barrett, Phys. Rev. A {\bf 65},  042302 (2002).

\bibitem{Demian2014} R. Augusiak, M. Demianowicz, J. Tura, and A. Ac\'in, arXiv:1407.3114.

\bibitem{Sachdev} S. Sachdev, {\it Quantum Phase Transitions}, (Cambridge
University Press, Cambridge, 1999).

\bibitem{Augusiak2010} R. Augusiak, F. M. Cucchietti, and M. Lewenstein, in \textit{Modern theories of correlated electron systems}, D.C. Cabra, A. Honecker, and P. Pujol (Eds.),  \textit{Lecture Notes in Physics} {\bf 843}, pp. 245-295,  (Springer-Verlag, Berlin, 2012).

\bibitem{Lewenstein2012} M. Lewenstein, A. Sanpera, and V. Ahufinger, \textit{Ultracold atoms in Optical Lattices: simulating quantum many body physics}, (Oxford University Press, Oxford, 2012).

\bibitem{Amico}A. Osterloh, L. Amico, G. Falci, and R. Fazio,
Nature \textbf{416}, 608 (2002).


\bibitem{Osborne2}T. J. Osborne and M. A. Nielsen,
Phys. Rev. A \textbf{66}, 032110 (2002).


\bibitem{Frank1} F. Verstraete, M. Popp, and J.I. Cirac, Phys. Rev. Lett. {\bf 92}, 027901 (2004).

\bibitem{Frank2} F. Verstraete, M.A. Mart\'{i}n-Delgado, and J.I. Cirac, Phys.Rev.Lett. {\bf 92}, 087201 (2004).

\bibitem{VidalLatorre} G. Vidal, J. I. Latorre, E. Rico, and A. Kitaev,
Phys. Rev. Lett. \textbf{90}, 227902 (2003).



\bibitem{CardyJPA} P. Calabrese, J. Cardy and B. Doyon (Eds.), \textit{Entanglement entropy in extended quantum systems}, special issue of J. Phys A {\bf 42}, 500301 (2009).


\bibitem{EisertRMP}  J. Eisert, M. Cramer, and M. B. Plenio, Rev. Mod. Phys. {\bf  82} 277 (2010).

\bibitem{Frank/Cirac2}F. Verstraete, M. M. Wolf, D. P\'erez-Garc\'ia, and J. I. Cirac,
Phys. Rev. Lett. \textbf{96}, 220601 (2006).


\bibitem{Haldane} H. Li and  F. D. M. Haldane,
Phys. Rev. Lett. \textbf{101}, 010504 (2008).


\bibitem{Wen} M. Levin and X.-G. Wen,
Phys. Rev. Lett. \textbf{96}, 110405 (2006).

\bibitem{Preskill} A. Kitaev and J. Preskill,
Phys. Rev. Lett. \textbf{96}, 110404 (2006).









\bibitem{nier1} R. F. Werner and M. M. Wolf,
Phys. Rev. A \textbf{64}, 032112 (2001).

\bibitem{nier2}M. \.Zukowski and \v{C}. Brukner,
Phys. Rev. Lett. \textbf{88}, 210401 (2001).

\bibitem{Sliwa}C. \'Sliwa,
Phys. Lett. A \textbf{317}, 165 (2003), C. \'Sliwa arXiv:quant-ph/0305190.


\bibitem{JD1}J.-D. Bancal, N. Gisin, and S. Pironio,
J. Phys. A: Math. Theor. {\bf 43}, 385303 (2010).

\bibitem{nier4}L. Aolita, R. Gallego, A. Cabello, and A. Ac\'in,
Phys. Rev. Lett. \textbf{108}, 100401 (2012).

\bibitem{Greiner-single} W. S. Bakr, J. I. Gillen, A. Peng, S. Foelling, and M. Greiner
Nature {\bf 462} 74-77 (2009).

\bibitem{Bloch-single}   J. F. Sherson, C. Weitenberg, M. Endres, M. Cheneau, I. Bloch, and S.  Kuhr, Nature {\bf 467}, 68 (2010).

\bibitem{BSV}N. Brunner, J. Sharam, and T. V\'ertesi,
Phys. Rev. Lett. \textbf{108}, 110501 (2012).

\bibitem{Wiesniak}M. Wie\'sniak, M. Nawareg, and M. \.Zukowski,
Phys. Rev. A \textbf{86}, 042339 (2012).

\bibitem{Lars}L. E. W\"urflinger, J.-D. Bancal, A. Ac\'in,
N. Gisin, and T. V\'ertesi,
Phys. Rev. A \textbf{86}, 032117 (2012).

\bibitem{Science} J. Tura, R. Augusiak, A. B. Sainz, T. V\'ertesi,
M. Lewenstein, and A. Ac\'in, Science \textbf{344}, 1256-1258 (2014).

\bibitem{Gittsovich} O. Gittsovich, P. Hyllus, and O. G\"uhne, Phys. Rev. A {\bf 82}, 032306 (2010).

\bibitem{EntTwoBody}G. T\'oth, Phys. Rev. A \textbf{71}, 010301 (2005);
J. Korbicz, J. I. Cirac, J. Wehr, and M. Lewenstein, Phys. Rev. Lett.
\textbf{94}, 153601 (2005); G. T\'oth, C. Knapp, O. G\"uhne, H. J. Briegel,
Phys. Rev. Lett. \textbf{99}, 250405 (2007); M. Markiewicz, W. Laskowski, T. Paterek, M. \.Zukowski, Phys. Rev. A
\textbf{87}, 034301 (2013).

\bibitem{coll}P. Krammer \textit{et al.}, Phys. Rev. Lett. \textbf{103}, 100502 (2009); M. Cramer, M. B. Plenio, and H. Wunderlich, Phys. Rev. Lett. 106, 020401 (2011).


\bibitem{Klempt} B. L\"ucke, J. Peise, G. Vitagliano, J. Arlt, L. Santos, G. T\'oth, and C. Klempt, Phys. Rev. Lett. {\bf 112}, 155304 (2014).

\bibitem{NPA} M. Navascu\'es, S. Pironio, and A. Ac\'in, New J. Phys. \textbf{10}, 073013 (2008).

\bibitem{Chazelle} B. Chazelle, Discrete Comput. Geom. \textbf{10},  pp 377-409 (1993).

\bibitem{Tobias} T. Fritz, J. Math. Phys. \textbf{53}, 072202 (2012).

\bibitem{Barrett2005} J. Barrett, N. Linden, S. Massar, S. Pironio, S. Popescu, and D. Roberts, Phys. Rev. A \textbf{71}, 022101 (2005).

\bibitem{PironioLifting} S. Pironio, J. Math. Phys. \textbf{46}, 062112 (2005).

\bibitem{Polya} G. P\'olya, Acta Math. \textbf{68}, 145-254 (1937).

\bibitem{TIpaper} J. Tura, A.B. Sainz, T. V\'ertesi T, M. Lewenstein M, A. Ac\'in A, and R. Augusiak, J. Phys. A \textbf{47}, 424024 (2014), special issue \textit{50 years of Bell's Theorem}.

\bibitem{CombinatoricsHandbook} M. Abramowitz, I. Stegun, Eds.,  \textit{Handbook of Mathematical Functions with Formulas, Graphs, and Mathematical Tables (9th ed.)}, (Dover, New York, 1972),  p. 824,  \textit{24.1.3. Stirling Numbers of the First Kind}.

\bibitem{Parrilo} G. Blekherman, P. A. Parrilo, and R. Thomas, \textit{Semidefinite Optimization and Convex Algebraic Geometry},  MPS-SIAM Series on Optimization (2012).


\bibitem{Masanes}Ll. Masanes, arXiv:quant-ph/0512100.

\bibitem{TonerVerstraete}B. Toner and F. Verstraete, arXiv:quant-ph/0611001
%

\bibitem{ChristandlThesis} M. Christandl, PhD Thesis (2005).

\bibitem{MoroderPI} T. Moroder, P. Hyllus, G. Toth, C. Schwemmer, A. Niggebaum, S. Gaile, O. G\"uhne, and H. Weinfurter, New J. Phys. \textbf{14}, 105001 (2012).

\bibitem{Dicke} R. H. Dicke,
Phys. Rev. \textbf{93}, 99-110 (1954).

\bibitem{LMG}H. J. Lipkin, N. Meshkov, and A. J. Glick,
Nucl. Phys. \textbf{62}, 188-198 (1965).



\bibitem{DickeChain} J. Zhou, Y. Hu, X-B. Zou, and G.-C. Guo, Phys. Rev. A \textbf{84}, 042324 (2011).

\bibitem{Eckert2001} K. Eckert, J. Schliemann, D. Bruss, and M. Lewenstein, Ann. Phys. (N.Y.) {\bf 299}, 88-127 (2001).

\bibitem{OurPRA} R. Augusiak, J. Tura, J. Samsonowicz, and M. Lewenstein, Phys. Rev. A \textbf{86}, 042316 (2012).

\bibitem{GMN} Q. Chen, S. Yu, Ch. Zhang, C.H. Lai, and C.H. Oh, Phys. Rev. Lett. {\bf 112}, 140404 (2014).


\bibitem{ExperimentZeilinger} R. Prevedel, G. Cronenberg G, M. S. Tame, M. Paternostro, P. Walther, M. S. Kim, and A. Zeilinger, Phys. Rev. Lett. \textbf{103}, 020503 (2009).

\bibitem{ExperimentWeinfurter} W. Wieczorek, R. Krischek, N. Kiesel, P. Michelberger, G. T\'oth, and H. Weinfurter, Phys. Rev. Lett. \textbf{103}, 020504 (2009).

\bibitem{spin-sq} A. S{\o}rensen, L.-M. Duan, J. I. Cirac, and P. Zoller,
Nature \textbf{409}, 63 (2001).

\bibitem{Muessel} W. Muessel, H. Strobel, D. Linnemann, D. B. Hume, and M. K. Oberthaler,  arXiv:1405.6022.

\bibitem{Strobel} H. Strobel, W. Muessel, D. Linneman, T. Zibold, D. B. Hume, L. Pezz\`e, A. Smerzi, and M. K. Oberthaler, Science {\bf 345}, 424 (2014).

\bibitem{Hume} D. B. Hume, I. Stroescu, M. Joos, W. Muessel, H. Strobel, and M. K. Oberthaler, Phys. Rev. Lett. {\bf 111}, 253001 (2013).

\bibitem{Wenz} A. N. Wenz, G. Z\"urn, S. Murmann, I. Brouzos, T. Lompe, and S. Jochim, Science {\bf 342}, 6157 pp. 457-460 (2013).

\bibitem{Zuern} G. Z\"urn, A. N. Wenz, S. Murmann, A. Bergschneider, T. Lompe, and S. Jochim, Phys. Rev. Lett. {\bf 111}, 175302 (2013).


\bibitem{Wunderlich}  F. Mintert and C. Wunderlich, Phys. Rev. Lett. {\bf 87}, 257904 (2001).

\bibitem{Porras} D. Porras and J. I. Cirac, Phys. Rev. Lett. {\bf 92}, 207901 (2004).

\bibitem{Schaetz} A. Friedenauer, H. Schmitz, J. T. Glueckert, D. Porras,
and T. Schaetz, Nature Phys. {\bf 4}, 757 (2008).


\bibitem{Monroe2010} K. Kim, M.-S. Chang, S. Korenblit, R. Islam, E. E.
Edwards, J. K. Freericks, G.-D. Lin, L.-M. Duan, and C. Monroe, Nature {\bf 465}, 590 (2010).

\bibitem{Porras2} X.-L. Deng, D. Porras, and J. I. Cirac, Phys. Rev. A {\bf 72}, 063407 (2005).

\bibitem{HaukeNJP} P. Hauke, F. M. Cucchietti, A.  M\"uller-Hermes, M.-C. Ba$\tilde{\rm n}$uls, J. I. Cirac, and M. Lewenstein,  New J. Phys. {\bf 12}, 113037 (2010).

\bibitem{Maik} M. Maik, P. Hauke, O. Dutta, J. Zakrzewski, and M. Lewenstein, New J. Phys. {\bf 14}, 113006 (2012).

\bibitem{Bollinger} J. W. Britton, B. C. Sawyer, A. C. Keith, C.-C. J.
Wang, J. K. Freericks, H. Uys, M. J. Biercuk, and J. J. Bollinger, Nature {\bf 484}, 489 (2012).


\bibitem{Monroe} C. Senko, P. Richerme, J. Smith, A. Lee, I. Cohen, A. Retzker, and C. Monroe, arXiv:1410.0937.

\bibitem{Grass} T. Gra\ss, B. Julia-D\'iaz, M. Ku\'s, and M. Lewenstein,
Phys. Rev. Lett. \textbf{111}, 090404 (2013).

\bibitem{Grass1}T. Gra\ss {\ } and M. Lewenstein,
EPJ Quantum Technology {\bf 1}:8 (2014).

\bibitem{Nayak} K. P. Nayak, P. N. Melentiev, M. Morinaga, F. L. Kien, V. I. Balykin, and K. Hakuta, Opt. Express {\bf 15}, 5431 (2007).

\bibitem{Vetsch}  E. Vetsch, D. Reitz, G. Sagu, R. Schmidt, S. T. Dawkins, and A. Rauschenbeutel, Phys. Rev. Lett. {\bf 104}, 203603 (2010).

\bibitem{Goban} A. Goban, K. S. Choi, D. J. Alton, D. Ding, C. Lacrote, M. Pototschnig, T. Thiele, N. P. Stern, and H. J. Kimble, Phys. Rev. Lett. {\bf 109}, 033603 (2012).


\bibitem{Atomlight} A. Goban, C.-L. Hung, S.-P. Yu, J. D. Hood, J. A. Muniz, J. H. Lee, M. J. Martin, A. C. McClung, K. S. Choi, D. E. Chang, O. Painter, and H. J. Kimble,  arXiv:1312.3446.

\bibitem{Kien} F. L. Kien and K. Hakuta, Phys. Rev. A {\bf 77}, 013801 (2008).

\bibitem{Zoubi} H. Zoubi and H. Ritsch, New J. Phys. {\bf 12}, 103014 (2010).

\bibitem{Gong} Y. Chang, Z. R. Gong, and C. P. Sun, Phys. Rev. A {\bf 83}, 013825 (2011).

\bibitem{Gorshkov} D. E. Chang, L. Jiang, A. V. Gorshkov, and H. J. Kimble, New J. Phys. {\bf 14}, 063003 (2012).

\bibitem{Chang} D. E. Chang, J. I. Cirac, and H. J. Kimble,
Phys. Rev. Lett. {\bf 110}, 113606 (2013).

\bibitem{Chang1} A. Gonz\'alez-Tudela, C.-L. Hung, D. E. Chang, J. I. Cirac, and H. J. Kimble,  arXiv:1407.7336.

\bibitem{Chang2} J. S. Douglas, H. Habibian, A. V. Gorshkov, H. J. Kimble, and D. E. Chang,  arXiv:1312.2435.








\bibitem{Hammerer} K. Hammerer, A. S. S{\o}rensen, and E. S. Polzik,
Rev. Mod. Phys. {\bf 82}, 1041 (2010).


\bibitem{Napolitano} M. Napolitano, M. Koschorreck, B. Dubost, N. Behbood, R. J. Sewell, and M. W. Mitchell,
Nature {\bf 471}, 486-489 (2011), Highlighted in Physics World.

\bibitem{Sewell} R. J. Sewell, M. Napolitano, N. Behbood, G. Colangelo, F. Martin Ciurana, and M. W. Mitchell,
Phys. Rev. X {\bf 4}, 021045 (2014), Highlighted in Science.

\bibitem{magneto} V. G. Lucivero, P. Anielski, W. Gawlik, and M. W. Mitchell,
Rev. Sci. Instrum. {\bf 85}, 113108 (2014).



\bibitem{Isart} K. Eckert, O. Romero-Isart, M. Rodriguez, M. Lewenstein, E. S. Polzik, and A. Sanpera,
Nature Phys. {\bf 4}, 50 (2008).

\bibitem{MasanesJones}Ll. Masanes and N. Jones, Phys. Rev. A \textbf{72}, 052312 (2005).



\bibitem{CollinsGisin}D. Collins and N. Gisin, arXiv:quant-ph/0306129.



\end{thebibliography}
\end{document}